\documentclass[journal,draftcls,draftclsnofoot,onecolumn,12pt]{IEEEtranTCOM} 

\usepackage[colorlinks=false,linkcolor=black, bookmarksnumbered]{hyperref}

\usepackage{amsmath, amsthm, amsfonts, amssymb, amsbsy}
\usepackage{setspace}
\usepackage{graphicx}
\usepackage{tikz}
\usepackage{cite}
\usepackage{mathtools}

\usetikzlibrary{fit,matrix} 

\usepackage{arydshln} 

\usepackage{algpseudocode}
\usepackage{algorithm}

\usepackage{multirow, multicol}
\usepackage{subfigure}
\usepackage{verbatim}
\usepackage{epstopdf}

\usepackage{bm}

\newtheorem{theorem}{Theorem}

\newtheorem{corollary}{Corollary}

\newtheorem{example}{Example}

\newcommand\bH{{\bf H}}
\newcommand\bM{{\bf M}}

\newcommand\bO{{\bf O}}

\newcommand\ba{{\bf a}}

\newcommand\bG{{\bf G}}

\newcommand\bF{{\bf F}}

\newcommand\bB{{\bf B}}
\newcommand\bb{{\bf b}}

\newcommand\bV{{\bf V}}
\newcommand\bv{{\bf v}}
\newcommand\bg{{\bf g}}

\newcommand\bz{{\bf z}}

\newcommand\bh{{\bf h}}
\newcommand\bp{{\bf p}}

\newcommand\bu{{\bf u}}
\newcommand\bU{{\bf U}}

\newcommand{\hl}[1]{\textit{#1}} 

\newcommand{\cred}[1]{{\color{red}{#1}}} 

\newcommand\GCD{\text{GCD}}

\newcommand\convol{\text{convol}}
\newcommand\local{\text{local}}
\newcommand\free{\text{free}}

\makeatletter

\newcommand{\Rmnum}[1]{\expandafter\@slowromancap\romannumeral #1@}
\makeatother

\begin{document}

\sloppy
%
\title{Convolutional Codes from Cyclic Codes with Guaranteed Free and Local Minimum Distances}
\author{Khaled Abdel-Ghaffar,~\IEEEmembership{Member,~IEEE,} Daniel J. Costello Jr.,~\IEEEmembership{Life~Fellow,~IEEE,} Juane Li, ~\IEEEmembership{Member,~IEEE,} Shu Lin, ~\IEEEmembership{Life~Fellow,~IEEE,} 




} 

\maketitle

\begin{abstract}
This paper presents an algebraic method to construct convolutional codes with guaranteed \emph{free and local minimum distances} without limit based on cyclic codes of odd lengths. The constructions are simple but effective, and no computer search is needed. For any two positive integers $r$ and $t$ with $1 \leq r < t$, a rate-$r/t$ convolutional code $\mathcal{C}_{\text{convol}}$ can be constructed by using a chain of $r$ cyclic codes $\mathcal{C}_0, \mathcal{C}_1, \ldots, \mathcal{C}_{r-1}$ of the same length $n$ which satisfy the inclusion condition, $\mathcal{C}_0 \supset \mathcal{C}_1 \supset \ldots \supset \mathcal{C}_{r-1}$. Such a convolutional code $\mathcal{C}_{\text{convol}}$ is composed of a \emph{semi-infinite chain of identical local codes} confined in a diagonal band of width $n$. Each local code $\mathcal{C}_{\text{local}}$ of $\mathcal{C}_{\text{convol}}$ is formed from the $r$ cyclic codes in the code chain and is a specially localized subcode of the \emph{mother code} $\mathcal{C}_0$ in the code chain. The minimum distance $d_{\text{local}}$ of each local code of $\mathcal{C}_{\text{convol}}$ is lower bounded by the minimum distance $d_0$ of the mother code $\mathcal{C}_0$ in the code chain. The local structure of $\mathcal{C}_{\text{convol}}$ allows it to be decoded based on a designed parity-check matrix of the mother code $\mathcal{C}_0$ using a sliding window decoding scheme.
\end{abstract}

\begin{IEEEkeywords}
Cyclic code chains, localized subcodes, local minimum distance, convolutional codes, constraint length, free distance, BCH-convolutional codes, RM-convolutional codes, LDPC-localized convolutional codes, finite geometry convolutional codes, affine permutations, doubly transitive invariant convolutional codes.
\end{IEEEkeywords}


\section{Introduction}
Convolutional codes were introduced by Elias~\cite{elias1955} in 1955 as an alternative to block codes. Since their introduction, extensive research has been conducted into their structural properties, code construction, encoding methods, decoding methods, and applications. Convolutional codes have been widely used in deep space, satellite, broadcast and wireless communications since late 1960’s. The algebraic structure of these codes was first investigated by Forney~\cite{forney1970,forney1973}. Convolutional codes are covered in varying amount of detail in many books on coding theory and digital communications~\cite{dholakia1994,lee1997,johannesson1999,lin2004}. The book by Johannesson and Zigangirov~\cite{johannesson1999} is the most extensive and comprehensive one.

A common measure of the strength of a convolutional code is its free distance. Most of the works on constructing convolutional codes with good free distances depend on computationally expensive search algorithms, and such algorithms have severe limitations as to how large a free distance can be guaranteed. Most convolutional codes that have been constructed with good free distances are codes of rates $1/2$, $1/3$, $1/4$, $2/3$ and $3/4$ with relatively short constraint lengths~\cite{heller1968,larsen1973,paaske1974,justesen1975,johannesson1977,chambers1992,chang2006}.

This paper presents a new category of convolutional codes constructed based on \emph{chains of cyclic codes}. For any two positive integers $r$ and $t$ with $1 \leq r < t$, a rate-$r/t$ convolutional code $\mathcal{C}_{\text{convol}}$ is constructed using a chain of $r$ cyclic codes, $\mathcal{C}_0, \mathcal{C}_1, \ldots, \mathcal{C}_{r-1}$, of the same odd length $n$ which satisfy the inclusion condition, $\mathcal{C}_0 \supset \mathcal{C}_1 \supset \ldots \supset \mathcal{C}_{r-1}$. Such a convolutional code $\mathcal{C}_{\text{convol}}$ is composed of a semi-infinite chain of identical local codes confined in a diagonal band of width $n$. Each local code $\mathcal{C}_{\text{local}}$ of $\mathcal{C}_{\text{convol}}$ is formed from the $r$ cyclic codes in the code chain and is a \emph{specially localized subcode} of $\mathcal{C}_0$. The minimum distance $d_{\text{local}}$ of each local code of $\mathcal{C}_{\text{convol}}$, called \emph{the local minimum distance}, is lower bounded by the minimum distance $d_0$ of the mother code $\mathcal{C}_0$ in the code chain. The construction is simple and effective, and no computer search is needed. The paper presents a bridge to construct convolutional codes from cyclic block codes with distinctive algebraic and geometric structures.

The local structure and guaranteed local minimum distance of such a convolutional code allows the code to be decoded locally with a \emph{sliding window successive cancellation (SWSC)} decoding algorithm. The type of local decoding is determined by the nature of the cyclic codes in the code chain used in the construction. With local decoding, the local minimum distance $d_{\text{local}}$ of the convolutional code $\mathcal{C}_{\text{convol}}$ will determine its performance. Alternatively, $\mathcal{C}_{\text{convol}}$ can be decoded globally using typical algorithms for a convolutional code, such as Viterbi decoding, MAP decoding (for codes of short constraint lengths)~\cite{viterbi1967,bahl1974} or sequential decoding for codes of long constraint lengths~\cite{wozencraft1961}. In this case the free distance $d_{\text{free}}$ of $\mathcal{C}_{\text{convol}}$ will determine its performance. If the mother code $\mathcal{C}_0$ in the code chain $\mathcal{C}_0 \supset \mathcal{C}_1 \supset \ldots \supset \mathcal{C}_{r-1}$ is one-step majority-logic decodable, the convolutional code $\mathcal{C}_{\text{convol}}$ is self-orthogonal and can be decoded with simple majority-logic decoding~\cite{massey1963,lin2004}. If the mother code $\mathcal{C}_0$ of the code chain is a cyclic LDPC code, the local codes of the convolutional code can be decoded with an iterative decoding algorithm based on the principle of belief-propagation~\cite{gallager1962,mackay1999}. The rate-$r/t$ convolutional code $\mathcal{C}_{\text{convol}}$ constructed based on the code chain $\mathcal{C}_0 \supset \mathcal{C}_1 \supset \ldots \supset \mathcal{C}_{r-1}$ is the \emph{direct-sum} of $r$ rate-$1/t$ convolutional codes, each constructed based on a code in the code chain $\mathcal{C}_0 \supset \mathcal{C}_1 \supset \ldots \supset \mathcal{C}_{r-1}$.

There have been other works that have explored the relationship between block and convolutional codes~\cite{massey1973,justesen1973,solomon1979,tanner1987,esmaeili1998,rosenthal1999,smarandache2001,pusane2011}. These works use block codes to construct convolutional codes with good free distances, and they provide interesting links between the theories of block codes and convolutional codes. The constructions of convolutional codes in these works use quite different techniques and have different perspectives in contrast to the work in this paper. The paper~\cite{massey1973} uses the minimum distance properties of a cyclic code to provide a lower bound on the free distance of an associated convolutional code. It shows that if the minimum distances of a cyclic code and its dual code are $d$ and $d_h$, respectively, the free distance $d_{\text{free}}$ of the rate-$1/2$ convolutional code associated with the cyclic code is lower bounded by $\min\{d, 2d_h\}$. The paper~\cite{justesen1973} gives a construction of a convolutional code based on a cyclic code with free distance $d_{\text{free}}$ lower bounded by the minimum distance $d$ of the cyclic code, but it involves a rather complicated condition of the roots of the generator polynomial of the cyclic code. The papers~\cite{solomon1979,tanner1987,esmaeili1998} provide links between quasi-cyclic codes and convolutional codes. The paper~\cite{rosenthal1999} presents a construction of convolutional codes which is similar to the construction of classical BCH codes. The construction results in high rate codes with good free distances, but the construction requires fields of large sizes. The paper~\cite{smarandache2001} presents a construction of a class of nonbinary MDS-convolutional codes based on Reed-Solomon codes. The paper~\cite{pusane2011} shows how LDPC convolutional codes can be constructed from LDPC block codes.

The rest of the paper is organized as follows. Section \ref{sect2:fundamentals_cyclic_code} first gives a brief review of cyclic codes and their fundamental structural properties. Section \ref{sect3:chain_from_cyclic_code} presents cyclic code chains with inclusion structure. Cyclic code chains with inclusion structure will be used for constructing convolutional codes of various rates and constraint lengths with guaranteed local minimum distances. In Section \ref{sect4:convol_from_cyclic_code}, a new method is introduced to construct a category of convolutional codes based on chains of cyclic codes with inclusion structure. Each convolutional code in this category has a distinctive local structure. Also in this section, an SWSC decoding algorithm is presented. Sections \ref{sect5:convol_from_BCH} to \ref{sect8:convol_from_DTI}  present a wide variety of convolutional codes constructed based on well-known classes of cyclic codes, such as BCH codes, cyclic Reed-Muller (RM) codes, finite geometry (FG) LDPC codes, and doubly transitive invariant (DTI) codes. Local minimum distances and free distances of convolutional codes in these categories are analyzed. Section \ref{sect9:conclusion} concludes the paper with some remarks.

\section{Fundamental Structural Properties of a Cyclic Code} \label{sect2:fundamentals_cyclic_code}

Let $\mathcal{C}$ be a binary $(n, k)$ cyclic code over $\mathrm{GF}(2)$ of odd length $n$ with dimension $k$ and minimum distance $d$ generated by a monic polynomial of degree $n-k$ over $\mathrm{GF}(2)$~\cite{lin2004}:
\begin{equation}
\bg(X) = g_0 + g_1 X + \cdots + g_{n-k-1} X^{n-k-1} + g_{n-k} X^{n-k}
\end{equation}
with $g_0 = g_{n-k} = 1$. The polynomial $\bg(X)$ is called the generator polynomial of $\mathcal{C}$ and is a factor of $X^n + 1$. Each code polynomial in $\mathcal{C}$,
\[
\bv(X) = v_0 + v_1 X + v_2 X^2 + \cdots + v_{n-1} X^{n-1},
\]
is a polynomial of degree $n-1$ or less over $\mathrm{GF}(2)$ which is a multiple of $\bg(X)$. The generator polynomial $\bg(X)$ is the unique code polynomial in $\mathcal{C}$ with the smallest degree $n-k$. The $n$-tuple $\mathbf{v} = (v_0, v_1, v_2, \ldots, v_{n-1})$ over $\mathrm{GF}(2)$ associated with the code polynomial $\bv(X)$ is called a codeword in $\mathcal{C}$.

The codeword associated with the generator polynomial $\bg(X)$ is an $n$-tuple over $\mathrm{GF}(2)$:
\begin{equation}
\bg = (\underbracket{g_0, g_1, g_2, \ldots, g_{n-k-1}, g_{n-k}}_{n-k+1}, \underbracket{0, 0, \ldots, 0}_{k-1}),
\end{equation}
where $g_0 = g_{n-k} = 1$. Labeling the components of $\mathbf{g}$ from $0$ to $n-1$, we see that the nonzero components of $\mathbf{g}$ are confined to the first $n-k+1$ consecutive positions, and the last $k-1$ components of $\mathbf{g}$ are zeros which form a \emph{zero-span} of length $k-1$. A zero-span of length $l$ in an $n$-tuple is defined as a sequence of $l$ consecutive zeros confined between two 1-components (including the end-around case). The length of the longest zero-span in $\mathbf{g}$ is $k-1$, which is the length of the \emph{end-around zero-span} of $\mathbf{g}$~\cite{lin2004}. The codeword $\mathbf{g}$ associated with the generator polynomial $\bg(X)$ of the $(n, k)$ cyclic code $\mathcal{C}$ is called the \hl{generator vector} (or codeword) of $\mathcal{C}$. The weight $w(\mathbf{g})$ of $\mathbf{g}$ is at least the minimum distance $d$ of $\mathcal{C}$. If $w(\mathbf{g}) = d$, then $\mathbf{g}$ is a minimum weight codeword and $\bg(X)$ is a minimum weight code polynomial. The $(n-k+1)$-tuple formed by the first $n-k+1$ components of $\mathbf{g}$ (the $n-k+1$ coefficients of $\bg(X)$),
\begin{equation}
(g_0, g_1, \ldots, g_{n-k})
\end{equation}
is called the \hl{generator sequence} of $\mathcal{C}$.

Using the generator vector $\mathbf{g}$ and its $k-1$ cyclic-shifts to the right, one position at a time, we obtain the following \hl{full rank} $k \times n$ matrix over $\mathrm{GF}(2)$:
\begin{equation} \label{eqn:G_matrix_cyclic_code}
\mathbf{G} = 
\begin{bmatrix}
1 & g_1 & g_2 & \cdots & g_{n-k-1} & 1 & 0 & \cdots & 0 \\
0 & 1 & g_1 & g_2 & \cdots & g_{n-k-1} & 1 & 0 & \cdots \\
\vdots & & & \ddots & & & & & \\
0 & 0 & \cdots & 1 & g_1 & g_2 & \cdots & g_{n-k-1} & 1
\end{bmatrix}.
\end{equation}
The rows of $\mathbf{G}$ are linearly independent, and the row space of $\mathbf{G}$ gives the $(n, k)$ cyclic code $\mathcal{C}$. The matrix $\mathbf{G}$ is called a \hl{generator matrix} of $\mathcal{C}$. The nonzero entries in $\mathbf{G}$ are confined to a diagonal band of width $n-k+1$ in which all the entries lying on the left and right diagonal borders are 1's. Each row within the diagonal band is identical to the generator sequence $(1, g_1, g_2, \ldots, g_{n-k-1}, 1)$. The generator matrix $\mathbf{G}$ of $\mathcal{C}$ in the form of (\ref{eqn:G_matrix_cyclic_code}) is referred to as the \hl{diagonal} (or \hl{cyclic}) generator matrix of $\mathcal{C}$. The generator matrix of $\mathcal{C}$ can be expressed in polynomial form as follows:
\begin{equation}
\mathbf{G}(X) = [\bg(X), X\bg(X), X^2\bg(X), \ldots, X^{k-1}\bg(X)]^T.
\end{equation}

Let $\mathbf{H}$ be a parity-check matrix of $\mathcal{C}$. Then, $\mathbf{G}\mathbf{H}^T = \mathbf{O}$. The null space of $\mathbf{H}$ gives the $(n, k)$ cyclic code $\mathcal{C}$. The row space of $\mathbf{H}$ gives an $(n, n-k)$ cyclic code $\mathcal{C}_h$ which is the dual code of $\mathcal{C}$. For every codeword $\mathbf{v}$ in $\mathcal{C}$, the condition $\mathbf{v}\mathbf{H}^T = \mathbf{o}$ holds. In general, encoding of a cyclic code $\mathcal{C}$ is based on its generator matrix $\mathbf{G}$ (or its generator polynomial $\bg(X)$) and decoding is based on a parity-check matrix $\mathbf{H}$, which is designed for implementation of a specific decoding algorithm.

Since $\bg(X)$ is a factor of $X^n + 1$ with degree $n-k$, $X^n + 1$ can be factored as the product of $\bg(X)$ and a polynomial $\bh(X)$ of degree $k$ as follows,
\begin{equation}
X^n + 1 = \bg(X) \bh(X)
\end{equation}
where
\begin{equation}
\bh(X) = h_0 + h_1 X + \cdots + h_{k-1} X^{k-1} + h_k X^k
\end{equation}
with $h_0 = h_k = 1$. The reciprocal of $\bh(X)$ is defined as
\begin{equation}
\bg_h(X) = X^k \bh(X^{-1}) = h_k + h_{k-1} X + \cdots + h_1 X^{k-1} + h_0 X^k
\end{equation}
which is also a factor of $X^n + 1$. The polynomial $\bg_h(X)$ is the generator polynomial of the $(n, n-k)$ dual code $\mathcal{C}_h$ of the $(n, k)$ cyclic code $\mathcal{C}$ generated by $\bg(X)$. Hence, $\bg(X)$ and $\bg_h(X)$ generate $\mathcal{C}$ and its dual code $\mathcal{C}_h$, respectively.

Let $\bb(X)$ be a factor of $X^n + 1$ and with degree $\ell$ which is not contained in $\bg(X)$, i.e., $\bb(X)$ and $\bg(X)$ are relatively prime. Then $\bg(X) \bb(X)$ generates an $(n, k-\ell)$ cyclic subcode of $\mathcal{C}$ with minimum distance at least $d$, the minimum distance of $\mathcal{C}$. If $\bb(X) = 1 + X$ which is not a factor of $\bg(X)$, the code generated by $\bg(X)(1 + X)$ is the $(n, k-1)$ even weight subcode of $\mathcal{C}$.

\section{Chains and Localized Subcodes of Cyclic Codes}\label{sect3:chain_from_cyclic_code}

In this section, we present a method to construct a chain of cyclic codes of the same odd length $n$ which satisfies an inclusion condition: (1) each code in the chain is a proper subcode of the code preceding it in the chain; (2) the first code in the chain, called the \hl{mother code}, contains the rest of the codes in the chain as subcodes; and (3) the end code in the chain is a subcode of all the other codes in the chain.

Using the code chain, a convolutional code composing of a \hl{semi-infinite chain} of \hl{identical local codes} can be constructed. Each local code of the convolutional code is formed from the cyclic codes in the code chain and is a \hl{specially localized subcode} of the mother code in the code chain. The minimum distance of each local code of the convolutional code is lower bounded by the minimum distance of the mother code in the code chain. Using cyclic code chains, a large class of convolutional codes with distinct local structure of various rates and constraint lengths, short to very long, can be constructed based on various types of cyclic codes.

\subsection{Construction of Chains of Cyclic Codes}

Let $\mathcal{C}_0$ be an $(n, k_0)$ cyclic code over $\mathrm{GF}(2)$ of length $n$ with dimension $k_0$ and minimum distance $d_0$ generated by the polynomial
\[
\bg_0(X) = g_{0,0} + g_{0,1} X + \cdots + g_{0,n-k_0} X^{n-k_0}
\]
with $g_{0,0} = g_{0,n-k_0} = 1$. Let $r$ be a positive integer and let ${\bf f}_1(X), \mathbf{f}_2(X), \ldots, \mathbf{f}_{r-1}(X)$ be $r-1$ polynomials over $\mathrm{GF}(2)$ with degrees $\ell_1, \ell_2, \ldots, \ell_{r-1}$, respectively, and $1 \leq \ell_1 < \ell_2 < \cdots < \ell_{r-1} < k_0$, which have the following properties: (1) they are factors of $X^n + 1$; (2) for $1 \leq i < r$, $\mathbf{f}_i(X)$ and $\bg_0(X)$ are relatively prime, i.e., $\GCD(\mathbf{f}_i(X), \bg_0(X)) = 1$; and (3) for $1 \leq i < r-1$, $\mathbf{f}_i(X)$ divides $\mathbf{f}_{i+1}(X)$ with $\mathbf{f}_{i+1}(X) = \mathbf{b}_{i+1}(X) \mathbf{f}_i(X)$ and $\mathbf{b}_{i+1}(X) \neq 1$.

Set $\mathbf{f}_0(X) = 1$. Using $\bg_0(X), \mathbf{f}_0(X), \mathbf{f}_1(X), \ldots, \mathbf{f}_{r-1}(X)$, we form $r$ polynomials
\begin{equation} \label{eq:r_genenator_polynomials}
\begin{array}{c}
\bg_0(X) = \bg_0(X) {\bf f}_0(X), \\
\bg_1(X) = \bg_0(X) {\bf f}_1(X), \\
\bg_2(X) = \bg_0(X) {\bf f}_2(X), \\
\vdots \\
\bg_{r-1}(X) = \bg_0(X) {\bf f}_{r-1}(X). \\
\end{array}
\end{equation}
Each of the polynomials $\bg_0(X), \bg_1(X), \ldots, \bg_{r-1}(X)$ is a factor of $X^n + 1$, and for $0 \leq i < r-1$, $\bg_{i+1}(X)$ is a multiple of $\bg_i(X)$, i.e., $\bg_i(X)$ divides $\bg_{i+1}(X)$. The $r-1$ cyclic codes $\mathcal{C}_1, \mathcal{C}_2, \ldots, \mathcal{C}_{r-1}$ generated by $\bg_1(X), \bg_2(X), \ldots, \bg_{r-1}(X)$ are proper subcodes of $\mathcal{C}_0$, and for $0 \leq i < r-1$, $\mathcal{C}_{i+1}$ is a subcode of $\mathcal{C}_i$. Hence, the $r$ cyclic codes $\mathcal{C}_0, \mathcal{C}_1, \ldots, \mathcal{C}_{r-1}$ form an inclusion code chain
\begin{equation}
\mathcal{C}_0 \supset \mathcal{C}_1 \supset \cdots \supset \mathcal{C}_{r-1}.
\end{equation}
The first code $\mathcal{C}_0$ in the code chain is called the \hl{mother code} and the codes $\mathcal{C}_1, \ldots, \mathcal{C}_{r-1}$ in the code chain are called the \hl{descendant codes} of $\mathcal{C}_0$. The parameter $r$ is referred to as the length of the code chain. The polynomials $\mathbf{f}_0(X) = 1, \mathbf{f}_1(X), \ldots, \mathbf{f}_{r-1}(X)$ are called \hl{generator multipliers} for the code chain. The dimensions of $\mathcal{C}_0, \mathcal{C}_1, \ldots, \mathcal{C}_{r-1}$ are $k_0, k_1 = k_0 - \ell_1, k_2 = k_0 - \ell_2, \ldots, k_{r-1} = k_0 - \ell_{r-1}$, respectively. Let $d_0, d_1, \ldots, d_{r-1}$ be the minimum distances of $\mathcal{C}_0, \mathcal{C}_1, \ldots, \mathcal{C}_{r-1}$. Then, for $1 \leq i < r$, $d_{i-1} \leq d_i$.

For $0 \leq i < r$, the generator polynomial of the $i$-th cyclic code $\mathcal{C}_i$ in the code chain is a polynomial of degree $n - k_i$:
\begin{equation}
\bg_i(X) = g_{i,0} + g_{i,1} X + \cdots + g_{i,n-k_i} X^{n-k_i}
\end{equation}
with $g_{i,0} = g_{i,n-k_i} = 1$. The generator vector and the generator sequence of $\mathcal{C}_i$ are
\begin{equation}
\bg_i = (\underbracket{g_{i,0}, g_{i,1}, g_{i,2}, \ldots, g_{i,n-k_i-1}, g_{i,n-k_i}}_{n-k_i+1}, \underbracket{0, 0, \ldots, 0}_{k_i-1}),
\end{equation}
and
\begin{equation} \label{eqn:cyclic_code_generator_sequence}
(g_{i,0}, g_{i,1}, \ldots, g_{i,n-k_i})
\end{equation}
respectively. The length of the end-around zero-span of $\mathbf{g}_i$ is $k_i - 1$.

From the (\ref{eq:r_genenator_polynomials}) to (\ref{eqn:cyclic_code_generator_sequence}), we see that: (1) the generator polynomials $\bg_1(X), \bg_2(X), \ldots, \bg_{r-1}(X)$ of $\mathcal{C}_1, \mathcal{C}_2, \ldots, \mathcal{C}_{r-1}$ have the generator polynomial $\bg_0(X)$ of $\mathcal{C}_0$ as a common factor; (2) the ending (or rightmost) 1-components of the generator vectors $\mathbf{g}_0, \mathbf{g}_1, \ldots, \mathbf{g}_{r-1}$ of $\mathcal{C}_0, \mathcal{C}_1, \ldots, \mathcal{C}_{r-1}$ are in $r$ different locations; (3) for $1 \leq i < r$, the nonzero-span of $\mathbf{g}_i$ is longer than the nonzero-span of $\mathbf{g}_{i-1}$; (4) for $1 \leq i < r$, the length of the end-around zero-span of $\mathbf{g}_{i-1}$ is longer than the end-around zero-span of $\mathbf{g}_i$; and (5) $\mathbf{g}_0, \mathbf{g}_1, \ldots, \mathbf{g}_{r-1}$ are linearly independent.

\subsection{Localized Subcodes of Cyclic Codes in a Code Chain}

Let $t$ be a positive integer greater than $r$ and less than or equal to the zero-span $k_{r-1} - 1$ of the generator vector $\mathbf{g}_{r-1}$ of the end code $\mathcal{C}_{r-1}$ of the code chain $\mathcal{C}_0 \supset \mathcal{C}_1 \supset \cdots \supset \mathcal{C}_{r-1}$, i.e., $r < t \leq k_{r-1} - 1$. Dividing $k_{r-1} - 1$ by $t$, we have
\begin{equation}
k_{r-1} - 1 = b t + c
\end{equation}
where $b$ and $c$ are the quotient and remainder, respectively, with $b \geq 1$ and $0 \leq c < t$. Using the generator vectors $\mathbf{g}_0, \mathbf{g}_1, \ldots, \mathbf{g}_{r-1}$ of the $r$ cyclic codes in the code chain $\mathcal{C}_0 \supset \mathcal{C}_1 \supset \cdots \supset \mathcal{C}_{r-1}$, we form the following $r \times n$ matrix over $\mathrm{GF}(2)$:
\begin{equation} \label{eq:M_matrix}
\bM (\bg_0, \bg_1, \ldots, \bg_{r-1}) = \left[ \begin{array}{c} 
\bg_0 \\
\bg_1 \\
\vdots \\
\bg_{r-1}
\end{array} \right] = \left[ \bB(\bg_0, \bg_1, \ldots, \bg_{r-1}) ~ \bO ~ \bO ~ \ldots ~ \bO \right],
\end{equation}  
which consists of an $r \times (n - b t)$ submatrix $\bB(\mathbf{g}_0, \mathbf{g}_1, \ldots, \mathbf{g}_{r-1})$ and $b$ copies of the $r \times t$ zero matrix $\bO$. The submatrix $\bB(\mathbf{g}_0, \mathbf{g}_1, \ldots, \mathbf{g}_{r-1})$ of $\bM(\mathbf{g}_0, \mathbf{g}_1, \ldots, \mathbf{g}_{r-1})$ is given as follows:

\begin{equation} \label{eq:B_matrix}
\begin{array} {l}
\bB(\bg_0, \bg_1, \ldots, \bg_{r-1}) = \\
=\underbracket{\left[
\begin{array}{ccccccccccccc}
1 & g_{0, 1} & g_{0, 2} & \cdots & g_{0, n-k_0-1} & 1 & 0 & 0 & \cdots & 0  \\
1 & g_{1, 1} & g_{1, 2} & \cdots & \cdots & g_{1, n-k_1-1} & 1 & 0 & \cdots & 0  \\
\vdots  & \vdots  & \vdots\\
1 & g_{r-1, 1} & g_{r-1, 2} & \cdots & \cdots & \cdots & \cdots & \cdots & g_{r-1, n-k_{r-1}-1} & 1 \\
\end{array}\right.}_{ n- k_{r-1} + 1}
\underbracket{\left.\begin{array}{ccc}
0 & \cdots & 0 \\
0 & \cdots & 0 \\
\vdots  & \vdots  & \vdots\\
0 & \cdots & 0 \\
\end{array}
\right]}_{c}
\end{array}
\end{equation}

\noindent in which, for $0 \leq i < r$, the $i$-th row consists of the first $n - b t$ components of the generator vector $\mathbf{g}_i$ of the $i$-th cyclic code $\mathcal{C}_i$ in the code chain. The nonzero entries in $\bB(\mathbf{g}_0, \mathbf{g}_1, \ldots, \mathbf{g}_{r-1})$ are confined in the first $n - k_{r-1} + 1$ columns and, for $c \neq 0$, the last $c$ columns of $\bB(\mathbf{g}_0, \mathbf{g}_1, \ldots, \mathbf{g}_{r-1})$ are zero columns. For $1 \leq i < r$, the top $i$ entries of the column in $\bB(\mathbf{g}_0, \mathbf{g}_1, \ldots, \mathbf{g}_{r-1})$ that contains the rightmost 1-component of the $i$-th row are all zeros. From this, we see that all the rows of $\bB(\mathbf{g}_0, \mathbf{g}_1, \ldots, \mathbf{g}_{r-1})$ are linearly independent.

Next, we form the following $(b+1)r \times n$ matrix over $\mathrm{GF}(2)$ by cyclically shifting the rows of $\bM(\mathbf{g}_0, \mathbf{g}_1, \ldots, \mathbf{g}_{r-1})$ together to the right $b$ times, $t$ positions at a time, called the $t$-position-cyclic-shift ($t$-PCS),
\begin{equation}\label{eq:G_local_matrix}
\bG_{\local} (\bg_0, \bg_1, \ldots, \bg_{r-1}) = \left[ \begin{array}{ccccc}
\bB(\bg_0, \bg_1, \ldots, \bg_{r-1}) & \bO & \bO  & \cdots & \bO \\
\bO & \bB(\bg_0, \bg_1, \ldots, \bg_{r-1}) & \bO  & \cdots & \bO \\
\vdots & \vdots & \vdots \\
\bO & \bO & \cdots & \bO & \bB(\bg_0, \bg_1, \ldots, \bg_{r-1})
\end{array} \right], 
\end{equation}
in which the last $c$ columns, for $c \neq 0$, are zero columns.

The row space of $\bG_{\text{local}}(\mathbf{g}_0, \mathbf{g}_1, \ldots, \mathbf{g}_{r-1})$ gives a linear subcode, denoted by $\mathcal{C}_{\text{local}}$, of $\mathcal{C}_0$ and is called the \hl{$(r, t)$-PCS-local subcode} of $\mathcal{C}_0$. For $c \neq 0$, the last $c$ components of each nonzero codeword in $\mathcal{C}_{\text{local}}$ are zeros. The minimum distance $d_{\text{local}}$ of $\mathcal{C}_{\text{local}}$ is lower bounded by the minimum distance $d_0$ of the mother code $\mathcal{C}_0$ in the code chain and it is called the \hl{$(r, t)$-PCS-local minimum distance} (simply the local minimum distance) of $\mathcal{C}_0$.

From (\ref{eq:G_local_matrix}), we see that the generator matrix $\bG_{\text{local}}(\mathbf{g}_0, \mathbf{g}_1, \ldots, \mathbf{g}_{r-1})$ of $\mathcal{C}_{\text{local}}$ also has the diagonal-band structure with $b+1$ copies of the matrix $\bB(\mathbf{g}_0, \mathbf{g}_1, \ldots, \mathbf{g}_{r-1})$ lying on its main diagonal. We call $\bB(\mathbf{g}_0, \mathbf{g}_1, \ldots, \mathbf{g}_{r-1})$ the \hl{belt generator matrix} of $\mathcal{C}_{\text{local}}$ and the number $r$ of rows in $\bB(\mathbf{g}_0, \mathbf{g}_1, \ldots, \mathbf{g}_{r-1})$ is referred to as the \hl{width of the belt}. The integer $t$ is called the \hl{shifting factor}. The matrix $\bM(\mathbf{g}_0, \mathbf{g}_1, \ldots, \mathbf{g}_{r-1})$ is called the \hl{bend matrix} of $\bG_{\text{local}}(\mathbf{g}_0, \mathbf{g}_1, \ldots, \mathbf{g}_{r-1})$ and the integer $b$ is called the \hl{shifting span}.

From (\ref{eq:B_matrix}) and (\ref{eq:G_local_matrix}), we see that $\bG_{\text{local}}(\mathbf{g}_0, \mathbf{g}_1, \ldots, \mathbf{g}_{r-1})$ is composed of selected rows from the generator matrices $\bG_0, \bG_1, \ldots, \bG_{r-1}$ of the $r$ cyclic codes $\mathcal{C}_0, \mathcal{C}_1, \ldots, \mathcal{C}_{r-1}$ in the code chain, $b+1$ rows from each generator matrix. The $b+1$ selected rows from $\bG_i$ are the generator vector $\mathbf{g}_i$ of $\mathcal{C}_i$ and its $b$ $t$-PCSs and they are linearly independent. These $b+1$ rows of $\bG_i$ form a $(b+1) \times n$ submatrix, denoted by $\bG_{i,\text{local}}(\mathbf{g}_i)$, of $\bG_i$ which is called the $(1, t)$-PCS-local submatrix of $\bG_i$. The row space of $\bG_{i,\text{local}}(\mathbf{g}_i)$ gives an $(n, b+1)$ $(1, t)$-PCS-local subcode $\mathcal{C}_{i,\text{local}}$ of the $i$-th cyclic code $\mathcal{C}_i$ in the code chain with rate $R_{i,\text{local}} = (b+1)/n = (k_{r-1} + t - c - 1)/nt$ and minimum distance $d_{i,\text{local}}$ at least $d_0$. The $(1, t)$-PCS-local subcodes $\mathcal{C}_{0,\text{local}}, \mathcal{C}_{1,\text{local}}, \ldots, \mathcal{C}_{r-1,\text{local}}$ of the $r$ cyclic codes in the code chain $\mathcal{C}_0 \supset \mathcal{C}_1 \supset \cdots \supset \mathcal{C}_{r-1}$ are subcodes of the $(r, t)$-PCS-local subcode $\mathcal{C}_{\text{local}}$ of the mother code $\mathcal{C}_0$ in the code chain, and they are referred to as the \hl{constituent $(1, t)$-PCS-local subcodes} of $\mathcal{C}_{\text{local}}$.

\subsection{Direct-Sum Structure of the $(r, t)$-PCS-Local Subcode}

If the $(b+1)r$ rows of $\bG_{\text{local}}(\mathbf{g}_0, \mathbf{g}_1, \ldots, \mathbf{g}_{r-1})$ are linearly independent, then $\mathcal{C}_{\text{local}}$ is the \emph{direct-sum} of $\mathcal{C}_{0,\text{local}}, \mathcal{C}_{1,\text{local}}, \ldots, \mathcal{C}_{r-1,\text{local}}$. Conversely, if $\mathcal{C}_{\text{local}}$ is the direct-sum of its $r$ constituent $(1, t)$-PCS-local subcodes, the $(b+1)r$ rows of $\bG_{\text{local}}(\mathbf{g}_0, \mathbf{g}_1, \ldots, \mathbf{g}_{r-1})$ are linearly independent. In this case, $\mathcal{C}_{\text{local}}$ is an $(n, (b+1)r)$ code with rate
\[
R_{\text{local}} = \frac{(b+1)r}{n} = \frac{(k_{r-1} + t - c - 1)r}{nt}
\]
and minimum distance $d_{\text{local}}$ lower bounded by the minimum distance $d_0$ of the mother code $\mathcal{C}_0$ in the code chain. For $\mathcal{C}_{\text{local}}$ to be the direct-sum of its $r$ constituent $(1, t)$-PCS-local subcodes, no two of its constituent $(1, t)$-PCS-local subcodes can have a nonzero codeword in common.

A condition for choosing the generator multipliers $\mathbf{f}_0(X) = 1, \mathbf{f}_1(X), \ldots, \mathbf{f}_{r-1}(X)$ for the code chain such that no two constituent $(1, t)$-PCS-local subcodes of $\mathcal{C}_{\text{local}}$ have a nonzero codeword in common is given in Theorem~\ref{Thm1}.

\begin{theorem} \label{Thm1}
For $0 \leq i < j < r$, the two constituent subcodes $\mathcal{C}_{i,\text{local}}$ and $\mathcal{C}_{j,\text{local}}$ do not have a nonzero codeword in common if and only if the ratio $\mathbf{f}_j(X)/\mathbf{f}_i(X)$ of the generator multipliers $\mathbf{f}_j(X)$ and $\mathbf{f}_i(X)$ is not a polynomial in $X^t$.
\end{theorem}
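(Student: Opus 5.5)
The natural setting is polynomial. The code $\mathcal{C}_{i,\text{local}}$ is spanned by $\mathbf{g}_i$ and its $t$-PCSs. In polynomial form these rows are $X^{mt}\bg_i(X)$ for $0 \leq m \leq b$. Since $k_{r-1}-1 \geq bt$, each row has degree at most $bt + n - k_i \leq n-1$, so no end-around wrapping occurs. Hence every codeword of $\mathcal{C}_{i,\text{local}}$ is an ordinary polynomial product $\mathbf{a}(X^t)\bg_i(X)$ with $\deg \mathbf{a} \leq b$, and the correspondence $\mathbf{a} \leftrightarrow \mathbf{a}(X^t)\bg_i(X)$ is injective. Write $\mathbf{q}(X) = \mathbf{f}_j(X)/\mathbf{f}_i(X) = \mathbf{b}_{i+1}(X)\cdots \mathbf{b}_j(X)$, which is a polynomial of degree $\ell_j - \ell_i$. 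A common nonzero codeword means
\[
\mathbf{a}(X^t)\bg_0(X)\mathbf{f}_i(X) = \mathbf{c}(X^t)\bg_0(X)\mathbf{f}_j(X)
\]
with $\mathbf{a},\mathbf{c}$ nonzero of degree at most $b$. Cancelling the nonzero factor $\bg_0\mathbf{f}_i$ in $\mathrm{GF}(2)[X]$ turns this into $\mathbf{a}(X^t) = \mathbf{c}(X^t)\,\mathbf{q}(X)$.

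\emph{Necessity} (a common codeword forces $\mathbf{q}$ to be a polynomial in $X^t$). This is the core lemma, proved by residues of exponents mod $t$. Decompose $\mathbf{q}(X) = \sum_{s=0}^{t-1} X^s \mathbf{q}_s(X^t)$. Then
\[
\mathbf{c}(X^t)\mathbf{q}(X) = \sum_{s=0}^{t-1} X^s\,(\mathbf{c}\mathbf{q}_s)(X^t),
\]
and the $s$-th summand contains only exponents $\equiv s \pmod t$. This decomposition is unique, and the left-hand side $\mathbf{a}(X^t)$ has only exponents $\equiv 0$. Therefore $\mathbf{c}\,\mathbf{q}_s = 0$ for every $s \neq 0$. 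Since $\mathbf{c}\neq 0$ and $\mathrm{GF}(2)[Y]$ is an integral domain, $\mathbf{q}_s = 0$ for $s\neq 0$, so $\mathbf{q}(X)=\mathbf{q}_0(X^t)$. Taking the contrapositive gives the "if" part of Theorem~\ref{Thm1}.

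\emph{Sufficiency} (if $\mathbf{q}(X) = \mathbf{p}(X^t)$, a common codeword exists). Take $\mathbf{c}=1$ and $\mathbf{a}=\mathbf{p}$, so the candidate common codeword is $\mathbf{g}_j = \mathbf{p}(X^t)\bg_i(X)$. It lies in $\mathcal{C}_{j,\text{local}}$ as its first row. It lies in $\mathcal{C}_{i,\text{local}}$ provided $\deg\mathbf{p}\leq b$, that is, $\ell_j-\ell_i \leq bt$.

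I expect this degree check to be the main obstacle. A priori $\ell_j-\ell_i$ is only bounded by $\ell_{r-1} < k_0$, while $bt \leq k_{r-1}-1$, so the inequality is not automatic from the stated parameter ranges. My first attempt is the following: any common codeword must satisfy $\deg \mathbf{a} = \deg\mathbf{c} + \deg\mathbf{p} \leq b$, so $\deg \mathbf{p} \leq b$ is actually necessary for a common codeword. I would therefore either (i) verify that the intended parameter regime gives $\ell_j - \ell_i \leq bt$, or (ii) state the converse under this implicit hypothesis. Failing that, I would read "polynomial in $X^t$" together with the band constraint, noting that the obstruction is purely one of degree and not of divisibility.
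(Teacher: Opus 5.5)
For the half you actually prove, you follow the paper's route. The paper also writes a common codeword as $\ba_i(X^t)\mathbf{f}_i(X)\bg_0(X)=\ba_j(X^t)\bb_j(X)\mathbf{f}_i(X)\bg_0(X)$, cancels, and concludes that $\mathbf{f}_j(X)/\mathbf{f}_i(X)=\ba_i(X^t)/\ba_j(X^t)$ is a polynomial in $X^t$. Your version is more careful in two places:
\begin{itemize}
\item You check $bt\le k_i-1$, so no end-around wrapping occurs. This makes the cancellation legitimate in $\mathrm{GF}(2)[X]$; it would not be legitimate in $\mathrm{GF}(2)[X]/(X^n+1)$.
\item Your residue-class argument mod $t$ justifies why a polynomial quotient of two polynomials in $X^t$ is itself a polynomial in $X^t$. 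The paper only asserts this.
\end{itemize}
This gives a complete proof of the ``if'' half of the theorem.

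For the other half, the paper gives no argument. Its closing ``Conversely, this implies \ldots\ if and only if'' is only the contrapositive of the implication it already showed. Your worry is justified, and your option (i) cannot succeed, because the ``only if'' half is false under the stated hypotheses. Here is a counterexample.
\begin{itemize}
\item Take $n=63$, $r=2$, $t=3$, and $\mathbf{f}_1(X)=1+X^3+X^6$. This divides $X^9+1$, hence $X^{63}+1$.
\item Take $\bg_0(X)=(X^{63}+1)/\bigl((1+X)(1+X+X^3)(1+X^3+X^6)\bigr)$. It has degree $53$ and is coprime to $\mathbf{f}_1(X)$.
\item Then $k_0=10>\ell_1=6$ and $k_1=4$, so $r<t\le k_1-1=3$ with $b=1$, $c=0$. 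Every hypothesis of the construction holds.
\item The ratio $\mathbf{f}_1(X)/\mathbf{f}_0(X)=\mathbf{p}(X^3)$, with $\mathbf{p}(Y)=1+Y+Y^2$, is a polynomial in $X^3$.
\item Yet every nonzero codeword of $\mathcal{C}_{0,\text{local}}$, which is spanned by $\bg_0$ and $X^3\bg_0$, has degree at most $56$.
\item Every nonzero codeword of $\mathcal{C}_{1,\text{local}}$, spanned by $\bg_1$ and $X^3\bg_1$, has degree at least $59$.
\end{itemize}
So the two subcodes share no nonzero codeword. In fact the four rows $\bg_0,X^3\bg_0,\bg_1,X^3\bg_1$ have distinct degrees $53,56,59,62$ and are independent. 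Hence the paper's later remark that the ratio condition is necessary for the direct-sum property has the same gap.

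The right resolution is your option (ii). Your own degree count already shows the correct statement: $\mathcal{C}_{i,\text{local}}$ and $\mathcal{C}_{j,\text{local}}$ share a nonzero codeword if and only if $\mathbf{f}_j(X)/\mathbf{f}_i(X)=\mathbf{p}(X^t)$ with $\deg\mathbf{p}\le b$. The theorem's biconditional therefore holds exactly when one adds the hypothesis $\ell_j-\ell_i\le bt$. That hypothesis is satisfied in the paper's examples, where the $\ell_i$ are small compared with $k_{r-1}$.
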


\begin{proof}
Any nonzero code polynomial in $\mathcal{C}_{i,\text{local}}$ can be written as $\ba_i(X^t)\bg_i(X) = \ba_i(X^t)\mathbf{f}_i(X)\bg_0(X)$ for some nonzero polynomial $\ba_i(X^t)$ of degree at most $b$. Similarly, any nonzero code polynomial in $\mathcal{C}_{j,\text{local}}$ can be written as $\ba_j(X^t) \bg_j(X) = \ba_j(X^t) \mathbf{f}_j(X)\bg_0(X)$ for some nonzero polynomial $\ba_j(X^t)$. Since, for $i < j$, $\mathbf{f}_j(X)$ is divisible by $\mathbf{f}_i(X)$, we can write $\mathbf{f}_j(X) = \bb_j(X)\mathbf{f}_i(X)$ for some nonzero polynomial $\bb_j(X)$. Hence, a nonzero code polynomial $\bv(X)$ in $\mathcal{C}_{j,\text{local}}$ can be written as $\ba_j(X^t) \bb_j(X)\mathbf{f}_i(X)\bg_0(X)$. If this is a nonzero code polynomial in $\mathcal{C}_{i,\text{local}}$, then $\ba_j(X^t) \bb_j(X)\mathbf{f}_i(X)\bg_0(X) = \ba_i(X^t)\mathbf{f}_i(X)\bg_0(X)$ for some nonzero polynomial $\ba_i(X^t)$. This says that $\bb_j(X) = \ba_i(X^t)/\ba_j(X^t)$ is a nonzero polynomial in $X^t$ and the ratio $\mathbf{f}_j(X)/\mathbf{f}_i(X) = \bb_j(X) = \ba_i(X^t)/\ba_j(X^t)$ is a nonzero polynomial in $X^t$. This says that if $\mathcal{C}_{i,\text{local}}$ and $\mathcal{C}_{j,\text{local}}$ with $0 \leq i < j < r$ have a nonzero code polynomial in common, the ratio $\mathbf{f}_j(X)/\mathbf{f}_i(X)$ must be a polynomial in $X^t$. Conversely, this implies that $\mathcal{C}_{i,\text{local}}$ and $\mathcal{C}_{j,\text{local}}$ do not have a nonzero code polynomial in common if and only if the ratio $\mathbf{f}_j(X)/\mathbf{f}_i(X)$ is not a polynomial in $X^t$ for $0 \leq i < j < r$. This proves the theorem. 
\end{proof}

Therefore, in the design of a code chain $\mathcal{C}_0 \supset \mathcal{C}_1 \supset \cdots \supset \mathcal{C}_{r-1}$ of length $r$ for which the $(r, t)$-PCS-local subcode $\mathcal{C}_{\text{local}}$ of the mother code $\mathcal{C}_0$ in the code chain is the direct-sum of its $r$ constituent $(1, t)$-PCS-local subcodes, we must choose the generator multipliers $\mathbf{f}_0(X) = 1, \mathbf{f}_1(X), \ldots, \mathbf{f}_{r-1}(X)$ for the code chain to satisfy the ratio condition given in Theorem~\ref{Thm1}. Note that the multiplier ratio condition does not guarantee that $\mathcal{C}_{\text{local}}$ is the direct-sum of its $r$ constituent $(1, t)$-PCS-local subcodes. That is, it is a necessary condition but not a sufficient condition. In the following, we derive a sufficient condition under which $\mathcal{C}_{\text{local}}$ is the direct-sum of its $r$ constituent $(1, t)$-PCS-local subcodes.

Given a set $\Omega$ of $r$ generator multipliers $\mathbf{f}_0(X) = 1, \mathbf{f}_1(X), \ldots, \mathbf{f}_{r-1}(X)$ for a code chain $\mathcal{C}_0 \supset \mathcal{C}_1 \supset \cdots \supset \mathcal{C}_{r-1}$ of length $r$ which satisfy the ratio condition given by Theorem \ref{Thm1}, we next derive the necessary and sufficient condition on the set $\Omega$ of $r$ generator multipliers such that the $(r, t)$-PCS-local subcode $\mathcal{C}_{\text{local}}$ of the mother code $\mathcal{C}_0$ in the code chain is the direct-sum of its $r$ constituent $(1, t)$-PCS-local subcodes. This is equivalent to deriving the necessary and sufficient condition on the set $\Omega$ of $r$ generator multipliers such that the rows of the generator matrix $\bG_{\text{local}}(\mathbf{g}_0, \mathbf{g}_1, \ldots, \mathbf{g}_{r-1})$ of $\mathcal{C}_{\text{local}}$ are linearly independent.

To derive the necessary and sufficient condition for the rows in $\bG_{\text{local}}(\mathbf{g}_0, \mathbf{g}_1, \ldots, \mathbf{g}_{r-1})$ to be linearly independent, we re-arrange the rows in $\bG_{\text{local}}(\mathbf{g}_0, \mathbf{g}_1, \ldots, \mathbf{g}_{r-1})$ and express it in polynomial form as follows:
\begin{equation} \label{eq:G_local_matrix_2}
\bG_{\local}(\bg_0, \bg_1, \ldots, \bg_{r-1})(X) = \left[ \begin{array}{c}
\bG_{0, \local}( \bg_0(X) ) \\
\bG_{1, \local}( \bg_1(X) ) \\
\vdots \\
\bG_{r-1, \local}( \bg_{r-1}(X) ) \\
\end{array} \right]
\end{equation}
in which $\bG_{i,\text{local}}(\bg_i(X))$ is composed of $\bg_i(X)$ and its $b$ $t$-PCSs in order. The $j$-th row in $\bG_{i,\text{local}}(\bg_i(X))$ is $X^{jt}\bg_i(X)$, where $0 \leq i < r$ and $0 \leq j \leq b$. A linear combination of the rows of the $(b+1)r \times n$ matrix $\bG_{\text{local}}(\mathbf{g}_0, \mathbf{g}_1, \ldots, \mathbf{g}_{r-1})(X)$ in which the $j$-th row in $\bG_{i,\text{local}}(\bg_i(X))$ is multiplied by $a_{i,j}$ for $0 \leq i < r$ and $0 \leq j \leq b$, gives a polynomial
\begin{equation}
\ba_0 (X^t) \bg_0(X) + \ba_1(X^t) \bg_1(X) + \ldots + \ba_{r-1}(X^t) \bg_{r-1}(X),
\end{equation}
where
\begin{equation}
\ba_i(X) = a_{i, 0} + a_{i, 1} X + a_{i, 2} X^2 + \ldots + a_{i, b} X^b
\end{equation}
is a polynomial over $\mathrm{GF}(2)$ of degree at most $b$.

Therefore, the question about whether or not the $(b + 1)r$ rows of $\bG_{\local}(\bg_0, \bg_1, \ldots, \bg_{r-1})(X)$ (or $\bG_{\local}(\bg_0, \bg_1, \ldots, \bg_{r-1})$) given in (\ref{eq:G_local_matrix})) are linearly independent is equivalent to the question whether or not there are polynomials $\ba_0(X)$, $\ba_1(X)$, $\ldots$, $\ba_{r-1}(X)$, not all zeros, each of degree at most $b$, such that 

\begin{equation}\label{eq:linear_condition}
\ba_0 (X^t) \bg_0(X) + \ba_1(X^t) \bg_1(X) + \ldots + \ba_{r-1}(X^t) \bg_{r-1}(X) = 0.
\end{equation}

Since $\bg_i(X) = {\bf f}_i (X) \bg_0(X)$ for $0 \le i <r$, Equation (\ref{eq:linear_condition}) reduces to 

\begin{equation}\label{eq:linear_condition2}
\ba_0 (X^t) {\bf f}_0(X) + \ba_1(X^t) {\bf f}_1(X) + \ldots + \ba_{r-1}(X^t) {\bf f}_{r-1}(X) = 0.
\end{equation}

For each $0 \le i < r$, we can put ${\bf f}_i(X)$ in the following form:

\begin{equation} \label{eq:linear_condition3}
{\bf f}_i(X) = {\bf f}_{i,0}(X^t) + X {\bf f}_{i,1}(X^t) + \ldots + X^{t-1} {\bf f}_{i,t-1} (X^t)
\end{equation}

\noindent where for $0 \le j < t$, ${\bf f}_{i,j}(X^t)$ is a polynomial in $X^t$. Thus, Equation (\ref{eq:linear_condition2}) is equivalent to the system of equations

\begin{equation} \label{eq:linear_condition4}
\ba_0 (X^t) {\bf f}_{0,j}(X^t) + \ba_1(X^t) {\bf f}_{1,j}(X^t) + \ldots + \ba_{r-1}(X^t) {\bf f}_{r-1, j}(X^t) = 0,
\end{equation}

\noindent for $0 \le j < t$. For simplification, we can replace $X^t$ by $X$ to write this system of equations as

\begin{equation} \label{eq:linear_condition5}
\ba_0 (X) {\bf f}_{0,j}(X) + \ba_1(X) {\bf f}_{1,j}(X) + \ldots + \ba_{r-1}(X) {\bf f}_{r-1, j}(X) = 0,
\end{equation}

\noindent for $0 \le j < t$. Then the system of equations given by (\ref{eq:linear_condition5}) reduces to 

\begin{equation} \label{eq:linear_condition6}
(\ba_0(X), \ba_1(X), \ldots, \ba_{r-1}(X)) \cdot \left[ \begin{array}{cccc}
{\bf f}_{0,0}(X) & {\bf f}_{0,1}(X) & \cdots & {\bf f}_{0,t-1}(X) \\
{\bf f}_{1,0}(X) & {\bf f}_{1,1}(X) & \cdots & {\bf f}_{1,t-1}(X) \\
\vdots & \vdots & \vdots & \vdots \\
{\bf f}_{r-1,0}(X) & {\bf f}_{r-1,1}(X) & \cdots & {\bf f}_{r-1,t-1}(X) \\
\end{array} \right] = 0.
\end{equation}

Thus, the rows of the $(b+1)r \times n$ matrix $\bG_{\text{local}}(\mathbf{g}_0, \mathbf{g}_1, \ldots, \mathbf{g}_{r-1})(X)$ are linearly independent if and only if the following holds: If the above matrix equation holds, where $\ba_0(X), \ba_1(X), \ldots, \ba_{r-1}(X)$ are polynomials of degree at most $b$, then $\ba_0(X) = \ba_1(X) = \cdots = \ba_{r-1}(X) = 0$.

By dropping the condition that $\ba_0(X), \ba_1(X), \ldots, \ba_{r-1}(X)$ are polynomials of degree at most $b$ and allowing them to be Laurent series~\cite{piret1988}, i.e., of the form 
\[\sum_{i=r}^{\infty} a_i X^i
\]
where $r$ is an integer (negative or nonnegative), we can get a sufficient condition for linear independence. Since Laurent series is a field, the system of equations holds over this field if and only if the $r \times t$ matrix
\begin{equation} \label{eq:linear_condition7}
\bF(X) = \left[ \begin{array}{cccc}
{\bf f}_{0,0}(X) & {\bf f}_{0,1}(X) & \cdots & {\bf f}_{0,t-1}(X) \\
{\bf f}_{1,0}(X) & {\bf f}_{1,1}(X) & \cdots & {\bf f}_{1,t-1}(X) \\
\vdots & \vdots & \vdots & \vdots \\
{\bf f}_{r-1,0}(X) & {\bf f}_{r-1,1}(X) & \cdots & {\bf f}_{r-1,t-1}(X) \\
\end{array} \right]
\end{equation}
has full rank. The condition that the matrix $\bF(X)$ has full rank over the field of Laurent series is \hl{equivalent to having an $r \times r$ submatrix of nonzero determinant}.

For $0 \le i < r$, the $i$-row $({\bf f}_{i,0}(X), {\bf f}_{i,1}(X), \cdots, {\bf f}_{i,t-1}(X))$ of $\bF(X)$ is called the \hl{$t$-fold decomposition} of the generator multiplier ${\bf f}_i(X$) of the $i$-th code $C_i$ in the code chain $C_0 \supset C_1  \supset \ldots \supset C_{r-1}$. Since ${\bf f}_0(X) = $1, the $t$ entries in the first row of $\bF(X)$ are ${\bf f}_{0,0}(X) = 1$ and ${\bf f}_{0,1}(X ) =  \ldots  = {\bf f}_{0,t-1}(X) = 0$.   We call $\bF(X)$ the \hl{$t$-fold generator multiplier decomposition} (GMD) matrix ($t$-fold GMD-matrix). The polynomials $\mathbf{f}_{i,0}(X), \mathbf{f}_{i,1}(X), \ldots, \mathbf{f}_{i,t-1}(X)$ are called the $t$-fold decomposed polynomials of the generator multiplier $\mathbf{f}_i(X)$.

Summarizing the above developments, we have Theorem \ref{Thm2}.

\begin{theorem}\label{Thm2}
Let $\mathcal{C}_0 \supset \mathcal{C}_1 \supset \cdots \supset \mathcal{C}_{r-1}$ be a code chain of length $r$ in which the codes $\mathcal{C}_0, \mathcal{C}_1, \ldots, \mathcal{C}_{r-1}$ are generated by the polynomials $\bg_0(X), \bg_1(X), \ldots, \bg_{r-1}(X)$ with $\bg_i(X) = \mathbf{f}_i(X)\bg_0(X)$ for $0 \leq i < r$ and $\mathbf{f}_0(X) = 1$. The $(b+1)r$ rows of the $(b+1)r \times n$ generator matrix $\bG_{\text{local}}(\mathbf{g}_0, \mathbf{g}_1, \ldots, \mathbf{g}_{r-1})$ of the $(r, t)$-PCS-local subcode $\mathcal{C}_{\text{local}}$ of the mother code $\mathcal{C}_0$ in the code chain are linearly independent if and only if the $r \times t$ $t$-fold GMD-matrix $\bF(X)$ formed by the $t$-fold decompositions of the $r$ generator multipliers $\mathbf{f}_0(X) = 1, \mathbf{f}_1(X), \ldots, \mathbf{f}_{r-1}(X)$ for the code chain has an $r \times r$ submatrix of nonzero determinant, i.e., $\bF(X)$ is a full rank matrix. 
\end{theorem}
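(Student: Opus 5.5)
The plan is to make rigorous the reduction already carried out before the statement, and then prove the two directions separately. The first step is to record the exact equivalence. By (\ref{eq:linear_condition}) through (\ref{eq:linear_condition6}), the rows of $\bG_{\text{local}}(\bg_0,\ldots,\bg_{r-1})$ are linearly dependent if and only if some nonzero vector $(\ba_0(X),\ldots,\ba_{r-1}(X))$ of polynomials, each of degree at most $b$, satisfies $\ba(X)\bF(X)=0$.

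Two points need care in this step. The passage from (\ref{eq:linear_condition2}) to (\ref{eq:linear_condition4}) uses uniqueness of the $t$-fold decomposition: every polynomial splits uniquely as $\sum_{j<t}X^j\mathbf{p}_j(X^t)$, because the exponents in the $j$-th part are all congruent to $j$ mod $t$. Cancelling $\bg_0(X)$ is legitimate because we are in the integral domain $\mathrm{GF}(2)[X]$. Note that no reduction modulo $X^n+1$ occurs: every row of $\bG_{\text{local}}$ ends in at least $c$ zeros and sits inside the first $n$ positions, so the linear combination equals $0$ as a polynomial, not merely modulo $X^n+1$.

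\emph{Sufficiency.} Suppose $\bF(X)$ has an $r\times r$ submatrix $\bF_S(X)$ with nonzero determinant. Work over the field $\mathrm{GF}(2)((X))$ of Laurent series, which contains $\mathrm{GF}(2)[X]$. There $\bF_S$ is invertible, so any solution of $\ba\bF=0$ gives $\ba\bF_S=0$ and hence $\ba=0$. In particular the only polynomial solution of degree at most $b$ is the trivial one, and the $(b+1)r$ rows are independent.

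\emph{Necessity.} Suppose $\bF(X)$ has rank $\rho<r$ over the Laurent field. I would build an explicit nonzero polynomial left-kernel vector by Cramer's rule, as follows.
\begin{itemize}
\item Choose $\rho$ rows $S$ and $\rho$ columns $T$ with nonzero minor, and one further row $m\notin S$.
\item For $i\in S\cup\{m\}$, let $\ba_i$ be, up to sign, the $\rho\times\rho$ minor of the $(\rho+1)\times\rho$ matrix on rows $S\cup\{m\}$ and columns $T$ with row $i$ deleted. Set $\ba_i=0$ for all other $i$.
\item Expanding a $(\rho+1)\times(\rho+1)$ determinant with a repeated column shows $\ba\bF=0$ in every column, using that every $(\rho+1)$-minor vanishes.
\item $\ba_m$ is the chosen nonzero minor, so $\ba\neq0$.
\end{itemize}

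It remains to show $\deg\ba_i\le b$, so that this kernel vector corresponds to a genuine dependency among the rows. Since $\deg\mathbf{f}_i=\ell_i$, every entry of row $i$ of $\bF$ has degree at most $\lfloor \ell_i/t\rfloor$. Row $0$ is $(1,0,\ldots,0)$ and has degree $0$. So each minor has degree at most $\sum_{i\ge1}\lfloor\ell_i/t\rfloor$, to be compared with $b=\lfloor (k_0-\ell_{r-1}-1)/t\rfloor$.

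This degree comparison is the main obstacle, and I am not sure it holds unconditionally. I would try the following, in order.
\begin{itemize}
\item Sharpen the bound. Take $S$ to contain row $0$ and use the row degrees only of rows actually involved. Alternatively, replace the Cramer vector by a minimal polynomial basis of the left kernel, whose degree is bounded by the Forney indices, which are at most the maximal row degree $\lfloor\ell_{r-1}/t\rfloor$ summed over fewer rows.
\item Otherwise, exploit the standing hypothesis $t\le k_{r-1}-1$ together with $\ell_i<k_0$ to show the bound fits under $b$.
\end{itemize}
If neither works, the honest conclusion is that the "only if" direction requires an extra hypothesis of this degree type. I would then state the theorem with that proviso, leaving the "if" direction valid in full generality.
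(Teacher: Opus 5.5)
Your reduction (uniqueness of the $t$-fold decomposition, cancelling $\bg_0(X)$ in $\mathrm{GF}(2)[X]$, no wrap-around modulo $X^n+1$) and your sufficiency argument over the Laurent-series field are exactly the paper's argument. That is also all the paper proves. It introduces the Laurent relaxation explicitly as giving ``a sufficient condition for linear independence'', then states Theorem~\ref{Thm2} as an equivalence with no further argument; Corollary~\ref{cor1} records only the ``if'' half. The obstacle you found in the necessity direction is not a weakness of your method: the ``only if'' direction is false.

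\textbf{Counterexample with $r=2$.} Take $n=21$, $t=3$, $\bg_0(X)=1+X+X^3$ and $\mathbf{f}_1(X)=1+X^3+X^9$.
\begin{itemize}
\item Since $1+Y+Y^3$ divides $Y^7+1$, $\mathbf{f}_1(X)$ divides $X^{21}+1$.
\item For a root $\gamma$ of $\bg_0$, $\mathbf{f}_1(\gamma)=\gamma^2+\gamma\neq 0$, so $\GCD(\mathbf{f}_1,\bg_0)=1$.
\item Then $k_0=18$, $\ell_1=9$, $k_1=9$, so $r<t\le k_1-1$ and $b=2$.
\end{itemize}
The GMD-matrix
\[
\bF(X)=\begin{bmatrix}1&0&0\\ 1+X+X^3&0&0\end{bmatrix}
\]
has rank $1$. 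Yet a dependency among the six rows of $\bG_{\text{local}}$ would need $\ba_0(Y)=\ba_1(Y)(1+Y+Y^3)$ with $\deg\ba_i\le 2$, which forces $\ba_0=\ba_1=0$. So the rows are independent. (This also refutes the ``only if'' half of Theorem~\ref{Thm1}.)

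\textbf{Counterexample respecting the ratio condition of Theorem~\ref{Thm1}.} Take $n=127$, $t=4$, $\mathbf{f}_1(X)=1+X$ and $\mathbf{f}_2(X)=1+X+X^4+X^8=(1+X)(1+X^4+X^5+X^6+X^7)$.
\begin{itemize}
\item The second factor is irreducible of degree $7$, hence divides $X^{127}+1$.
\item Set $\bg_0(X)=(X^{127}+1)/\bigl(\mathbf{f}_2(X)(1+X+X^7)\bigr)$. Then $k_0=15$, $k_2=7$ and $b=1$.
\item None of $\mathbf{f}_1$, $\mathbf{f}_2$, $\mathbf{f}_2/\mathbf{f}_1$ is a polynomial in $X^4$.
\item The last two columns of $\bF(X)$ vanish, so $\bF(X)$ has rank $2$.
\item Its left kernel is spanned by $(X+X^2,1,1)$, whose entries are coprime. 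Hence every polynomial kernel vector has degree at least $2>b$, and the six rows are again independent.
\end{itemize}

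\textbf{Consequences.} Both examples satisfy $t\le k_{r-1}-1$ and $\ell_i<k_0$, and in both the minimal kernel vector itself has degree exceeding $b$. So none of your rescue routes can work: not a sharper Cramer bound, not Forney indices, not the standing hypotheses. The correct criterion is this: the rows are independent iff the left kernel of $\bF(X)$ contains no nonzero polynomial vector whose entries all have degree at most $b$. Equivalently, $\bF(X)$ has full rank or its smallest left minimal index exceeds $b$. Your fallback is the right conclusion: keep the ``if'' direction (the paper's Corollary~\ref{cor1}) and drop or restate the ``only if''.
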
  

\begin{corollary}\label{cor1}
If the $r \times t$ $t$-fold GMD-matrix $\bF(X)$ formed by the $t$-fold decompositions of the $r$ generator multipliers $\mathbf{f}_0(X) = 1, \mathbf{f}_1(X), \ldots, \mathbf{f}_{r-1}(X)$ for a code chain $\mathcal{C}_0 \supset \mathcal{C}_1 \supset \cdots \supset \mathcal{C}_{r-1}$ is a full rank matrix, the $(r, t)$-PCS-local subcode $\mathcal{C}_{\text{local}}$ of the mother code $\mathcal{C}_0$ in the code chain is the direct-sum of its $r$ constituent $(1, t)$-PCS-local subcodes. 
\end{corollary}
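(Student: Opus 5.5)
The plan is to obtain Corollary~\ref{cor1} from the sufficiency direction of the reduction that precedes Theorem~\ref{Thm2}, together with the characterization of direct sums given at the start of this subsection. By that characterization, $\mathcal{C}_{\text{local}}$ is the direct-sum of $\mathcal{C}_{0,\text{local}},\ldots,\mathcal{C}_{r-1,\text{local}}$ exactly when the $(b+1)r$ rows of $\bG_{\text{local}}(\mathbf{g}_0,\ldots,\mathbf{g}_{r-1})$ are linearly independent. So it suffices to show that full rank of $\bF(X)$ forces linear independence of these rows.

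\textbf{Step 1: reduce to the matrix equation.} Suppose some combination of the rows vanishes. Write it as (\ref{eq:linear_condition}) with polynomials $\ba_i(X)$ of degree at most $b$. Divide out $\bg_0(X)$, which is legitimate because $\mathrm{GF}(2)[X]$ is an integral domain; this gives (\ref{eq:linear_condition2}). Substitute the $t$-fold decompositions (\ref{eq:linear_condition3}). Now $\sum_i \ba_i(X^t)\mathbf{f}_{i,j}(X^t)$ involves only exponents divisible by $t$, and after multiplication by $X^j$ only exponents congruent to $j \bmod t$. The $t$ residue classes are disjoint, so the sum vanishes iff each class vanishes separately. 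This gives (\ref{eq:linear_condition4}), and after replacing $X^t$ by $X$, the vector identity (\ref{eq:linear_condition6}): $(\ba_0(X),\ldots,\ba_{r-1}(X))\,\bF(X)=0$.

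\textbf{Step 2: full rank kills the vector.} By hypothesis $\bF(X)$ has an $r\times r$ submatrix $\bF_S(X)$ with nonzero determinant $\Delta(X)\in\mathrm{GF}(2)[X]$. Restricting (\ref{eq:linear_condition6}) to the columns in $S$ gives $\mathbf{a}(X)\bF_S(X)=0$. Multiply on the right by the adjugate of $\bF_S(X)$, which has polynomial entries; this yields $\Delta(X)\,\ba_i(X)=0$ for every $i$. Since $\Delta\neq 0$ and $\mathrm{GF}(2)[X]$ has no zero divisors, every $\ba_i(X)=0$. This argument avoids Laurent series altogether, although one could equally invert $\bF_S$ over the field of Laurent series. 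Hence all coefficients $a_{i,j}$ vanish, the rows are independent, and $\dim\mathcal{C}_{\text{local}}=(b+1)r$.

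\textbf{Step 3: conclude the direct sum.} Row independence implies directly that no nonzero vector has two distinct decompositions $\sum_i \mathbf{v}_i$ with $\mathbf{v}_i\in\mathcal{C}_{i,\text{local}}$. Equivalently, $\dim\mathcal{C}_{\text{local}}=\sum_i\dim\mathcal{C}_{i,\text{local}}$. So $\mathcal{C}_{\text{local}}=\mathcal{C}_{0,\text{local}}\oplus\cdots\oplus\mathcal{C}_{r-1,\text{local}}$.

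The only delicate point is the splitting by residue classes in Step 1, since it is what makes the $t$ equations independent. It requires $\deg \ba_i(X^t)\mathbf{f}_{i,j}(X^t)X^j$ to be taken in $\mathrm{GF}(2)[X]$ and not modulo $X^n+1$. That holds because every row of $\bG_{\text{local}}$ is a genuine shift with no wrap-around: $bt+n-k_{r-1}\le n-1$, using $k_{r-1}-1=bt+c$. I would verify this degree bound explicitly before applying the coefficient comparison.
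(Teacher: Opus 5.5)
Your proposal is correct and follows the paper's own route. You reduce a row dependency of $\bG_{\text{local}}(\mathbf{g}_0,\ldots,\mathbf{g}_{r-1})$ to $(\ba_0(X),\ldots,\ba_{r-1}(X))\,\bF(X)=0$ via the $t$-fold decompositions, use full rank of $\bF(X)$ to force every $\ba_i(X)=0$, and read off the direct sum from row independence. The differences are minor refinements: you annihilate the kernel vector with the adjugate of a nonsingular $r\times r$ submatrix over $\mathrm{GF}(2)[X]$ instead of passing to the field of Laurent series, and you state explicitly the no-wrap-around bound $bt+n-k_{r-1}=n-1-c\le n-1$, which the paper uses only implicitly.
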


Therefore, in the design of a code chain $\mathcal{C}_0 \supset \mathcal{C}_1 \supset \cdots \supset \mathcal{C}_{r-1}$ of length $r$ for which the $(r, t)$-PCS-local subcode $\mathcal{C}_{\text{local}}$ of the mother code $\mathcal{C}_0$ in the code chain is the direct-sum of its constituent $(1, t)$-PCS-local subcodes, we must first choose the generator multipliers $\mathbf{f}_0(X) = 1, \mathbf{f}_1(X), \ldots, \mathbf{f}_{r-1}(X)$ for the code chain to satisfy the ratio condition given in Theorem~\ref{Thm1} and then check whether the $t$-fold GMD-matrix $\bF(X)$ formed by the $t$-fold decompositions of the generator multipliers satisfies the necessary and sufficient condition given in Theorem~\ref{Thm2}, i.e., to have an $r \times r$ submatrix of nonzero determinant. A code chain for which its $t$-fold GMD-matrix $\bF(X)$ is a full rank matrix is called a \hl{$t$-fold full rank code chain}.

Let $\mathcal{C}_0 \supset \mathcal{C}_1 \supset \cdots \supset \mathcal{C}_{r-1}$ be a $t$-fold full rank code chain of length $r$ constructed based on the mother code $\mathcal{C}_0$ and the generator multipliers $\mathbf{f}_0(X) = 1, \mathbf{f}_1(X), \ldots, \mathbf{f}_{r-1}(X)$ for the $r$ codes in the code chain. Let $s$ be a positive integer with $1 \leq s < r$. A subchain of the code chain $\mathcal{C}_0 \supset \mathcal{C}_1 \supset \cdots \supset \mathcal{C}_{r-1}$ that consists of $s$ codes in the code chain is also a $t$-fold full rank code chain. It can be used to construct the $(s, t)$-PCS-local subcode of the mother code in the subchain. Let $\mathcal{C}_i$ be the mother code of the subchain and $\mathbf{f}_i(X)$ be its generator multiplier associated with the code chain $\mathcal{C}_0 \supset \mathcal{C}_1 \supset \cdots \supset \mathcal{C}_{r-1}$. Then, in the construction of the $(s, t)$-PCS-local subcode $\mathcal{C}^{(s)}_{\text{local}}$ of the mother code $\mathcal{C}_i$ in the subchain, the generator multiplier $\mathbf{f}^*_j(X)$ for a descendant code $\mathcal{C}_j$ of $\mathcal{C}_i$ in the subchain is adjusted to $\mathbf{f}_j(X)/\mathbf{f}_i(X)$. For the mother code $\mathcal{C}_i$, the generator multiplier is $\mathbf{f}^*_i(X) = \mathbf{f}_i(X)/\mathbf{f}_i(X) = 1$. Clearly, the $(s, t)$-PCS-local subcode $\mathcal{C}^{(s)}_{\text{local}}$ of $\mathcal{C}_i$ is a subcode of the $(r, t)$-PCS-local subcode $\mathcal{C}_{\text{local}}$ of the mother code $\mathcal{C}_0$ in the full code chain $\mathcal{C}_0 \supset \mathcal{C}_1 \supset \cdots \supset \mathcal{C}_{r-1}$.

A special case is that a code chain consists of a single $(n, k_0)$ code $\mathcal{C}_0$ with minimum distance $d_0$. In this case, the $(1, t)$-PCS-local code $\mathcal{C}_{\text{local}}$ of $\mathcal{C}_0$ with shifting span $b = \left\lfloor \frac{k_0 - 1}{t} \right\rfloor$ is an $(n, \left\lfloor \frac{k_0 - 1}{t} \right\rfloor + 1)$ code with minimum distance at least $d_0$. For $t = 2$, if $k_0$ is an odd integer, $\mathcal{C}_{\text{local}}$ is an $(n, (k_0 + 1)/2)$ code.

Full rank code chains will be used to construct convolutional codes in the next sections based on various classes of cyclic codes. The convolutional code constructed based on a $t$-fold full rank code chain of length $r$ is a rate $r/t$ convolutional code of overall constraint length $\nu$ upper bounded by
\[
\left\lceil \frac{n - k_0}{t} \right\rceil + \left\lceil \frac{n - k_1}{t} \right\rceil + \cdots + \left\lceil \frac{n - k_{r-1}}{t} \right\rceil
\]
with local minimum distance lower bounded by the minimum distance $d_0$ of the mother code in the code chain. It is the direct-sum of $r$ rate-$1/t$ constituent convolutional codes, each constructed based on a code in the code chain, with memory orders upper bounded by $\left\lceil \frac{n - k_0}{t} \right\rceil, \ldots, \left\lceil \frac{n - k_{r-1}}{t} \right\rceil$, respectively.

\subsection*{Example 1}
\begin{example} \label{eg1}
In this example, we design a $4$-fold full rank code chain $\mathcal{C}_0 \supset \mathcal{C}_1 \supset \mathcal{C}_2$ of length $r=3$ in which the $3$ codes are cyclic codes of length $63$. The code chain is designed to construct a $(3,4)$-PCS-local subcode of the mother code $\mathcal{C}_0$ in the code chain with shifting factor $t=4$. The desired minimum distance of the $(3,4)$-PCS-local subcode of $\mathcal{C}_0$ is at least $11$.

In the design, we choose the $(63,36)$ BCH code of length $n=63$, dimension $k_0=36$ and minimum distance $d_0=11$ as the mother code $\mathcal{C}_0$ in the code chain. The $(63,36)$ BCH mother code $\mathcal{C}_0$ is constructed based on the Galois field $\mathrm{GF}(2^6)$. Let $\alpha$ be a primitive element in $\mathrm{GF}(2^6)$. The generator polynomial $\bg_0(X)$ of the $(63,36)$ BCH mother code $\mathcal{C}_0$ has $\alpha, \alpha^2, \ldots, \alpha^{10}$ and their conjugates as roots and it is a polynomial of degree $27$~\cite{lin2004}:
\[
\bg_0(X) = 1 + X + X^4 + X^8 + X^{15} + X^{17} + X^{18} + X^{19} + X^{21} + X^{22} + X^{27}.
\]

To construct the two descendant codes $\mathcal{C}_1$ and $\mathcal{C}_2$ of the mother code $\mathcal{C}_0$, we choose $\mathbf{f}_1(X) = 1 + X$ and $\mathbf{f}_2(X) = \mathbf{f}_1(X)(1 + X + X^2) = 1 + X^3$ as the generator multipliers for $\mathcal{C}_1$ and $\mathcal{C}_2$, respectively, where $1 + X$ and $(1 + X + X^2)$ are the minimal polynomials of the element $\alpha^0 = 1$ and $\alpha^{21}$ in $\mathrm{GF}(2^6)$, which are not roots of $\bg_0(X)$. Then, the generator polynomials of $\mathcal{C}_1$ and $\mathcal{C}_2$ are:
\begin{eqnarray*}
\bg_1(X) &=& {\bf f}_1(X) \bg_0(X)  = ( 1 + X) \bg_0(X)\\
 &=& 1 + X^2 + X^4 + X^5 + X^8 + X^9 + X^{15} + X^{16} + X^{17} + X^{20} + X^{21} + X^{23} + X^{27} + X^{28},
\end{eqnarray*}
and
\begin{eqnarray*}
\bg_2(X) &=& {\bf f}_2(X) \bg_0(X)  = ( 1 + X) (1 + X + X^2) \bg_0(X)\\
 &=& 1 + X + X^3 + X^7 + X^8 + X^{11} + X^{15} + X^{17}   + X^{19} + X^{20} + X^{24} + X^{25} + X^{27} + X^{30}.
\end{eqnarray*}

The two codes $\mathcal{C}_1$ and $\mathcal{C}_2$ generated by $\bg_1(X)$ and $\bg_2(X)$ are $(63,35)$ and $(63,33)$ codes with dimensions $k_1=35$ and $k_2=33$, respectively. $\mathcal{C}_1$ is the even weight subcode of the $(63,36)$ BCH mother code $\mathcal{C}_0$ in the code chain with minimum distance $12$. Since $\mathcal{C}_2$ is a subcode of $\mathcal{C}_1$, its minimum distance is at least $12$.

With $\mathbf{f}_0(X) = 1$ as the generator multiplier for $\mathcal{C}_0$, the ratios of the chosen generator multipliers are $\mathbf{f}_1(X)/\mathbf{f}_0(X) = 1 + X$, $\mathbf{f}_2(X)/\mathbf{f}_0(X) = 1 + X^3$, and $\mathbf{f}_2(X)/\mathbf{f}_1(X) = 1 + X + X^2$, respectively. We see that none of these ratios is a polynomial in $X^4$. Hence, $\mathbf{f}_0(X), \mathbf{f}_1(X)$ and $\mathbf{f}_2(X)$ satisfy the ratio condition given in Theorem~\ref{Thm1}.

Next, we show that the chosen generator multipliers $\mathbf{f}_0(X), \mathbf{f}_1(X)$ and $\mathbf{f}_2(X)$ for the code chain satisfy the full rank code chain condition given by Theorem~\ref{Thm2}. To do that, we decompose the generator multipliers $\mathbf{f}_0(X), \mathbf{f}_1(X)$ and $\mathbf{f}_2(X)$ in $4$-fold as follows:
\begin{eqnarray*}
{\bf f}_0(X) & \longleftrightarrow & {\bf f}_{0,0}(X) = 1, {\bf f}_{0,1}(X) = 0, {\bf f}_{0,2}(X) = 0, {\bf f}_{0,3}(X) = 0,\\
{\bf f}_1(X) & \longleftrightarrow & {\bf f}_{1,0}(X) = 1, {\bf f}_{1,1}(X) = 1, {\bf f}_{1,2}(X) = 0, {\bf f}_{1,3}(X) = 0,\\
{\bf f}_2(X) & \longleftrightarrow & {\bf f}_{2,0}(X) = 1, {\bf f}_{2,1}(X) = 0, {\bf f}_{2,2}(X) = 0, {\bf f}_{2,3}(X) = 1.
\end{eqnarray*}
Using the $4$-fold decompositions, we form the following $3 \times 4$ $4$-fold GMD-matrix:
\[
\bF(X) = \begin{bmatrix}
1 & 0 & 0 & 0 \\
1 & 1 & 0 & 0 \\
1 & 0 & 0 & 1
\end{bmatrix}.
\]
We see that the $0$-th, $1$-st and $3$-rd columns of $\bF(X)$ form a $3 \times 3$ submatrix with nonzero determinant. Hence, $\bF(X)$ is a full rank matrix and the code chain $\mathcal{C}_0 \supset \mathcal{C}_1 \supset \mathcal{C}_2$ is a $4$-fold full rank code chain.

Dividing $k_2 - 1 = 32$ by $4$, we have quotient $b = 8$ and remainder $c = 0$. Using the generator polynomials $\bg_0(X)$, $\bg_1(X)$, and $\bg_2(X)$, shifting factor $t = 4$, and shifting span $b = 8$, we construct the $27 \times 63$ generator matrix $\bG_{\text{local}}(\bg_0, \bg_1, \bg_2)$ of the $(3,4)$-PCS-local subcode $\mathcal{C}_{\text{local}}$ of the mother code $\mathcal{C}_0$ in the code chain $\mathcal{C}_0 \supset \mathcal{C}_1 \supset \mathcal{C}_2$. Since the code chain is a full rank code chain, the $27$ rows of $\bG_{\text{local}}(\bg_0, \bg_1, \bg_2)$ are linearly independent. Hence, the $(3,4)$-PCS-local subcode $\mathcal{C}_{\text{local}}$ of the mother code $\mathcal{C}_0$ is a $(63,27)$ code with minimum distance at least $11$ which is the direct-sum of its $3$ constituent $(1,4)$-PCS-local subcodes $\mathcal{C}_{0,\text{local}}, \mathcal{C}_{1,\text{local}}, \mathcal{C}_{2,\text{local}}$ of the $3$ constituent codes $\mathcal{C}_0, \mathcal{C}_1, \mathcal{C}_2$ in the code chain, each a $(63,9)$ code.

To construct $\bG_{\text{local}}(\bg_0, \bg_1, \bg_2)$, we first form the $3 \times 31$ belt generator matrix of $\mathcal{C}_{\text{local}}$ using the generator sequences of the $3$ codes in the code chain:
{\small
\begin{equation*}
\begin{array}{l}
\bB(\bg_0, \bg_1, \bg_2) = \\
\begin{bmatrix}
1 & 1 & 0 & 0 & 1 & 0 & 0 & 0 & 1 & 0 & 0 & 0 & 0 & 0 & 0 & 1 & 0 & 1 & 1 & 1 & 0 & 1 & 1 & 0 & 0 & 0 & 0 & 1 & 0 & 0 & 0 \\
1 & 0 & 1 & 0 & 1 & 1 & 0 & 0 & 1 & 1 & 0 & 0 & 0 & 0 & 0 & 1 & 1 & 1 & 0 & 0 & 1 & 1 & 0 & 1 & 0 & 0 & 0 & 1 & 1 & 0 & 0 \\
1 & 1 & 0 & 1 & 0 & 0 & 0 & 1 & 1 & 0 & 0 & 1 & 0 & 0 & 0 & 1 & 0 & 1 & 0 & 1 & 1 & 0 & 0 & 0 & 1 & 1 & 0 & 1 & 0 & 0 & 1 \\
\end{bmatrix}
\end{array}
\end{equation*}
}
Appending $b = 8$ zero matrices $O$ of size $3 \times 4$ to $\bB(g_0, g_1, g_2)$, we form the $3 \times 63$ bend matrix of $\mathcal{C}_{\text{local}}$:
\[
\bM(\bg_0, \bg_1, \bg_2) = \left[ \bB(\bg_0, \bg_1, \bg_2), \bO,  \bO, \bO, \bO, \bO, \bO, \bO, \bO \right].
\]
Cyclically shifting the rows of $\bM(\bg_0, \bg_1, \bg_2)$ together to the right $8$ times, $4$ positions at a time, we obtain the $27 \times 63$ generator matrix $\bG_{\text{local}}(\bg_0, \bg_1, \bg_2)$ of the $(63,27)$ $(3,4)$-PCS-local subcode $\mathcal{C}_{\text{local}}$ of the mother code $\mathcal{C}_0$ in the code chain with minimum distance $d_{\text{local}}$ at least $11$.

If we use the subchain $\mathcal{C}_0 \supset \mathcal{C}_2$ of the $4$-fold full rank code chain $\mathcal{C}_0 \supset \mathcal{C}_1 \supset \mathcal{C}_2$, we can construct the $(2,4)$-PCS-local subcode of the mother code $\mathcal{C}_0$ which is a $(63,18)$ code with local minimum distance at least $11$. If we use the subchain $\mathcal{C}_1 \supset \mathcal{C}_2$ of the $4$-fold full rank code chain $\mathcal{C}_0 \supset \mathcal{C}_1 \supset \mathcal{C}_2$ with generator multipliers $\mathbf{f}^*_1(X) = 1$ and $\mathbf{f}^*_2(X) = \mathbf{f}_2(X)/\mathbf{f}_1(X) = (1 + X^3)/(1 + X) = 1 + X + X^2$, we can construct the $(2,4)$-PCS-local subcode of the mother code $\mathcal{C}_1$ in the subchain $\mathcal{C}_1 \supset \mathcal{C}_2$ which is a $(63,18)$ code with local minimum distance at least $12$. If we use the $(63,33)$ code $\mathcal{C}_2$ alone, we can construct a $(63,9)$ $(1,4)$-PCS-local subcode of $\mathcal{C}_2$ (also a subcode of $\mathcal{C}_0$) with minimum distance at least $12$.
\end{example}

\subsection*{Example 2}
\begin{example} \label{eg2}
In this example, we design a $5$-fold full rank code chain $\mathcal{C}_0 \supset \mathcal{C}_1 \supset \mathcal{C}_2 \supset \mathcal{C}_3$ of length $r=4$ with shifting factor $t=5$ to construct a $(4,5)$-PCS-local subcode of the mother code $\mathcal{C}_0$ in the code chain with a relatively large local minimum distance.

In the design, we choose the $(255,115)$ BCH code of length $255$ with dimension $k_0=115$ and designed minimum distance $d_0=43$ as the mother code in the code chain $\mathcal{C}_0 \supset \mathcal{C}_1 \supset \mathcal{C}_2 \supset \mathcal{C}_3$. The BCH code is constructed based on the Galois field $\mathrm{GF}(2^8)$. Let $\alpha$ be a primitive element of $\mathrm{GF}(2^8)$. The generator polynomial $\bg_0(X)$ of $\mathcal{C}_0$ is a polynomial of degree $140$ with $\alpha, \alpha^2, \ldots, \alpha^{42}$ and their conjugates as roots.

To construct the generator polynomials of the three descendant codes $\mathcal{C}_1, \mathcal{C}_2$ and $\mathcal{C}_3$ of $\mathcal{C}_0$ in the code chain, we choose the following $3$ generator multipliers:
\begin{align*}
\mathbf{f}_1(X) &= 1 + X^6 + X^7 + X^8, \\
\mathbf{f}_2(X) &= \mathbf{f}_1(X)(1 + X + X^2) = 1 + X + X^2 + X^6 + X^8 + X^{10}, \\
\mathbf{f}_3(X) &= \mathbf{f}_2(X)(1 + X^3 + X^4) \\
       &= 1 + X + X^2 + X^3 + X^8 + X^9 + X^{11} + X^{13} + X^{14},
\end{align*}
where $1 + X^6 + X^7 + X^8$, $1 + X + X^2$ and $1 + X^3 + X^4$ are the minimal polynomials of the elements $\alpha^{43}$, $\alpha^{85}$ and $\alpha^{119}$ in $\mathrm{GF}(2^8)$, respectively, which are not roots in $\bg_0(X)$. The generator polynomials of the $3$ descendant codes $\mathcal{C}_1, \mathcal{C}_2$ and $\mathcal{C}_3$ of $\mathcal{C}_0$ are
\[
\bg_1(X) = \bg_0(X)\mathbf{f}_1(X), \quad
\bg_2(X) = \bg_0(X)\mathbf{f}_2(X), \quad
\bg_3(X) = \bg_0(X)\mathbf{f}_3(X).
\]
The $4$ codes $\mathcal{C}_0, \mathcal{C}_1, \mathcal{C}_2$ and $\mathcal{C}_3$ generated by $\bg_0(X), \bg_1(X), \bg_2(X)$ and $\bg_3(X)$ form a code chain $\mathcal{C}_0 \supset \mathcal{C}_1 \supset \mathcal{C}_2 \supset \mathcal{C}_3$ of length $4$. Next, we show that the code chain is a full rank code chain.

From the generator multipliers $\mathbf{f}_0(X) = 1, \mathbf{f}_1(X), \mathbf{f}_2(X), \mathbf{f}_3(X)$ given above, we easily check that for $0 \leq i < j < t = 5$, the ratio $\mathbf{f}_j(X)/\mathbf{f}_i(X)$ is not a polynomial in $X^5$. The $5$-fold decompositions of $\mathbf{f}_0(X) = 1, \mathbf{f}_1(X), \mathbf{f}_2(X)$ and $\mathbf{f}_3(X)$ are:
\begin{eqnarray*}
{\bf f}_0(X) & \longleftrightarrow & {\bf f}_{0,0}(X) = 1, {\bf f}_{0,1}(X) = 0, {\bf f}_{0,2}(X) = 0, {\bf f}_{0,3}(X) = 0, {\bf f}_{0,4}(X) = 0, \\
{\bf f}_1(X) & \longleftrightarrow & {\bf f}_{1,0}(X) = 1, {\bf f}_{1,1}(X) = X, {\bf f}_{1,2}(X) = X, {\bf f}_{1,3}(X) = X, {\bf f}_{1,4}(X) = 0, \\
{\bf f}_2(X) & \longleftrightarrow & {\bf f}_{2,0}(X) = 1 + X^2, {\bf f}_{2,1}(X) = 1 + X, {\bf f}_{2,2}(X) = 1, {\bf f}_{2,3}(X) = X, {\bf f}_{2,4}(X) = 0, \\
{\bf f}_3(X) & \longleftrightarrow & {\bf f}_{3,0}(X) = 1, {\bf f}_{3,1}(X) = 1 + X^2, {\bf f}_{3,2}(X) = 1, {\bf f}_{3,3}(X) = 1 + X + X^2, {\bf f}_{3,4}(X) = X + X^2. \\
\end{eqnarray*}

The $5$-fold GMD-matrix $\bF(X)$ of the code chain formed by the $5$-fold decompositions of the generator multipliers $\mathbf{f}_0(X), \mathbf{f}_1(X), \mathbf{f}_2(X), \mathbf{f}_3(X)$ is a $4 \times 5$ matrix,
\[
\bF(X) = \begin{bmatrix}
1 & 0 & 0 & 0 & 0 \\
1 & X & X & X & 0 \\
1 + X^2 & 1 + X & 1 & X & 0 \\
1 & 1 + X^2 & 1 & 1 + X + X^2 & X + X^2
\end{bmatrix}.
\]
The first $4$ columns of $\bF(X)$ form a $4 \times 4$ submatrix with nonzero determinant. Hence, $\bF(X)$ is a full rank matrix and the code chain $\mathcal{C}_0 \supset \mathcal{C}_1 \supset \mathcal{C}_2 \supset \mathcal{C}_3$ constructed by using the $(255,115)$ BCH mother code and the generator multipliers $\mathbf{f}_0(X), \mathbf{f}_1(X), \mathbf{f}_2(X), \mathbf{f}_3(X)$ is a $5$-fold full rank code chain.

The three descendant codes $\mathcal{C}_1, \mathcal{C}_2$ and $\mathcal{C}_3$ of $\mathcal{C}_0$ generated by $\bg_1(X) = \bg_0(X)\mathbf{f}_1(X)$, $\bg_2(X) = \bg_0(X)\mathbf{f}_2(X)$ and $\bg_3(X) = \bg_0(X)\mathbf{f}_3(X)$ are $(255,107)$, $(255,105)$ and $(255,101)$ codes with dimensions $k_1 = 107$, $k_2 = 105$ and $k_3 = 101$, respectively. The $(255,107)$ code $\mathcal{C}_1$ is also a BCH code with designed minimum distance $45$. Since $\mathcal{C}_2$ and $\mathcal{C}_3$ are subcodes of $\mathcal{C}_1$, their minimum distances are at least $45$.

Dividing $k_3 - 1 = 100$ by $5$, the quotient $b$ of the division is $20$. Using the generator polynomials of $4$ codes in the code chain $\mathcal{C}_0 \supset \mathcal{C}_1 \supset \mathcal{C}_2 \supset \mathcal{C}_3$ and setting the shifting factor and span to $5$ and $20$, respectively, we construct the generator matrix $\bG_{\text{local}}(\bg_0, \bg_1, \bg_2, \bg_3)$ of the $(4,5)$-PCS-local subcode $\mathcal{C}_{\text{local}}$ of the mother code $\mathcal{C}_0$ in the code chain which is an $84 \times 255$ matrix. Since the code chain $\mathcal{C}_0 \supset \mathcal{C}_1 \supset \mathcal{C}_2 \supset \mathcal{C}_3$ is a full rank code chain, the $84$ rows of $\bG_{\text{local}}(\bg_0, \bg_1, \bg_2, \bg_3)$ are linearly independent. The row space of $\bG_{\text{local}}(\bg_0, \bg_1, \bg_2, \bg_3)$ gives the $(4,5)$-PCS-local subcode $\mathcal{C}_{\text{local}}$ of $\mathcal{C}_0$ which is a $(255,84)$ code with local minimum distance lower bounded by the designed minimum distance $43$ of the mother code $\mathcal{C}_0$ in the code chain. The code $\mathcal{C}_{\text{local}}$ is the direct-sum of its $4$ constituent $(1,5)$-PCS-local subcodes $\mathcal{C}_{0,\text{local}}, \mathcal{C}_{1,\text{local}}, \mathcal{C}_{2,\text{local}}, \mathcal{C}_{3,\text{local}}$ of the $4$ codes in the code chain, each is a $(255,21)$ code.

Subchains of the $5$-fold code chain $\mathcal{C}_0 \supset \mathcal{C}_1 \supset \mathcal{C}_2 \supset \mathcal{C}_3$ can be used to construct localized subcodes of the mother code in the code chain. Suppose we use the subchain $\mathcal{C}_0 \supset \mathcal{C}_2 \supset \mathcal{C}_3$ of the code chain $\mathcal{C}_0 \supset \mathcal{C}_1 \supset \mathcal{C}_2 \supset \mathcal{C}_3$ for code construction. Then, the $(3,5)$-PCS-local subcode of the mother code $\mathcal{C}_0$ is a $(255,63)$ code with local minimum distance at least $43$.

Suppose we use the subchain $\mathcal{C}_1 \supset \mathcal{C}_2 \supset \mathcal{C}_3$ of the code chain $\mathcal{C}_0 \supset \mathcal{C}_1 \supset \mathcal{C}_2 \supset \mathcal{C}_3$ for code construction. Then, the mother code of the subchain is the $(255,107)$ BCH code $\mathcal{C}_1$. Set the generator multipliers for the subchain to $\mathbf{f}^*_1(X) = 1$, $\mathbf{f}^*_2(X) = \mathbf{f}_2(X)$ and $\mathbf{f}^*_3(X) = \mathbf{f}_3(X)$. In this case, the $5$-fold GMD-matrix $\mathbf{f}_3(X)$ of the subchain $\mathcal{C}_1 \supset \mathcal{C}_2 \supset \mathcal{C}_3$ formed by the $5$-fold decompositions of the generator multipliers $\mathbf{f}^*_1(X), \mathbf{f}^*_2(X), \mathbf{f}^*_3(X)$ is a $3 \times 5$ matrix,
\[
\bF(X) = \begin{bmatrix}
1 & 0 & 0 & 0 & 0 \\
1 + X^2 & 1 + X & 1 & X & 0 \\
1 & 1 + X^2 & 1 & 1 + X + X^2 & X + X^2
\end{bmatrix},
\]
which is a full rank matrix. The $(3,5)$-PCS-local subcode of $\mathcal{C}_1$ is a $(255,63)$ code with local minimum distance at least the minimum distance $45$ of the $(255,107)$ BCH mother code $\mathcal{C}_1$ in the subchain $\mathcal{C}_1 \supset \mathcal{C}_2 \supset \mathcal{C}_3$.

Suppose we reset the shifting factor $t$ to $2$ and the shifting span $b = (115 - 1)/2 = 57$. Using $\mathcal{C}_0$ alone, we can construct a $(1,2)$-PCS-local subcode of $\mathcal{C}_0$ which is a $(255,58)$ code with dimension $56$ and local minimum distance at least $43$.
\end{example}

\section{Construction of Convolutional Codes Based on Full Rank Cyclic Code Chains}\label{sect4:convol_from_cyclic_code}

In this section, we show that using a $t$-fold full rank code chain $\mathcal{C}_0 \supset \mathcal{C}_1 \supset \cdots \supset \mathcal{C}_{r-1}$ of length $r$ with $r < t$, a rate-$r/t$ convolutional code can be constructed based on the $t$-fold decompositions of the generator polynomials $\bg_0(X), \bg_1(X), \ldots, \bg_{r-1}(X)$ of the $r$ constituent codes in the code chain. The rate-$r/t$ convolutional code, denoted by $\mathcal{C}_{\text{convol}}(\bg_0, \bg_1, \ldots, \bg_{r-1})$, consists of an infinite number of copies of an  identical local code. Each local code of $\mathcal{C}_{\text{convol}}(\bg_0, \bg_1, \ldots, \bg_{r-1})$ is the $(r, t)$-PCS-local subcode $\mathcal{C}_{\text{local}}$ of the mother code $\mathcal{C}_0$ in the code chain. The local minimum distance of $\mathcal{C}_{\text{convol}}(\bg_0, \bg_1, \ldots, \bg_{r-1})$ is lower bounded by the minimum distance $d_0$ of the mother code $\mathcal{C}_0$ in the code chain and the overall constraint length $\nu$ of the convolutional code $\mathcal{C}_{\text{convol}}(\bg_0, \bg_1, \ldots, \bg_{r-1})$ is upper bounded by
\[
\left\lceil \frac{n - k_0}{t} \right\rceil + \left\lceil \frac{n - k_1}{t} \right\rceil + \cdots + \left\lceil \frac{n - k_{r-1}}{t} \right\rceil.
\]
The convolutional code $\mathcal{C}_{\text{convol}}(\bg_0, \bg_1, \ldots, \bg_{r-1})$ has a multi-layer structure and is the direct-sum of $r$ rate-$1/t$ convolutional codes which are constructed based on the $r$ cyclic codes in the code chain $\mathcal{C}_0 \supset \mathcal{C}_1 \supset \cdots \supset \mathcal{C}_{r-1}$ separately. The convolutional code $\mathcal{C}_{\text{convol}}(\bg_0, \bg_1, \ldots, \bg_{r-1})$ inherits the structure, algebraic or geometric, of the mother code in the code chain.

\subsection{Code Construction in Time Domain} \label{sect4a:code_construct_in_time_domain}

The code construction starts with a $t$-fold full rank code chain $\mathcal{C}_0 \supset \mathcal{C}_1 \supset \cdots \supset \mathcal{C}_{r-1}$. Using the generator polynomials of the code chain, we form the belt generator matrix $\bB(\bg_0, \bg_1, \ldots, \bg_{r-1})$ of width $r$ in the form of (\ref{eq:B_matrix}), comprised of the generator sequences of the $r$ cyclic codes in the code chain. Next, we form the following semi-infinite matrix by shifting the belt generator matrix $\bB(\bg_0, \bg_1, \ldots, \bg_{r-1})$ to the right and down, $t$ positions for each right shift and $r$ positions for each down shift:

\begin{equation}\label{mat:G_convol}  
\bG_{\convol}(\bg_0, \bg_1, \ldots , \bg_{r-1}) = \left[ \begin{array}{l}
\bB(\bg_0, \bg_1, \ldots , \bg_{r-1}) \\
~~~~~~~~\bB(\bg_0, \bg_1, \ldots , \bg_{r-1})\\
~~~~~~~~~~~~~~~~ \vdots \\
~~~~~~~~~~~~~~~~~~~~~~~~\bB(\bg_0, \bg_1, \ldots , \bg_{r-1})\\
~~~~~~~~~~~~~~~~~~~~~~~~~~~~~~~~\vdots \\
\end{array} \right]
\end{equation}

\noindent where the blank areas are all zeros. The matrix $\bG_{\text{convol}}(\bg_0, \bg_1, \ldots, \bg_{r-1})$ is composed of an infinite number of copies of row-bends. Each row-bend contains a copy of the belt generator matrix $\bB(\bg_0, \bg_1, \ldots, \bg_{r-1})$. For $\ell \geq 0$, the belt generator matrix $\bB(\bg_0, \bg_1, \ldots, \bg_{r-1})$ in the $\ell$-th row-bend of $\bG_{\text{convol}}(\bg_0, \bg_1, \ldots, \bg_{r-1})$ begins at the location $\ell t$ and covers a span of $n - b t$ positions. All copies of the belt generator matrix $\bB(\bg_0, \bg_1, \ldots, \bg_{r-1})$ are confined to a diagonal band of width $n - b t = n - k_{r-1} + 1 \cred{ + c}$.

The columns of $\bB(\bg_0, \bg_1, \ldots, \bg_{r-1})$ can be divided into $\lceil \frac{n - k_{r-1} + 1 \cred{ + c}}{t} \rceil$ blocks. If $n - k_{r-1} + 1 \cred{ + c}$ is divisible by $t$, each block consists of $t$ consecutive columns of $\bB(\bg_0, \bg_1, \ldots, \bg_{r-1})$. If $n - k_{r-1} + 1 \cred{ + c}$ is not divisible by $t$, each of the first $\lceil \frac{n - k_{r-1} + 1 \cred{ + c}}{t} \rceil - 1$ blocks consists of $t$ consecutive columns, and the last block consists of $\tau$ columns, where $\tau$ is the remainder resulting from dividing $n - k_{r-1} + 1 \cred{ + c}$ by $t$.

From (\ref{mat:G_convol}), the structure of the belt generator matrix $\bB(\bg_0, \bg_1, \ldots, \bg_{r-1})$ given by (\ref{eq:B_matrix}), and the full rank structure of the code chain $\mathcal{C}_0 \supset \mathcal{C}_1 \supset \cdots \supset \mathcal{C}_{r-1}$, we see that the rows of $\bG_{\text{convol}}(\bg_0, \bg_1, \ldots, \bg_{r-1})$ are linearly independent. From (\ref{eq:M_matrix}), (\ref{eq:B_matrix}), (\ref{eq:G_local_matrix}), and (\ref{mat:G_convol}), we see that any $(b+1)r \times n$ submatrix of $\bG_{\text{convol}}(\bg_0, \bg_1, \ldots, \bg_{r-1})$ which contains $b+1$ consecutive copies of $\bB(\bg_0, \bg_1, \ldots, \bg_{r-1})$ lying on the main diagonal is identical to the generator matrix $\bG_{\text{local}}(\bg_0, \bg_1, \ldots, \bg_{r-1})$ of the $(r, t)$-PCS-local subcode $\mathcal{C}_{\text{local}}$ of the mother code $\mathcal{C}_0$ in the $t$-fold full rank code chain. Hence, $\bG_{\text{convol}}(\bg_0, \bg_1, \ldots, \bg_{r-1})$ is composed of an infinite number of copies of $\bG_{\text{local}}(\bg_0, \bg_1, \ldots, \bg_{r-1})$ confined to a diagonal band of width $n$. Therefore, $\bG_{\text{local}}(\bg_0, \bg_1, \ldots, \bg_{r-1})$ is a \hl{local submatrix} of $\bG_{\text{convol}}(\bg_0, \bg_1, \ldots, \bg_{r-1})$.

From the diagonal band structure of $\bG_{\text{convol}}(\bg_0, \bg_1, \ldots, \bg_{r-1})$, we see that simultaneously shifting a copy of the local submatrix $\bG_{\text{local}}(\bg_0, \bg_1, \ldots, \bg_{r-1})$ of $\bG_{\text{convol}}(\bg_0, \bg_1, \ldots, \bg_{r-1})$ $t$ positions to the right and $r$ positions down gives another copy of $\bG_{\text{local}}(\bg_0, \bg_1, \ldots, \bg_{r-1})$. This shifting is called $(r, t)$-diagonal sliding. A submatrix of size $(b+1)r \times n$ that covers a local submatrix $\bG_{\text{local}}(\bg_0, \bg_1, \ldots, \bg_{r-1})$ of $\bG_{\text{convol}}(\bg_0, \bg_1, \ldots, \bg_{r-1})$ is called an $(r, t)$-local window. Sliding the first $(r, t)$-local window (the top left corner of $\bG_{\text{convol}}(\bg_0, \bg_1, \ldots, \bg_{r-1})$) diagonally and indefinitely covers all local submatrices of $\bG_{\text{convol}}(\bg_0, \bg_1, \ldots, \bg_{r-1})$. Hence, the matrix $\bG_{\text{convol}}(\bg_0, \bg_1, \ldots, \bg_{r-1})$ is obtained by diagonally sliding the $(r, t)$-local submatrix $\bG_{\text{local}}(\bg_0, \bg_1, \ldots, \bg_{r-1})$ of $\mathcal{C}_{\text{local}}$ indefinitely.

From the sliding window coverage point of view, we could treat the matrix $\bG_{\text{convol}}(\bg_0, \bg_1, \ldots, \bg_{r-1})$ as a global sliding expansion of the generator matrix $\bG_{\text{local}}(\bg_0, \bg_1, \ldots, \bg_{r-1})$ of the $(r, t)$-PCS-local subcode $\mathcal{C}_{\text{local}}$ of $\mathcal{C}_0$. For any nonnegative integer $\ell$, $\ell$ consecutive local windows cover $\ell$ copies of $\bG_{\text{local}}(\bg_0, \bg_1, \ldots, \bg_{r-1})$. Such a group of $\ell$ consecutive local windows is called a \hl{regional window} of multiplicity $\ell$. The matrix covered by a regional window of multiplicity $\ell$ is called a \hl{regional submatrix} $\bG_{\text{regional}}(\ell, \bg_0, \bg_1, \ldots, \bg_{r-1})$ of $\bG_{\text{convol}}(\bg_0, \bg_1, \ldots, \bg_{r-1})$ of multiplicity $\ell$. Two regional submatrices of $\bG_{\text{convol}}(\bg_0, \bg_1, \ldots, \bg_{r-1})$ of the same multiplicity are identical. Thus, based on its structure, we can view the matrix $\bG_{\text{convol}}(\bg_0, \bg_1, \ldots, \bg_{r-1})$ \hl{locally, regionally, or globally}.

The row space of $\bG_{\text{convol}}(\bg_0, \bg_1, \ldots, \bg_{r-1})$ over $\mathrm{GF}(2)$ gives a \hl{binary rate-$r/t$ nonsystematic time-invariant convolutional code} $\mathcal{C}_{\text{convol}}$. It follows from its structure that $\mathcal{C}_{\text{convol}}$ contains an infinite number of copies of the $(r, t)$-PCS-local subcode $\mathcal{C}_{\text{local}}$ of the mother code $\mathcal{C}_0$ in the $t$-fold full rank code chain as local codes. Hence, $\mathcal{C}_{\text{local}}$ is regarded as a local code of the convolutional code $\mathcal{C}_{\text{convol}}$, and the minimum distance $d_{\text{local}}$ of $\mathcal{C}_{\text{local}}$ is called the \hl{local minimum distance of $\mathcal{C}_{\text{convol}}$}. The row space of a regional submatrix $\bG_{\text{regional}}(\ell, \bg_0, \bg_1, \ldots, \bg_{r-1})$ of multiplicity $\ell$ of $\bG_{\text{convol}}(\bg_0, \bg_1, \ldots, \bg_{r-1})$ gives a regional subcode $\mathcal{C}_{\text{regional}}(\ell)$ of $\mathcal{C}_{\text{convol}}$. Clearly,

\begin{equation}
\lim_{\ell \to \infty} \mathcal{C}_{\text{regional}}(\ell) = \mathcal{C}_{\text{convol}}.
\end{equation}
Hence, we can view the convolutional code $\mathcal{C}_{\text{convol}}$ locally, regionally, or globally. We call the code chain $\mathcal{C}_0 \supset \mathcal{C}_1 \supset \cdots \supset \mathcal{C}_{r-1}$ the generator code chain of the convolutional code $\mathcal{C}_{\text{convol}}$.

From (\ref{mat:G_convol}) and the structure of the belt generator matrix $\bB(\bg_0, \bg_1, \ldots, \bg_{r-1})$, we see that the generator matrix $\bG_{\text{convol}}(\bg_0, \bg_1, \ldots, \bg_{r-1})$ of the convolutional code $\mathcal{C}_{\text{convol}}$ consists of $r$ layers. For $0 \leq i < r$, the $i$-th layer of $\bG_{\text{convol}}(\bg_0, \bg_1, \ldots, \bg_{r-1})$, denoted by $\bG_{\text{convol}}(g_i)$, consists of the generator sequence $\bg_i$ of the $i$-th constituent code $\mathcal{C}_i$ in the code chain $\mathcal{C}_1 \supset \cdots \supset \mathcal{C}_{r-1}$ and its $t$-position shifts to the right. The $i$-th layer $\bG_{\text{convol}}(\bg_i)$ of $\bG_{\text{convol}}(\bg_0, \bg_1, \ldots, \bg_{r-1})$ generates a rate-$1/t$ convolutional code $\mathcal{C}_{\text{convol}}(\bg_i)$. Since the code chain $\mathcal{C}_1 \supset \cdots \supset \mathcal{C}_{r-1}$ is a full rank code chain, the rate-$r/t$ convolutional code $\mathcal{C}_{\text{convol}}$ is the direct-sum of the $r$ rate-$1/t$ convolutional codes $\mathcal{C}_{\text{convol}}(\bg_0), \mathcal{C}_{\text{convol}}(\bg_1), \ldots, \mathcal{C}_{\text{convol}}(\bg_{r-1})$ generated by the $r$ layers $\bG_{\text{convol}}(\bg_0), \bG_{\text{convol}}(\bg_1), \ldots, \bG_{\text{convol}}(\bg_{r-1})$ of the generator matrix $\bG_{\text{convol}}(\bg_0, \bg_1, \ldots, \bg_{r-1})$ of the convolutional code $\mathcal{C}_{\text{convol}}$. We call $\mathcal{C}_{\text{convol}}(\bg_0), \mathcal{C}_{\text{convol}}(\bg_1), \ldots, \mathcal{C}_{\text{convol}}(\bg_{r-1})$ the constituent codes of the rate-$r/t$ convolutional code $\mathcal{C}_{\text{convol}}$. Note that for $0 \leq i < r$, there is one-to-one correspondence between the $i$-th constituent code $\mathcal{C}_{\text{convol}}(\bg_i)$ of $\mathcal{C}_{\text{convol}}$ and the $(1, t)$-PCS-local subcode $\mathcal{C}_{i,\text{local}}$ of the $i$-th constituent code $\mathcal{C}_i$ in the code chain $\mathcal{C}_0 \supset \mathcal{C}_1 \supset \cdots \supset \mathcal{C}_{r-1}$.

The local structure of $\mathcal{C}_{\text{convol}}$ allows it to be decoded based on a designed parity-check matrix $\bH_0$ of the mother code $\mathcal{C}_0$ in the code chain using an SWSC decoding scheme (see Section \ref{sect4e:sliding_window_decoding}). If $\bH_0$ is a low-density matrix whose associated Tanner graph has girth at least $6$, the local codes of $\mathcal{C}_{\text{convol}}$ can be decoded based on $\bH_0$ using a soft-decision iterative decoding algorithm based on belief propagation~\cite{gallager1962,mackay1999}, one at a time in a sliding window manner. Cyclic codes with low-density parity-check matrices include finite geometry codes~\cite{lin2004,lin2022,ryan2025} and some doubly transitive invariant codes~\cite{lin2004}, which will be presented in Sections \ref{sect7:convol_from_PaG} and \ref{sect8:convol_from_DTI}.

\subsection{Encoding in Time Domain}

Let $\mathbf{u}$ be a semi-infinite information sequence to be encoded. The sequence $\mathbf{u}$ is segmented into ordered message blocks, each consisting of $r$ information bits, as follows:
\[
\mathbf{u} = (\mathbf{u}_0, \mathbf{u}_1, \ldots, \mathbf{u}_\ell, \ldots)
\]
where for $\ell \geq 0$, the $\ell$-th message block is
\[
\mathbf{u}_\ell = (u_\ell^{(0)}, u_\ell^{(1)}, \ldots, u_\ell^{(r-1)}).
\]

Let $D$ be a shift operator which shifts a sequence one bit-position to the right. For any $j \geq 0$, let $D^j$ denote shifting a sequence $j$ bit-positions to the right. For $\ell \geq 0$, the generator sequence $\mathbf{g}_i$ of the $i$-th cyclic code $\mathcal{C}_i$ in the $\ell$-th row-bend of $\bG_{\text{convol}}(\bg_0, \bg_1, \ldots, \bg_{r-1})$ is denoted by $D^{\ell t}\mathbf{g}_i$.

Then, the code sequence $\mathbf{v}$ for the information sequence $\mathbf{u}$ is given by
\begin{equation} \label{eqn:convol_encode_in_time_domain}
\begin{array}{lll}
\bv & = & \bu \bG_{\convol}(\bg_0, \bg_1, \ldots , \bg_{r-1}) \\
& = & \bu_0 \bB(\bg_0, \bg_1, \ldots , \bg_{r-1}) + \bu_1 D^t \bB(\bg_0, \bg_1, \ldots , \bg_{r-1}) +  \ldots + \bu_{\ell} D^{\ell t} \bB(\bg_0, \bg_1, \ldots , \bg_{r-1}) + \ldots \\
& = & (u_0^{(0)} \bg_0 + u_0^{(1)} \bg_1 + \ldots + u_0^{(r-1)} \bg_{r-1}) +   (u_1^{(0)} D^t \bg_0 + u_1^{(1)} D^t \bg_1 + \ldots + u_1^{(r-1)} D^t \bg_{r-1}) +  \ldots \\
& ~ & + (u_{\ell}^{(0)} D^{\ell t} \bg_0 + u_{\ell}^{(1)} D^{\ell t} \bg_1 + \ldots + u_{\ell}^{(r-1)} D^{\ell t} \bg_{r-1}) +  \ldots 
\end{array}
\end{equation}

The code sequence $\mathbf{v}$ is a sequence of ordered code blocks,
\[
\mathbf{v} = (\mathbf{v}_0, \mathbf{v}_1, \ldots, \mathbf{v}_\ell, \ldots),
\]
each consisting of $t$ code bits, and for $\ell \geq 0$, the $\ell$-th code block is
\[
\mathbf{v}_\ell = (v_\ell^{(0)}, v_\ell^{(1)}, \ldots, v_\ell^{(t-1)}).
\]

Every unit of time, a message block is shifted into the encoder and a code block is produced. The encoder has $r$ inputs and $t$ outputs. At time unit $\ell$, the $\ell$-th code block $\mathbf{v}_\ell$ is formed based on the message block $\mathbf{u}_\ell$ in the current time unit $\ell$ and the message blocks in the preceding $\left\lceil \frac{n - k_{r-1} + 1 \cred{ + c}}{t} \right\rceil - 1$ time units, $\ell - 1, \ell - 2, \ldots, \ell - \left\lceil \frac{n - k_{r-1} + 1 \cred{ + c}}{t} \right\rceil + 1$. For $\ell < i$, the $r$ information symbols in $\mathbf{u}_{\ell-i}$ are set to zero, i.e., $\mathbf{u}_{\ell-i} = \mathbf{0}$. Hence, the encoder needs a memory to store $\left\lceil \frac{n - k_{r-1} + 1 \cred{ + c}}{t} \right\rceil - 1$ previously encoded message blocks. Later in this section, we show that the number $\nu$ of information bits that must be stored is upper bounded by
\[
\nu \leq \left\lceil \frac{n - k_0}{t} \right\rceil + \left\lceil \frac{n - k_1}{t} \right\rceil + \cdots + \left\lceil \frac{n - k_{r-1}}{t} \right\rceil.
\]
The number $\nu$ is referred to as the \hl{overall constraint length} (simply constraint length) of the rate-$r/t$ convolutional code $\mathcal{C}_{\text{convol}}$, and we call $\mathcal{C}_{\text{convol}}($ a rate-$r/t$ $(t, r, \nu)$ convolutional code. From (\ref{eqn:convol_encode_in_time_domain}) and the structure of $\bB(\bg_0, \bg_1, \ldots, \bg_{r-1})$, we can see that the message block $\mathbf{u}_\ell$ in the $\ell$-th time unit can be uniquely retrieved from the code block $\mathbf{v}_\ell$ in the $\ell$-th time unit and the message blocks in the $\left\lceil \frac{n - k_{r-1} + 1}{t} \right\rceil - 1$ preceding time units.

Also, from (\ref{eq:B_matrix}), (\ref{eq:G_local_matrix}), and (\ref{eqn:convol_encode_in_time_domain}), we see that, for $\ell \geq 0$, the sequence formed by $\mathbf{u}_\ell D^{\ell t} \bB(\bg_0, \bg_1, \ldots, \bg_{r-1})$ is a codeword in $\mathcal{C}_0$ with $\ell t$ zeros preceding it and a semi-infinite string of zeros following it. Hence, a code sequence in $\mathcal{C}_{\text{convol}}$ is a linear combination of $t$-bit shifted codewords in $\mathcal{C}_0$. From this perspective, the convolutional code $\mathcal{C}_{\text{convol}}$ may be regarded as a global sliding combination of $r$ cyclic codes in the code chain $\mathcal{C}_0 \supset \mathcal{C}_1 \supset \cdots \supset \mathcal{C}_{r-1}$.

In general, we call a convolutional code $\mathcal{C}_{\text{convol}}$ constructed based on a chain of cyclic codes a cyclic-code-based (CC-based) convolutional code. If the mother code $\mathcal{C}_0$ in the code chain is a BCH code, we call $\mathcal{C}_{\text{convol}}$ a BCH convolutional code. If $\mathcal{C}_0$ is a cyclic finite geometry (FG) LDPC code, we call $\mathcal{C}_{\text{convol}}$ an FG-LDPC convolutional code. CC-based convolutional codes possess algebraic or geometric structure.

\subsection{Code Construction in Transform Domain}

The encoding operation given in (\ref{eqn:convol_encode_in_time_domain}) is performed in the time domain, which requires convolution operations. The convolutional code $\mathcal{C}_{\mathrm{convol}}$ can also be encoded in the transform domain, where the convolution operations are replaced by polynomial multiplications~\cite{lin2004}.

In the polynomial representation, the information sequence
\[
\bu = (\bu_0, \bu_1, \bu_2, \ldots, \bu_\ell, \ldots)
\]
is first demultiplexed into $r$ subsequences $\bu^{(0)}, \bu^{(1)}, \ldots, \bu^{(r-1)}$, where for $0 \leq i < r$,
\begin{equation}
\bu^{(i)} = \left( u_i,\, u_{r+i},\, \ldots,\, u_{\ell r + i} + \ldots \right).
\label{eq:demux}
\end{equation}
Next, we form the following polynomial using the components of $\bu^{(i)}$ as the coefficients:
\begin{equation}
\bu^{(i)}(X) = u_i + u_{r+i} X^{r} + \ldots + u_{\ell r+i} X^{\ell r} + \ldots
\label{eq:poly}
\end{equation}
Then, the entire information polynomial $\bu(X)$ is obtained by multiplexing the $r$ polynomials $\bu^{(0)}(X),\, X \bu^{(1)}(X),\, \ldots,\, X^{r-1} \bu^{(r-1)}(X)$, i.e.,
\begin{equation}
\bu(X) = \bu^{(0)}(X) + X \bu^{(1)}(X) + \ldots + X^{r-1} \bu^{(r-1)}(X).
\label{eq:mux}
\end{equation}

We call the $r$ polynomials $\bu^{(0)}(X), \bu^{(1)}(X), \ldots, \bu^{(r-1)}(X)$ the \emph{decomposed information polynomials} of $\bu(X)$. For encoding, $\bu^{(0)}(X), \bu^{(1)}(X), \ldots, \bu^{(r-1)}(X)$ are applied to the $r$ input terminals of an encoder to produce a code polynomial $\bv(X)$.

For $0 \leq i < r$, let $c_i = \lfloor \frac{n - k_i + 1}{t} \rfloor$. Express the generator polynomial $\bg_i(X)$ of the $i$-th cyclic code $\mathcal{C}_i$ in the code chain $\mathcal{C}_0 \supset \mathcal{C}_1 \supset \cdots \supset \mathcal{C}_{r-1}$ in the following form:
\begin{equation}
\bg_i(X) = \bg_i^{(0)}(X^t) + X\, \bg_i^{(1)}(X^t) + \cdots + X^{t-1} \bg_i^{(t-1)}(X^t).
\label{eq:genpoly}
\end{equation}
where for $0 \leq j < t$,
\begin{equation}
\bg_i^{(j)}(X) = g_{i,j} + g_{i,t+j} X + g_{i,2t+j} X^2 + \cdots + g_{i,c_i t + j} X^{c_i}.
\label{eq:genpolydecomp}
\end{equation}
Note that if $c_i t > n - k_i$, then $g_{i,c_i t + j} = 0$. The degree of $\bg_i^{(j)}(X)$ is upper bounded by $\lceil \frac{n - k_{r-1}}{t} \rceil$. The $t$ polynomials $\bg_i^{(0)}(X), \bg_i^{(1)}(X), \ldots, \bg_i^{(t-1)}(X)$ are said to form a \hl{$t$-fold decomposition of $\bg_i(X)$}. Conversely, $\bg_i(X)$ is referred to as the \hl{$t$-fold composition} of the polynomials $\bg_i^{(0)}(X), \bg_i^{(1)}(X), \ldots, \bg_i^{(t-1)}(X)$.

For $0 \leq j < t$, the following polynomial
\begin{equation}
\bv^{(j)}(X) = \bu^{(0)}(X)\, \bg_0^{(j)}(X) + \bu^{(1)}(X)\, \bg_1^{(j)}(X) + \cdots + \bu^{(r-1)}(X)\, \bg_{r-1}^{(j)}(X),
\label{eq:codepoly}
\end{equation}
is the code polynomial at the $j$-th output terminal of the encoder for the rate-$r/t$ convolutional code $\mathcal{C}_{\mathrm{convol}}$. The overall code polynomial $\bv(X)$ corresponding to the information polynomial $\bu(X)$ is obtained by multiplexing the $t$ code polynomials $\bv^{(0)}(X), \bv^{(1)}(X), \ldots, \bv^{(t-1)}(X)$. After multiplexing,
\begin{equation}
\bv(X) = \bv^{(0)}(X^t) + X\, \bv^{(1)}(X^t) + \cdots + X^{t-1} \bv^{(t-1)}(X^t).
\label{eq:codepolyfinal}
\end{equation}

The entire transform domain encoding operation can be put into matrix form~\cite{lin2004}. Using the $rt$ $t$-fold decompositions of the generator polynomials $\bg_0(X), \bg_1(X), \ldots, \bg_{r-1}(X)$ of the $r$ cyclic codes in the code-chain $\mathcal{C}_0 \supset \mathcal{C}_1 \supset \cdots \supset \mathcal{C}_{r-1}$, we form the following $r \times t$ transform domain generator matrix for the rate-$r/t$ convolutional code $\mathcal{C}_{\mathrm{convol}}$:
\begin{equation}
\bG_{\mathrm{convol}}(X) =
\begin{bmatrix}
\bg_0^{(0)}(X) & \bg_0^{(1)}(X) & \cdots & \bg_0^{(t-1)}(X) \\
\bg_1^{(0)}(X) & \bg_1^{(1)}(X) & \cdots & \bg_1^{(t-1)}(X) \\
\vdots       & \vdots       & \ddots & \vdots         \\
\bg_{r-1}^{(0)}(X) & \bg_{r-1}^{(1)}(X) & \cdots & \bg_{r-1}^{(t-1)}(X)
\end{bmatrix}.
\label{eq:genmatrix}
\end{equation}
Let $\bU(X) = [\bu^{(0)}(X), \bu^{(1)}(X), \ldots, \bu^{(r-1)}(X)]$ be the $r$-tuple of input information sequences (polynomials). Then,
\begin{equation}
\bV(X) = \bU(X)\, \bG_{\mathrm{convol}}(X)
= [\bv^{(0)}(X), \bv^{(1)}(X), \ldots, \bv^{(t-1)}(X)]
\label{eq:outputtuple}
\end{equation}
is the $t$-tuple of output code sequences (polynomials).

For $0 \leq i < r$, let $\nu_i$ be the largest degree among the $t$ decomposed polynomials $\bg_i^{(0)}(X), \bg_i^{(1)}(X), \ldots, \bg_i^{(t-1)}(X)$ of $\bg_i(X)$, i.e.,
\begin{equation}
\nu_i = \max_{0 \leq j < t} \left\{ \deg\, \bg_i^{(j)}(X) \right\} \leq \left\lceil \frac{n - k_i}{t} \right\rceil.
\label{eq:memoryorder}
\end{equation}
Then, the overall constraint length of the convolutional code $\mathcal{C}_{\mathrm{convol}}$ is
\begin{equation}
\nu = \nu_0 + \nu_1 + \cdots + \nu_{r-1}.
\label{eq:constraintlength}
\end{equation}
Hence, $\mathcal{C}_{\mathrm{convol}}$ is a rate-$r/t$ $(t, r, \nu)$ convolutional code with overall constraint length $\nu$ and local minimum distance $d_{\mathrm{local}}$ lower bounded by the minimum distance $d_0$ of the mother code $\mathcal{C}_0$ in the code chain $\mathcal{C}_0 \supset \mathcal{C}_1 \supset \cdots \supset \mathcal{C}_{r-1}$. The parameter $\nu_i$ is referred to as the memory order of the $i$-th input information sequence $\bu^{(i)}(X)$. For $r = 1$ and $t > 1$, $\nu$ is the memory order of a rate-$1/t$ $(t, 1, \nu)$ convolutional code.

If the shifting factor $t$ is a power of 2, say $t = 2^\ell$, then the generator polynomial $\bg_i(X)$ of the $i$-th cyclic code in the code-chain $\mathcal{C}_0 \supset \mathcal{C}_1 \supset \cdots \supset \mathcal{C}_{r-1}$ given by~\eqref{eq:genpoly} can be put in the following $2^\ell$-fold composition form:
\begin{align}
\bg_i(X) &= g_i^{(0)}\left(X^{2^\ell}\right) + X\, \bg_i^{(1)}\left(X^{2^\ell}\right) + \cdots
+ X^{2^\ell - 1} \bg_i^{(2^\ell - 1)}\left(X^{2^\ell}\right) \nonumber \\
&= \left(\bg_i^{(0)}(X)\right)^{2^\ell} + X\, \left(\bg_i^{(1)}(X)\right)^{2^\ell} + \cdots
+ X^{2^\ell - 1} \left(\bg_i^{(2^\ell - 1)}(X)\right)^{2^\ell}.
\label{eq:power2decomp}
\end{align}

For $r = 1$ and $t = 2^\ell$, the $2^\ell$ generator polynomials $\bg_0^{(0)}(X), \bg_0^{(1)}(X), \ldots, \bg_0^{(2^\ell - 1)}(X)$ of the rate-$1/2^\ell$ convolutional code $\mathcal{C}_{\mathrm{convol}}(\bg_0)$ are relatively prime, and their greatest common divisor (GCD) is $1$, i.e.,
\[
\GCD\left\{\, \bg_0^{(j)}(X) : 0 \leq j \leq 2^\ell - 1 \,\right\} = 1.
\]
To show this, suppose they are not relatively prime and their GCD is a nonzero degree polynomial. Let $\bb(X)$ be the GCD of the $2^\ell$ generator polynomials $\bg_0^{(0)}(X), \bg_0^{(1)}(X), \ldots, \bg_0^{(2^\ell - 1)}(X)$ of $\mathcal{C}_{\mathrm{convol}}(\bg_0)$. Since $\bb(X)$ is binary,
\(
\bb\left( X^{2^\ell} \right) = \left( \bb(X) \right)^{2^\ell}.
\)
It follows from~\eqref{eq:power2decomp} that $\bb\left( X^{2^\ell} \right)$ divides the generator polynomial $\bg_0(X)$ of the cyclic code $\mathcal{C}_0$. Therefore, $\bb\left( X^{2^\ell} \right)$ divides $X^n + 1$. However, this contradicts the fact that $X^n + 1$ is square-free (power-2-free) for odd $n$. Hence, all the $2^\ell$ generator polynomials of the rate-$1/2^\ell$ convolutional code $\mathcal{C}_{\mathrm{convol}}(\bg_0)$ must be relatively prime and their GCD must be $1$. This relative-primality property ensures the encoder of the rate-$1/2^\ell$ convolutional code $\mathcal{C}_{\mathrm{convol}}(\bg_0)$ is noncatastrophic~\cite{lin2004}.

\subsection*{Example 3}
\begin{example} \label{eg3}
In Example \ref{eg1}, we constructed a 4-fold full rank code chain $\mathcal{C}_0 \supset \mathcal{C}_1 \supset \mathcal{C}_2$ of length $r = 3$ in which the mother code $\mathcal{C}_0$ is the $(63, 36)$ BCH code with dimension $k_0 = 36$ and minimum distance $11$. The two descendant codes of the mother code $\mathcal{C}_0$ are $(63, 35)$ and $(63, 33)$ codes with dimensions $k_1 = 35$ and $k_2 = 33$, respectively. With shifting factor $t = 4$ and shifting span $b = 8$, we showed that the $(3, 4)$-PCS-local subcode of the mother code $\mathcal{C}_0$ in the code chain is a $(63, 27)$ code with local minimum distance at least $11$.

In the following, we use the code chain $\mathcal{C}_0 \supset \mathcal{C}_1 \supset \mathcal{C}_2$ and its two subchains to construct 3 convolutional codes with rates $3/4$, $1/2$, and $1/4$. We start with the construction of the rate-$3/4$ convolutional code $\mathcal{C}_{\mathrm{convol}}$ using the code chain $\mathcal{C}_0 \supset \mathcal{C}_1 \supset \mathcal{C}_2$. To construct the generator matrix $G_{\mathrm{convol}}(g_0, g_1, g_2)$ of $\mathcal{C}_{\mathrm{convol}}(\bg_0, \bg_1, \bg_2)$, we first put the generator polynomials of three codes in the code chain in the 4-fold composition form:
{\small
\begin{align*}
\bg_0(X) &= 1 + X + X^4 + X^8 + X^{15} + X^{17} + X^{18} + X^{19}   + X^{21} + X^{22} + X^{27} \\
       &= (1 + X^4 + X^8) + X(1 + X^{16} + X^{20}) + X^2(X^{16} + X^{20})   + X^3(X^{12} + X^{16} + X^{24}), \\
\bg_1(X) &= 1 + X^2 + X^4 + X^5 + X^8 + X^9 + X^{15} + X^{16}  + X^{17} + X^{20} + X^{21} + X^{23} + X^{27} + X^{28} \\
       &= (1 + X^4 + X^8 + X^{16} + X^{20} + X^{28}) + X(X^4 + X^8   + X^{16} + X^{20}) + X^2 + X^3(X^{12} + X^{20} + X^{24}), \\
\bg_2(X) &= 1 + X + X^3 + X^7 + X^8 + X^{11} + X^{15} + X^{17}   + X^{19} + X^{20} + X^{24} + X^{25} + X^{27} + X^{30} \\
       &= (1 + X^8 + X^{20} + X^{24}) + X(1 + X^{16} + X^{24})   + X^2(X^{28}) + X^3(1 + X^4 + X^8 + X^{12} + X^{16} + X^{24}).
\end{align*}
}
From the 4-fold composition forms above, we find their 4-fold decompositions as follows:
{\small
\begin{align*}
\bg_0^{(0)}(X) &= 1 + X + X^2, \bg_0^{(1)}(X) = 1 + X^4 + X^5, \bg_0^{(2)}(X) = X^4 + X^5, \bg_0^{(3)}(X) = X^3 + X^4 + X^6, \\
\bg_1^{(0)}(X) &= 1 + X + X^2 + X^4 + X^5 + X^7, \bg_1^{(1)}(X) = X + X^2 + X^4 + X^5, \bg_1^{(2)}(X) = X^2, \bg_1^{(3)}(X) = X^3 + X^5 + X^6, \\
\bg_2^{(0)}(X) &= 1 + X^2 + X^5 + X^6, \bg_2^{(1)}(X) = 1 + X^4 + X^6, \bg_2^{(2)}(X) = X^7, \bg_2^{(3)}(X) = 1 + X + X^2 + X^3 + X^4 + X^6.
\end{align*}
}
For $0 \leq i \leq 2$, we find that $\GCD\{\bg_i^{(j)}(X): 0 \leq j \leq 3\} = 1$.

The 4-fold decompositions above give the generator polynomials of the rate-$3/4$ convolutional code $\mathcal{C}_{\mathrm{convol}}$. The generator matrix of $\mathcal{C}_{\mathrm{convol}}$ is a $3 \times 4$ matrix:
\[
\bG_{\mathrm{convol}}(\bg_0, \bg_1, \bg_2)(X) =
\begin{bmatrix}
\bg_0^{(0)}(X) & \bg_0^{(1)}(X) & \bg_0^{(2)}(X) & \bg_0^{(3)}(X) \\
\bg_1^{(0)}(X) & \bg_1^{(1)}(X) & \bg_1^{(2)}(X) & \bg_1^{(3)}(X) \\
\bg_2^{(0)}(X) & \bg_2^{(1)}(X) & \bg_2^{(2)}(X) & \bg_2^{(3)}(X)
\end{bmatrix}.
\]

From the degree distributions of the generator polynomials, the overall constraint length of $\mathcal{C}_{\mathrm{convol}}(\bg_0, \bg_1, \bg_2)$ is $\nu = 6 + 7 + 7 = 20$. Hence, $\mathcal{C}_{\mathrm{convol}}$ is a rate-$3/4$ $(4, 3, 20)$ convolutional code of overall constraint length $20$ with minimum local distance at least $11$. Since the code chain $\mathcal{C}_0 \supset \mathcal{C}_1 \supset \mathcal{C}_2$ is a full rank code chain, $\mathcal{C}_{\mathrm{convol}}$ is the direct-sum of its three rate-$1/4$ constituent convolutional codes constructed based on the three codes $\mathcal{C}_0$, $\mathcal{C}_1$, and $\mathcal{C}_2$ in the code chain, separately. The encoder of each rate-$1/4$ constituent convolutional code of $\mathcal{C}_{\mathrm{convol}}$ is noncatastrophic.

Suppose we set the shifting factor to $t = 2$ and the shifting span $b = (k_1 - 1)/2 = (35 - 1)/2 = 17$. Using the $(63, 35)$ code $\mathcal{C}_1$ in the code chain, we can construct a rate-$1/2$ $(2, 1, 14)$ convolutional code $\mathcal{C}_{\mathrm{convol}}(\bg_1)$ of memory order $14$ with local minimum distance $12$. The 2-fold decomposition of the generator polynomial $\bg_1(X)$ of $\mathcal{C}_1$ gives the following two generator polynomials:
\begin{align*}
\bg_1^{(0)}(X) &= 1 + X + X^2 + X^4 + X^8 + X^{10} + X^{14}, \\
\bg_1^{(1)}(X) &= X^2 + X^4 + X^7 + X^8 + X^{10} + X^{11} + X^{13}.
\end{align*}
The two polynomials $\bg_1^{(0)}(X)$ and $\bg_1^{(1)}(X)$ are relatively prime. Hence, the encoder of $\mathcal{C}_{\mathrm{convol}}(\bg_1)$ is noncatastrophic.

For $t = 4$, the rate-$1/4$ convolutional code $\mathcal{C}_{\mathrm{convol}}(\bg_2)$ constructed based on the $(63, 33)$ cyclic code $\mathcal{C}_2$ in the code chain is a $(4, 1, 7)$ convolutional code of memory order $7$ with minimum local distance at least $12$. The generator polynomials of $\mathcal{C}_{\mathrm{convol}}(\bg_2)$ are formed by taking the 4-fold decomposition of the generator polynomial $\bg_2(X)$ of $\mathcal{C}_2$ as given above.
\end{example}

\subsection*{Example 4}
\begin{example} \label{eg4}
In Example \ref{eg2}, we constructed a 5-fold full rank code chain $\mathcal{C}_0 \supset \mathcal{C}_1 \supset \mathcal{C}_2 \supset \mathcal{C}_3$ of length $r = 4$ in which the mother code $\mathcal{C}_0$ is the $(255, 115)$ BCH code with designed minimum distance $d_0 = 43$. The $(4, 5)$-PCS-local subcode $\mathcal{C}_{\mathrm{local}}$ of the mother code is a $(255, 84)$ code with local minimum distance lower bounded by $43$. Using this code chain, we can construct a rate-$4/5$ convolutional code with overall constraint length upper bounded by $119$ and local minimum distance at least $43$. It is the direct-sum of its 4 rate-$1/5$ constituent convolutional codes with memory orders $28$, $30$, $30$, and $31$, respectively, and local minimum distances all lower bounded by $43$. The encoder of each rate-$1/5$ constituent convolutional code is noncatastrophic.

If we use the subchain $\mathcal{C}_1 \supset \mathcal{C}_2 \supset \mathcal{C}_3$ of the code chain with $\mathcal{C}_1$ as the mother code, we can construct a rate-$3/5$ convolutional code of overall constraint length upper bounded by $91$ and local minimum distance at least the minimum distance $45$ of $\mathcal{C}_1$. If we use the subchain $\mathcal{C}_2 \supset \mathcal{C}_3$ of the code chain, we can construct a rate-$2/5$ convolutional code with overall constraint length upper bounded by $61$ and local minimum distance at least $45$.

Suppose we set the shifting factor $t$ to $4$ and use the subchain $\mathcal{C}_0 \supset \mathcal{C}_1 \supset \mathcal{C}_2$ for construction of a rate-$3/4$ convolutional code. Set the shifting span $b$ to $\lceil \frac{k_2 - 1}{4} \rceil = \lceil \frac{105 - 1}{4} \rceil = 26$. The 4-fold GMD matrix formed by the 4-fold decompositions of the generator multipliers $\mathbf{f}_0(X) = 1$, $\mathbf{f}_1(X) = 1 + X^6 + X^7 + X^8$, and $\mathbf{f}_2(X) = 1 + X + X^2 + X^6 + X^8 + X^{10}$ (given in Example \ref{eg2}) is a $3 \times 4$ matrix:
\[
\bF(X) =
\begin{bmatrix}
1 & 0 & 0 & 0 \\
1 + X^2 & 0 & X & X \\
1 + X^2 & 1 + X^2 + X^3 & 1 + X^3 & 1 + X^2
\end{bmatrix}.
\]
The first 3 columns of $\bF(X)$ form a $3 \times 3$ submatrix with nonzero determinant. Hence, $\bF(X)$ is a full rank matrix and the subchain $\mathcal{C}_0 \supset \mathcal{C}_1 \supset \mathcal{C}_2$ is a 4-fold full rank code chain. The $(3, 4)$-PCS-local subcode of the mother code $\mathcal{C}_0$ in the subchain is a $(255, 81)$ code with minimum distance at least $43$. Using the subchain, we can construct a rate-$3/4$ convolutional code with overall constraint length upper bounded by $110$ and local minimum distance at least $43$. It is the direct-sum of its 3 rate-$1/4$ constituent convolutional codes with memory orders $35$, $37$, and $38$, respectively, and local minimum distances all lower bounded by $43$.

Suppose we set the shifting factor $t$ to $2$ and the shifting span $b$ to $(115 - 1)/2 = 57$. Using the mother code $\mathcal{C}_0$ alone, we can construct a rate-$1/2$ $(2, 1, 70)$ convolutional code of memory order $70$ with local minimum distance at least $43$. If we set the shifting factor $t$ to $3$ and the shifting span $b$ to $(115 - 1)/3 = 38$, we can construct a rate-$1/3$ $(3, 1, 47)$ convolutional code of memory order $47$ with local minimum distance at least $43$ using the mother code $\mathcal{C}_0$ alone.
\end{example}

The above example shows that by using a full rank code chain, we can construct convolutional codes of various rates and overall constraint lengths.

\subsection{Free Distance}

A common measure of the high SNR performance of a convolutional code is its free distance~\cite{lin2004}. The free distance $d_{\mathrm{free}}$ of a convolutional code $\mathcal{C}_{\mathrm{convol}}$ with generator matrix $\mathbf{G}_{\mathrm{convol}}$ is defined as
\begin{equation} \label{eqn:d_free_define}
d_{\free}  \triangleq \min_{\bu, \bu^{\prime}} \{ d(\bv, \bv^{\prime}): \bu \ne \bu^{\prime} \},
\end{equation}
where $\mathbf{v}$ and $\mathbf{v}'$ are code sequences corresponding to the information sequences $\mathbf{u}$ and $\mathbf{u}'$, respectively, and $d(\mathbf{v}, \mathbf{v}')$ is the Hamming distance between $\mathbf{v}$ and $\mathbf{v}'$. Because a convolutional code is a linear code, we have
\begin{equation} \label{eqn:d_free_define2}
\begin{array}{lll}
d_{\free} &=& \min_{\bu, \bu^{\prime}} \{ d(\bv, \bv^{\prime}): \bu \ne \bu^{\prime} \}\\
&=& \min_{\bu} \{ \omega(\bv): \bu \neq {\bf 0} \} \\
&=& \min_{\bu} \{ \omega(\bu \bG_{\convol}): \bu \neq {\bf 0} \}, \\
\end{array}
\end{equation}
where $w(\mathbf{v})$ denotes the Hamming weight of the code sequence $\mathbf{v}$. This shows that the free distance $d_{\mathrm{free}}$ of a convolutional code $\mathcal{C}_{\mathrm{convol}}$ is equal to the minimum weight of its nonzero code sequences.

In~\cite{massey1973}, Massey, Costello, and Justesen (MCJ) proved a lower bound on the free distance of a rate-$1/t$ convolutional code constructed based on a single cyclic code for even values of $t$. Let $\mathcal{C}$ be a cyclic code of odd length $n$ with minimum distance $d_{\min}$ and let $\mathcal{C}_h$ be the dual code of $\mathcal{C}$ with minimum distance $d_{h,\min}$. The authors in~\cite{massey1973} showed that the free distance $d_{\mathrm{free}}$ of the rate-$1/t$ convolutional code constructed based on $\mathcal{C}$ is lower bounded as follows:
\begin{equation}
d_{\mathrm{free}} \geq \min\{ d_{\min},\ 2 d_{h,\min} \}.
\end{equation}
We call the lower bound on $d_{\mathrm{free}}$ given above the MCJ-bound. From the MCJ-bound, we see that, for a rate-$1/t$ convolutional code with even value of $t$ to have a good free distance, $\mathcal{C}$ should be chosen such that $d_{\min}$ and $2 d_{h,\min}$ are as close as possible to each other. A good choice of $\mathcal{C}$ is a code for which $d_{\min}$ is large for a given $n$ and $d_{\min} \approx 2 d_{h,\min}$~\cite{lin2004,massey1973}.

\subsection*{Example 5}
\begin{example} \label{eg5}
Consider the $(255, 139)$ BCH code $\mathcal{C}_0$ constructed based on the field $\mathrm{GF}(2^8)$. Let $\alpha$ be a primitive element in $\mathrm{GF}(2^8)$. The generator polynomial $\bg_0(X)$ of $\mathcal{C}_0$ has $\alpha, \alpha^2, \ldots, \alpha^{30}$ and their conjugates as roots. The minimum distance of this BCH code $\mathcal{C}_0$ is lower bounded by $31$. The dual code $\mathcal{C}_h$ of $\mathcal{C}_0$ is a $(255, 116)$ code whose generator polynomial $\bg_h(X)$ has $1, \alpha, \alpha^2, \ldots, \alpha^{14}$ as roots and it does not have $\alpha^{15}$ as a root. Since $\bg_h(X)$ has $1$ as a root, all the codewords in $\mathcal{C}_h$ have even weights. The BCH bound on the minimum distance $d_h$ of $\mathcal{C}_h$ is at least $16$. (Minimum distance of the dual code of a cyclic code will be discussed in Section V.)

Set shifting factor $t = 2$ and shifting span $b = (139 - 1)/2 = 69$. The $(1,2)$-PCS-local code $\mathcal{C}_{\mathrm{local}}$ of $\mathcal{C}_0$ is a $(255, 70)$ code with local minimum distance $d_{\mathrm{local}} \geq 31$. Using $\mathcal{C}_0$, we can construct a rate-$1/2$ $(2,1,58)$ convolutional code $\mathcal{C}_{\mathrm{convol}}(\bg_0)$ of memory order $58$ with local minimum distance $d_{\mathrm{local}} \geq 31$. The MCJ-bound gives the free distance of $\mathcal{C}_{\mathrm{convol}}(\bg_0)$:
\[
d_{\mathrm{free}} \geq \min\{31, 2 \times 16\} = 31,
\]
i.e., the free distance $d_{\mathrm{free}}$ of $\mathcal{C}_{\mathrm{convol}}(\bg_0)$ is at least $31$, same as its local minimum distance $d_{\mathrm{local}}$.

Suppose we set the shifting factor $t$ to $4$ and $6$. Using $\mathcal{C}_0$, we can construct a rate-$1/4$ $(4,1,29)$ and a rate-$1/6$ $(6,1,20)$ convolutional codes of memory orders $29$ and $20$, respectively. The local minimum and free distances of both codes are lower bounded by $31$.
\end{example}

\subsection{A Sliding-Window Successive-Cancellation Decoding Scheme}\label{sect4e:sliding_window_decoding}

As shown earlier in this section, the rate-$r/t$ convolutional code $\mathcal{C}_{\mathrm{convol}}$ constructed based on a full rank code-chain $\mathcal{C}_0 \supset \mathcal{C}_1 \supset \cdots \supset \mathcal{C}_{r-1}$ is composed of an infinite number of copies of identical local codes $\mathcal{C}_{\mathrm{local}}$ with local minimum distance $d_{\mathrm{local}}$, each is a $(r, t)$-PCS-local subcode of the mother code $\mathcal{C}_0$ of the code chain. Based on the local and diagonal band structure of its generator matrix $\mathbf{G}_{\mathrm{convol}}(\bg_0, \bg_1, \ldots, \bg_{r-1})$ given by (\ref{mat:G_convol}), the convolutional code $\mathcal{C}_{\mathrm{convol}}$ can be decoded using a localized sliding window approach~\cite{zhu2017}.

In decoding a received code sequence, each local decoding is performed on an \emph{adjusted received local codeword} within a local window of size $n$. The decoding is carried out based on a designed parity-check matrix $\mathbf{H}_0$ of the mother code $\mathcal{C}_0$ of the code chain. After completion of a local decoding, the first estimated message block retrieved from the decoded local codeword is delivered to the user(s). Prior to decoding the next received local codeword, the \emph{contribution} of the retrieved message block to the transmitted code sequence is removed from the received code sequence. (We assume that the $r$ information bits are correctly retrieved.) This results in an adjusted received code sequence. Then, we slide the local window ($t$-bit positions) and proceed to decode the next received local codeword.

At the time unit $\ell$, $\ell \geq 0$, the $\ell$-th adjusted received local codeword $\mathbf{R}_\ell$ consists of a sequence of $\omega = \lceil n/t \rceil$ consecutive code blocks. When the decoding of the $\ell$-th adjusted received local codeword $\mathbf{R}_\ell$ is completed, the estimated message block $\hat{\mathbf{u}}_\ell$ is retrieved from the decoded local codeword. Before proceeding to the next local decoding, the contribution of $\hat{\mathbf{u}}_\ell$ to the transmitted code sequence is removed from the received code sequence. This operation is called \emph{cancellation}. After the cancellation operation, the window slides to the right $t$-bit positions. Then, the decoder is ready to decode the next adjusted received local codeword $\mathbf{R}_{\ell+1}$, which is formed by removing the received code block $\mathbf{r}_\ell$ from the adjusted received local codeword $\mathbf{R}_\ell$ and adding the newly received code block $\mathbf{r}_{\ell+\omega+1}$.

The process of cancellation, window sliding, local decoding, and information retrieval continues until the end of the last local received codeword is reached. At the completion of the last local decoding, the \emph{entire} decoded local codeword is accepted and the estimated information bits are retrieved and delivered to the user(s). In this case, termination of the transmitted information sequence with zeros~\cite{lin2004} is not required. Note that in the decoding process, cancellation must be performed before the next local decoding.

The above decoding process consists of 4 phases: successive cancellation, window sliding, local decoding, and information retrieval. We refer to this decoding process as \emph{sliding-window successive-cancellation (SWSC) decoding}. If the mother code $\mathcal{C}_0$ of the code chain is a cyclic LDPC code, such as a cyclic finite geometry LDPC code, local decoding can be conducted with a soft-decision iterative decoding algorithm.

A critical problem in decoding a convolutional code is \emph{error propagation}~\cite{lin2004}. For the above proposed sliding-window successive-cancellation decoding, if at any local decoding step, the estimated message block is incorrect, the information cancellation using this erroneously decoded message block may introduce extra errors in the next adjusted received local codeword, which may cause incorrect decoding of the next local decoding and an incorrect estimate of the next message block. As a result, incorrect local decoding and erroneous estimates of information bits may propagate.

An approach to reduce error propagation effect is to transmit a known sequence such as a sequence of zeros from time to time to reset the decoder. Another approach is simply not to adjust successively received local codewords based on preceding estimates. After decoding for a prefixed length of code sequence (or a sequence of local codewords) is terminated, the entire information sequence is retrieved from the decoded code sequence.

If the mother code $\mathcal{C}_0$ of the code chain is a cyclic LDPC code, in decoding a received code sequence, it is important to have a \emph{reliable start} to reduce the chances of decoder error propagation. To achieve this, we should set the number of decoding iterations for each of some initial preset number of local decoding steps to a larger value. After the decoding process reaches a steady state, we can reduce the number of iterations and start to output the hard decisions of the estimated information blocks, thereby limiting the decoding latency to the window size. During this steady state phase, the decoder error propagation can be reduced by periodically increasing the number of decoding iterations. Finally, in the last local decoding, we can use a large number of iterations to improve the reliability of final decoded blocks. After the completion of the last local decoding, we then make hard decisions on all the code bits and output the entire sequence of estimated information bits.

As shown in Section \ref{sect4a:code_construct_in_time_domain}, a rate-$r/t$ convolutional code $\mathcal{C}_{\mathrm{convol}}$ constructed based on a full rank code-chain $\mathcal{C}_0 \supset \mathcal{C}_1 \supset \cdots \supset \mathcal{C}_{r-1}$ has multi-layer structure, each layer is a rate-$1/t$ convolutional code constructed based on a constituent code in the code chain. Using this multi-layer structure, multi-layer SWSC decoding might be possible. The decoding can be carried out in successive manner layer after layer, starting with the $0$-th layer based on the constituent convolutional code $\mathcal{C}_{\mathrm{convol}}(\bg_0)$ constructed based on $\mathcal{C}_0$ and ending at the $(r-1)$-th layer based on the constituent convolutional code $\mathcal{C}_{\mathrm{convol}}(\bg_{r-1})$ constructed based on $\mathcal{C}_{r-1}$.

With the SWSC decoding based on a properly designed parity-check matrix $\mathbf{H}_0$ for $\mathcal{C}_0$ (or $\mathbf{H}_{\mathrm{local}}$ for $\mathcal{C}_{\mathrm{local}}$), the local minimum distance $d_{\mathrm{local}}$ of $\mathcal{C}_{\mathrm{convol}}$ plays the most important role in determining the performance of the code. However, if $\mathcal{C}_{\mathrm{convol}}$ is decoded globally or over a span longer than the length of the constituent codes in the code chain, a large free distance plays the most important role in determining the performance.

\section{Construction of Convolutional Codes Based on BCH Codes}\label{sect5:convol_from_BCH}

There are various categories of cyclic codes~\cite{lin2004,lin2022}. Based on each category, we can construct a family of convolutional codes with various rates, constraint lengths (memory orders), and local minimum distances. The most well-known category of cyclic codes is BCH codes~\cite{hocquenhem1959,bose1960}, which can be decoded effectively with the well-known Berlekamp-Massey algorithm~\cite{berlekamp1965,berlekamp1984}, a hard-decision decoding algorithm. Convolutional codes with guaranteed local minimum distances can be constructed from BCH codes. The commonly used BCH codes in applications are binary primitive BCH codes. In this section, we focus on the construction of convolutional codes based on binary primitive BCH codes over finite fields of characteristic 2.

\subsection{Binary Primitive BCH Codes and Their Dual Codes}

Let $\mathrm{GF}(2^m)$ be the Galois field of characteristic 2 with $2^m$ elements constructed based on a binary primitive polynomial $\bp(X)$ over $\mathrm{GF}(2)$ of degree $m$ and let $\alpha$ be a primitive element in $\mathrm{GF}(2^m)$. The powers of $\alpha$, $\alpha^0 = 1, \alpha, \alpha^2, \ldots, \alpha^{2^m-2}$, form all the $2^m-1$ nonzero elements of $\mathrm{GF}(2^m)$. For any $m \geq 3$, and $1 \leq \lambda < 2^{m-1}$, there is a binary $(n, k)$ primitive BCH code $\mathcal{C}$ with the following parameters~\cite{lin2004,lin2022}:
\[
n = 2^m - 1, \quad n - k \leq m\lambda, \quad d \geq 2\lambda + 1.
\]
The generator polynomial $\bg(X)$ of $\mathcal{C}$ is the polynomial over $\mathrm{GF}(2)$ of \emph{least degree} which has $\alpha, \alpha^2, \ldots, \alpha^{2\lambda}$ and their conjugates as roots. To construct $\bg(X)$, we first find the minimal polynomials $\mathbf{m}_1(X), \mathbf{m}_2(X), \ldots, \mathbf{m}_{2\lambda}(X)$ of $\alpha, \alpha^2, \ldots, \alpha^{2\lambda}$ and their conjugates. Then, the \emph{least common multiple (LCM)} of $\mathbf{m}_1(X), \mathbf{m}_2(X), \ldots, \mathbf{m}_{2\lambda}(X)$ gives the generator polynomial $\bg(X)$ of $\mathcal{C}$, i.e.,
\begin{equation}
\bg(X) = \mathrm{LCM}\{ \mathbf{m}_1(X), \mathbf{m}_2(X), \ldots, \mathbf{m}_{2\lambda}(X) \}.
\end{equation}
For the binary case,
\begin{equation}
\bg(X) = \mathrm{LCM}\{ \mathbf{m}_1(X), \mathbf{m}_3(X), \ldots, \mathbf{m}_{2\lambda-1}(X) \}.
\end{equation}

The minimum distance $d$ of $\mathcal{C}$ is at least $2\lambda + 1$, which is called the \emph{BCH-bound} (a lower bound) on $d$. The powers $1, 2, \ldots, 2\lambda$ are said to form a \emph{consecutive power span (CPS)} of the roots of $\bg(X)$. The BCH code $\mathcal{C}$ is capable of correcting $\lambda$ or fewer random errors over the binary symmetric channel (BSC). The parameters $\lambda$ and $2\lambda + 1$ are called the \emph{designed error-correcting capability} and \emph{designed minimum distance} of $\mathcal{C}$, respectively. The true minimum distance of $\mathcal{C}$ may be greater than $2\lambda + 1$. If $2\lambda + 1$ is a factor of $2^m - 1$, the true minimum distance of the BCH code $\mathcal{C}$ is equal to the BCH bound~\cite{lin2004}.

Let $\delta$ be the largest power of the roots in $\bg(X)$. If $1$ is not a root of $\bg(X)$, then $\alpha^{\delta+1}, \alpha^{\delta+2}, \ldots, \alpha^{2^m-2}, \alpha^{2^m-1} = 1$ are roots of the parity polynomial $\bh(X) = (X^n + 1)/\bg(X)$ of the BCH code $\mathcal{C}$, and they form a CPS of $\bh(X)$ of length $\rho = 2^m - \delta - 1$. In this case, the reciprocal polynomial $\bg_h(X) = X^k \bh(X^{-1})$ of $\bh(X)$ has $1, \alpha, \alpha^2, \ldots, \alpha^{\rho-1}$ and their conjugates as roots. Then, the cyclic code generated by $\bg_h(X)$ is an $(n, n-k)$ code which is the dual code $\mathcal{C}_h$ of the $(n, k)$ BCH code $\mathcal{C}$. It follows from the BCH-bound that the minimum distance $d_h$ of $\mathcal{C}_h$ is at least $\rho + 1$. Since $1$ is a root of $\bg_h(X)$, $d_h$ is even.

If $1$ is a root of $\bg(X)$, then the CPS of $\bh(X)$ consists of $\alpha^{\delta+1}, \alpha^{\delta+2}, \ldots, \alpha^{2^m-2}$ as roots and $\bg_h(X)$ has $\alpha, \alpha^2, \ldots, \alpha^{\rho-1}$ as roots. In this case, the minimum distance of $\mathcal{C}_h$ is lower bounded by $\rho$. The BCH bound on the minimum distance $d_h$ of the dual code $\mathcal{C}_h$ of the BCH code $\mathcal{C}$ may not be tight. We can use either the BCH code $\mathcal{C}$ or its dual code $\mathcal{C}_h$ for constructing a convolutional code. Note that the dual code of the BCH code $\mathcal{C}$ is not necessarily a BCH code.

\subsection{Construction of Full Rank BCH Code Chains and Convolutional Codes}

For $m > 3$, let $\alpha$ be a primitive element in the Galois field $\mathrm{GF}(2^m)$ and $\bg_0(X)$ be the generator polynomial of a BCH code $\mathcal{C}_0$ of length $n = 2^m - 1$ with minimum distance $d_0$ whose roots are elements in $\mathrm{GF}(2^m)$. Let $r$ and $t$ be two positive integers such that $1 \leq r < t$. For $0 < i < r$, let $\mathbf{f}_i(X)$ be the product of the minimal polynomials of a chosen set of elements in $\mathrm{GF}(2^m)$ such that the $r$ polynomials $\mathbf{f}_1(X), \mathbf{f}_2(X), \ldots, \mathbf{f}_{r-1}(X)$ satisfy the following conditions:
\begin{enumerate}
    \item For $1 \leq i < r$, $\GCD(\mathbf{f}_i(X), \bg_0(X)) = 1$;
    \item For $1 \leq i < r-1$, $\mathbf{f}_{i+1}(X) = \mathbf{q}_{i+1}(X) \mathbf{f}_i(X)$, where $\mathbf{q}_{i+1}(X)$ is the product of the minimal polynomials of a chosen set of elements in $\mathrm{GF}(2^m)$ and $\mathbf{q}_{i+1}(X)$ and $\mathbf{f}_i(X)$ are relatively prime, i.e., $\GCD(\mathbf{q}_{i+1}(X), \mathbf{f}_i(X)) = 1$;
    \item For $1 \leq i, j < r-1$ and $i \neq j$, the ratio $\mathbf{f}_j(X)/\mathbf{f}_i(X)$ is not a polynomial in $X^t$;
    \item The $t$-fold GMD-matrix $\bF(X)$ formed by the $t$-fold decompositions of the polynomials $\mathbf{f}_1(X), \mathbf{f}_2(X), \ldots, \mathbf{f}_{r-1}(X)$ is a full rank matrix.
\end{enumerate}
Clearly, the $r$ polynomials $\mathbf{f}_1(X), \mathbf{f}_2(X), \ldots, \mathbf{f}_{r-1}(X)$ are factors of $X^n + 1$.

Using the polynomials $\mathbf{f}_0(X) = 1, \mathbf{f}_1(X), \mathbf{f}_2(X), \ldots, \mathbf{f}_{r-1}(X)$ as the generator multipliers, we form the following $r$ generator polynomials:
\begin{equation} \label{eqn:bch_generator_polynomials}
\begin{array}{lll}
\bg_0(X) &=& \mathbf{f}_0(X) \bg_0(X),\\
\bg_1(X) &=& \mathbf{f}_1(X) \bg_0(X), \\
\bg_2(X) &=& \mathbf{f}_2(X) \bg_0(X), \\
&\vdots& \\
\bg_{r-1}(X) &=& \mathbf{f}_{r-1}(X) \bg_0(X).
\end{array}
\end{equation}
Clearly, the $r$ polynomials $\bg_0(X), \bg_1(X), \bg_2(X), \ldots, \bg_{r-1}(X)$ are factors of $X^n + 1$. It follows from the properties of $\mathbf{f}_0(X) = 1, \mathbf{f}_1(X), \mathbf{f}_2(X), \ldots, \mathbf{f}_{r-1}(X)$, and Theorems \ref{Thm1} and \ref{Thm2} that the cyclic codes $\mathcal{C}_0, \mathcal{C}_1, \mathcal{C}_2, \ldots, \mathcal{C}_{r-1}$ of length $n = 2^m - 1$ generated by $\bg_0(X), \bg_1(X), \bg_2(X), \ldots, \bg_{r-1}(X)$ form a $t$-fold full rank code chain, $\mathcal{C}_0 \supset \mathcal{C}_1 \supset \cdots \supset \mathcal{C}_{r-1}$ of length $r$, called a \emph{BCH code chain}. For $0 \leq i < r$, let $k_i$ and $d_i$ be the dimension and minimum distance of $\mathcal{C}_i$, respectively. Then, for $0 \leq i < r-1$, $k_i > k_{i+1}$ and $d_i \leq d_{i+1}$.

Using the BCH code chain $\mathcal{C}_0 \supset \mathcal{C}_1 \supset \cdots \supset \mathcal{C}_{r-1}$, we can construct a rate-$r/t$ convolutional code $\mathcal{C}_{\mathrm{convol}}$ with overall constraint length $\nu$ upper bounded by $\lceil \frac{n - k_0}{t} \rceil + \lceil \frac{n - k_1}{t} \rceil + \cdots + \lceil \frac{n - k_{r-1}}{t} \rceil$ and local minimum distance $d_{\mathrm{local}}$ lower bounded by the minimum distance $d_0$ of the mother code $\mathcal{C}_0$ in the code chain $\mathcal{C}_0 \supset \mathcal{C}_1 \supset \cdots \supset \mathcal{C}_{r-1}$. The rate-$r/t$ convolutional code $\mathcal{C}_{\mathrm{convol}}$ is called a \hl{BCH convolutional code}. It is the direct-sum of its $r$ constituent rate-$1/t$ convolutional codes with memory orders upper bounded by $\lceil \frac{n - k_0}{t} \rceil, \lceil \frac{n - k_1}{t} \rceil, \ldots, \lceil \frac{n - k_{r-1}}{t} \rceil$, respectively. For $1 \leq s \leq r$, using a subchain of length $s$ of the BCH code chain $\mathcal{C}_0 \supset \mathcal{C}_1 \supset \cdots \supset \mathcal{C}_{r-1}$, we can construct a rate-$s/t$ BCH convolutional code with local minimum distance lower bounded by $d_0$. Using binary primitive BCH codes as mother codes, we can construct a large class of BCH convolutional codes of various rates, local minimum distances and overall constraint lengths.

For even $t$ and $r = 1$, the convolutional code $\mathcal{C}_{\mathrm{convol}}(\bg_0)$ constructed based on the $(n, k_0)$ cyclic code $\mathcal{C}_0$ alone is a rate-$1/t$ convolutional code of memory order bounded by $\lceil \frac{n - k_0}{t} \rceil$ with local distance at least $d_0$ and free distance lower bounded by $\min(d_0, 2 d_h)$.

The full rank code chains constructed in Examples \ref{eg1} and \ref{eg2} are BCH code chains. Convolutional codes constructed based on these code chains in Examples \ref{eg3} and \ref{eg4} are BCH convolutional codes.

\subsection*{Example 6}
\begin{example} \label{eg6}
In this example, we design a full rank BCH code chain of length 3 to construct convolutional codes of rates $3/4$, $2/3$, $1/2$, and $1/4$. In the design of the code chain, the $(1023, 737)$ even weight BCH code $\mathcal{C}_0$ with dimension $k_0 = 737$ and designed minimum distance $d_0 = 60$ is chosen as the mother code. The even weight $(1023, 737)$ BCH code $\mathcal{C}_0$ is constructed based on the Galois field $\mathrm{GF}(2^{10})$. Let $\alpha$ be a primitive element in $\mathrm{GF}(2^{10})$. The generator polynomial $\bg_0(X)$ of $\mathcal{C}_0$ has $1, \alpha, \alpha^2, \ldots, \alpha^{58}$ and their conjugates as roots. The largest power of the roots in $\bg_0(X)$ is $\delta = 992$. The dual code $\mathcal{C}_h$ of $\mathcal{C}_0$ is a $(1023, 286)$ code with minimum distance $d_h$ lower bounded by $2^{10} - 1 - \delta = 1023 - 992 = 31$.

Next, we set the shifting factor $t$ to $4$ and design a 4-fold full rank code chain $\mathcal{C}_0 \supset \mathcal{C}_1 \supset \mathcal{C}_2$ of length 3 using the $(1023, 737)$ BCH code $\mathcal{C}_0$ as the mother code. To construct the two descendant codes $\mathcal{C}_1$ and $\mathcal{C}_2$ of $\mathcal{C}_0$, we need to design 2 generator multipliers for $\mathcal{C}_1$ and $\mathcal{C}_2$. Examining the roots in $\bg_0(X)$ of $\mathcal{C}_0$, we find that the two elements $\alpha^{59}$ and $\alpha^{61}$ in $\mathrm{GF}(2^{10})$ are not roots of $\bg_0(X)$. The minimal polynomials of these two elements are
\begin{align*}
\mathbf{m}_{59}(X) &= 1 + X^3 + X^4 + X^5 + X^8 + X^9 + X^{10}, \\
\mathbf{m}_{61}(X) &= 1 + X + X^4 + X^5 + X^6 + X^7 + X^8 + X^9 + X^{10}.
\end{align*}
Using the two minimal polynomials $\mathbf{m}_{59}(X)$ and $\mathbf{m}_{61}(X)$, we form two generator multipliers:
\begin{align*}
\mathbf{f}_1(X) &= \mathbf{f}_0(X) \mathbf{m}_{59}(X) = \mathbf{m}_{59}(X), \\
\mathbf{f}_2(X) &= \mathbf{f}_1(X) \mathbf{m}_{61}(X) = \mathbf{m}_{59}(X) \mathbf{m}_{61}(X) \\
       &= 1 + X + X^3 + X^4 + X^5 + X^{13} + X^{14} + X^{16} + X^{17} + X^{18} + X^{20}.
\end{align*}
With $\mathbf{f}_0(X) = 1$, none of the ratios $\mathbf{f}_1(X)/\mathbf{f}_0(X)$, $\mathbf{f}_2(X)/\mathbf{f}_0(X)$, $\mathbf{f}_2(X)/\mathbf{f}_1(X)$ is a polynomial in $X^4$.

The 4-fold GMD-matrix formed by the 4-fold decompositions of $\mathbf{f}_0(X) = 1$, $\mathbf{f}_1(X)$ and $\mathbf{f}_2(X)$ is the following $3 \times 4$ matrix:
\[
\bF(X) =
\begin{bmatrix}
1 & 0 & 0 & 1 \\
1 + X + X^2 & X + X^2 & X^2 & 1 \\
1 + X + X^4 + X^5 & 1 + X + X^3 + X^4 & X^3 + X^4 & 1
\end{bmatrix}.
\]
The first 3 columns of $\bF(X)$ form a $3 \times 3$ submatrix with nonzero determinant. Hence, $\bF(X)$ is a full rank matrix and the polynomials $\mathbf{f}_0(X)$, $\mathbf{f}_1(X)$ and $\mathbf{f}_2(X)$ can be used to construct a 4-fold full rank code chain of length 3 with $\mathcal{C}_0$ as the mother code.

Using $\bg_0(X)$, $\mathbf{f}_1(X)$, and $\mathbf{f}_2(X)$, we form two generator polynomials $\bg_1(X) = \mathbf{f}_1(X)\bg_0(X)$ and $\bg_2(X) = \mathbf{f}_2(X)\bg_0(X)$. The two cyclic codes $\mathcal{C}_1$ and $\mathcal{C}_2$ generated by $\bg_1(X)$ and $\bg_2(X)$ are descendant codes of $\mathcal{C}_0$. $\mathcal{C}_1$ is a $(1023, 727)$ code with dimension $k_1 = 727$ and $\mathcal{C}_2$ is a $(1023, 717)$ code with dimension $k_2 = 717$. Both $\mathcal{C}_1$ and $\mathcal{C}_2$ are even weight BCH codes with designed minimum distances $62$ and $64$, respectively. The largest powers of roots in both $\bg_1(X)$ and $\bg_2(X)$ are $992$, same as the largest power of roots in $\bg_0(X)$. Hence, the minimum distances of the dual codes of the three codes $\mathcal{C}_0$, $\mathcal{C}_1$, and $\mathcal{C}_2$ are lower bounded by $31$. Since $\bF(X)$ is a full rank matrix, the code chain $\mathcal{C}_0 \supset \mathcal{C}_1 \supset \mathcal{C}_2$ formed by the 3 codes is a 4-fold full rank BCH code chain.

Set the shifting span $b = (k_2 - 1)/4 = (717 - 1)/4 = 179$. The $(3, 4)$-PCS-local subcode of $\mathcal{C}_0$ is a $(1023, 540)$ code with local minimum distance lower bounded by $60$.

Using the code chain $\mathcal{C}_0 \supset \mathcal{C}_1 \supset \mathcal{C}_2$ with shifting factor $t = 4$, we can construct a rate-$3/4$ $(4, 3, 223)$ BCH convolutional code $\mathcal{C}_{\mathrm{convol}}$ of overall constraint length $223$ with local minimum distance lower bounded by $60$. It is the direct-sum of its 3 rate-$1/4$ constituent convolutional codes $\mathcal{C}_{0,\mathrm{convol}}(\bg_0)$, $\mathcal{C}_{1,\mathrm{convol}}(\bg_1)$, and $\mathcal{C}_{2,\mathrm{convol}}(\bg_2)$. The constituent code $\mathcal{C}_{0,\mathrm{convol}}(\bg_0)$ is a $(4, 1, 72)$ convolutional code of memory order $72$ with both the local minimum distance and the free distance lower bounded by $60$. The constituent code $\mathcal{C}_{1,\mathrm{convol}}(\bg_1)$ is a $(4, 1, 74)$ convolutional code of memory order $74$ with both the local minimum distance and the free distance lower bounded by $62$. The constituent code $\mathcal{C}_{2,\mathrm{convol}}(\bg_2)$ is a $(4, 1, 77)$ convolutional code of memory order $77$ with the local minimum distance and the free distance lower bounded by $64$ and $62$, respectively.

Suppose we take the subchain $\mathcal{C}_0 \supset \mathcal{C}_1$ of the BCH code chain $\mathcal{C}_0 \supset \mathcal{C}_1 \supset \mathcal{C}_2$ to construct a rate-$2/3$ BCH convolutional code. In this case, the shifting factor $t$ is set to $3$. The 3-fold GMD-matrix formed by the 3-fold decompositions of $\mathbf{f}_0(X) = 1$ and $\mathbf{f}_1(X)$ is a $2 \times 3$ matrix:
\[
\bF(X) =
\begin{bmatrix}
1 & 0 & 0 \\
1 + X + X^3 & X + X^3 & X + X^2
\end{bmatrix}.
\]
Clearly, $\bF(X)$ is a full rank matrix. Hence, $\mathcal{C}_0 \supset \mathcal{C}_1$ is a 3-fold full rank matrix. Using the subchain $\mathcal{C}_0 \supset \mathcal{C}_1$ of the BCH code chain, we can construct a rate-$2/3$ $(3, 2, 195)$ BCH convolutional code of overall constraint length $195$ with local minimum distance $60$.

If we use the mother code $\mathcal{C}_0$ of the code chain alone and set the shifting factor $t$ to $2$, we can construct a rate-$1/2$ $(2, 1, 143)$ convolutional code $\mathcal{C}_{\mathrm{convol}}(\bg_0)$. The two generator polynomials $\bg_0^{(0)}(X)$ and $\bg_0^{(1)}(X)$ of $\mathcal{C}_{\mathrm{convol}}(\bg_0)$ are obtained by taking the 2-fold decomposition of the generator polynomial $\bg_0(X)$ of $\mathcal{C}_0$. The local minimum distance of $\mathcal{C}_{\mathrm{convol}}(\bg_0)$ is at least $60$, and the free distance $d_{\mathrm{free}}$ of $\mathcal{C}_{\mathrm{convol}}(\bg_0)$ is lower bounded by $\min\{d_0, 2 d_h\} \geq \min\{60, 2 \times 31\} = 60$.

If we use $\mathcal{C}_1$ alone and set $t = 3$, we can construct a rate-$1/3$ $(3, 1, 99)$ BCH convolutional code $\mathcal{C}_{\mathrm{convol}}(\bg_1)$ of constraint length $99$ with local minimum distance lower bounded by $62$. If we use $\mathcal{C}_2$ alone and set $t = 4$, we can construct a rate-$1/4$ $(3, 1, 77)$ BCH convolutional code $\mathcal{C}_{\mathrm{convol}}(\bg_2)$ of constraint length $77$ with local minimum distance lower bounded by $64$ and free distance lower bounded by $\min\{64, 2 \times 31\} = 62$.
\end{example}

Convolutional codes can also be constructed by using nonprimitive BCH code chains.

\section{Construction of Convolutional Codes Based on Cyclic Reed-Muller Codes}\label{sect6:convol_from_RM}

Reed-Muller (RM) codes form another well-known class of error-correcting codes discovered in 1954~\cite{muller1954,reed1954}. There are many approaches to the construction of these codes. Each construction approach reveals certain unique structures---algebraic, geometric, combinatorial, graphical, and others---which allow them to be used for certain specific applications with certain decoding methods or algorithms. Lately, these codes were found to be closely related to capacity-approaching polar codes~\cite{arikan2009}. There are also several generalizations of RM codes. In this section, binary cyclic RM codes are used to construct convolutional codes.

\subsection{Reed-Muller Codes}

For two positive integers $m$ and $\mu$ with $m \geq 3$ and $0 \leq \mu \leq m$, there is an RM code $\mathcal{C}(\mu, m)$ of order $\mu$ with length $n = 2^m$, dimension
\begin{equation}
k(\mu, m) = 1 + \binom{m}{1} + \binom{m}{2} + \cdots + \binom{m}{\mu},
\end{equation}
and minimum distance $2^{m-\mu}$~\cite{lin2004}. The dual code of the $\mu$-th order RM code $\mathcal{C}(\mu, m)$ is the $(m-\mu-1)$-th order RM code $\mathcal{C}(m-\mu-1, m)$ with dimension
\begin{equation}
k(m-\mu-1, m) = 1 + \binom{m}{1} + \binom{m}{2} + \cdots + \binom{m}{m-\mu-1},
\end{equation}
and minimum distance $2^{\mu+1}$.

For any nonzero integer $1 \leq \kappa < \mu$, the $\kappa$-th order RM code $\mathcal{C}(\kappa, m)$ is a subcode of the $(\kappa+1)$-th order RM code $\mathcal{C}(\kappa+1, m)$. Hence, the RM codes $\mathcal{C}(1, m), \ldots, \mathcal{C}(\mu, m)$ form an inclusion chain:
\begin{equation}
\mathcal{C}(\mu, m) \supset \mathcal{C}(\mu-1, m) \supset \cdots \supset \mathcal{C}(1, m).
\end{equation}

RM codes are invariant under the group of \hl{affine permutations}~\cite{lin1967, lin2004}. With a specific affine permutation, an RM code of length $2^m$ with the leftmost component of each permuted codeword removed becomes a cyclic RM code of length $2^m-1$~\cite{lin1967, lin2004}. The $\mu$-th order cyclic RM code $\mathcal{C}_{\mathrm{cyc}}(\mu, m)$ has the same dimension $k(\mu, m)$ as that of the $\mu$-th order RM code $\mathcal{C}(\mu, m)$ of length $2^m$, but its minimum distance is $2^{m-\mu}-1$. The $\mu$-th order cyclic RM code $\mathcal{C}_{\mathrm{cyc}}(\mu, m)$ consists of codewords with both even and odd weights.

Let $h$ be a nonnegative integer less than $2^m$. Express $h$ in radix-2 form as follows~\cite{lin2004,kou2001}:
\begin{equation} \label{eqn:rm_code_condition1}
h = h_0 + h_1 2 + h_2 2^2 + \cdots + h_{m-1} 2^{m-1},
\end{equation}
where $h_i = 0$ or $1$ for $0 \leq i < m$. The real-sum of the coefficients $h_0, h_1, \ldots, h_{m-1}$,
\begin{equation} \label{eqn:rm_code_condition2}
w(h) = h_0 + h_1 + h_2 + \cdots + h_{m-1}
\end{equation}
is called the radix-2 weight of $h$.

Let $\alpha$ be a primitive element of the Galois field $\mathrm{GF}(2^m)$ and let $\bg(X)$ be the generator polynomial of the $\mu$-th order cyclic RM code $\mathcal{C}_{\mathrm{cyc}}(\mu, m)$. Then, $\alpha^h$ is a root of $\bg(X)$ if and only if the following condition holds~\cite{lin2004,kou2001}:
\begin{equation} \label{eqn:rm_code_condition3}
0 < w(h) \leq m - \mu - 1.
\end{equation}
The number of nonzero integers that satisfy the condition above is
\begin{equation} \label{eqn:rm_code_condition4}
k(m-\mu-1, m) - 1 = \binom{m}{1} + \cdots + \binom{m}{m-\mu-1}.
\end{equation}
From (\ref{eqn:rm_code_condition3}), we see that $\alpha^0 = 1$ (i.e., $h=0$) is not a root of the generator $\bg(X)$ of the $\mu$-th order cyclic RM code $\mathcal{C}_{\mathrm{cyc}}(\mu, m)$. For the even-weight $\mu$-th order cyclic RM code $\mathcal{C}_{\mathrm{cyc},e}(\mu, m)$, $\alpha^h$ is a root of its generator polynomial $\bg(X)$ if and only if
\begin{equation}  \label{eqn:rm_code_condition5}
0 \leq w(h) \leq m - \mu - 1.
\end{equation}
Note that $h = 0$ satisfies the condition given by (\ref{eqn:rm_code_condition5}) and hence $\alpha^0 = 1$ is a root of $\bg(X)$. The minimum distance of $\mathcal{C}_{\mathrm{cyc},e}(\mu, m)$ is $2^{m-\mu}$.

The dual code $\mathcal{C}_{\mathrm{cyc},e}(m-u-1, m)$ of the $\mu$-th order cyclic RM code $\mathcal{C}_{\mathrm{cyc}}(\mu, m)$ has dimension $k(m-\mu-1, m) - 1$, minimum distance $2^{\mu+1}$, and its generator polynomial $\bg_e(X)$ has $1$ as a root. Hence, $\mathcal{C}_{\mathrm{cyc},e}(m-u-1, m)$ is the even-weight subcode of the $(m-\mu-1)$-th order cyclic RM code $\mathcal{C}_{\mathrm{cyc}}(m-\mu-1, m)$. We call $\mathcal{C}_{\mathrm{cyc},e}(m-u-1, m)$ an even-weight cyclic RM code. The subscript “$e$” stands for even-weight.

If we include $1$ as a root in the generator polynomial $\bg(X)$ of the $\mu$-th order cyclic RM code $\mathcal{C}_{\mathrm{cyc}}(\mu, m)$, the resultant code is the even-weight subcode $\mathcal{C}_{\mathrm{cyc},e}(\mu, m)$ of $\mathcal{C}_{\mathrm{cyc}}(\mu, m)$ with dimension $k(\mu, m) - 1$ and minimum distance $2^{m-\mu}$, and its dual code is the $(m-\mu-1)$-th order cyclic RM code $\mathcal{C}_{\mathrm{cyc}}(m-\mu-1, m)$ with dimension $k(m-\mu-1, m)$ and minimum distance $2^{\mu+1} - 1$. Hence, the dual code of a cyclic RM code with both odd and even weight codewords is an even-weight cyclic RM code, and conversely, the dual code of an even-weight cyclic RM code is a cyclic RM code with both odd and even weight codewords. Cyclic RM codes also have the inclusion structure as shown above. An RM code, cyclic or non-cyclic, has multi-layer structure and can be decoded with a successive cancellation decoding algorithm~\cite{lin2004,kou2001,reed1954}.

\subsection{Construction of RM Convolutional Codes}

Using a cyclic RM code of any order as the mother code, we can construct an RM code chain in the same way as constructing a BCH code chain, as shown in (\ref{eqn:bch_generator_polynomials}). For $1 \leq r < t$, a $t$-fold full rank cyclic RM code chain $\mathcal{C}_0 \supset \mathcal{C}_1 \supset \cdots \supset \mathcal{C}_{r-1}$ of length $r$ with shifting factor $t$ can be used to construct a rate-$r/t$ RM convolutional code with local minimum distance lower bounded by the minimum distance of the cyclic RM mother code $\mathcal{C}_0$.

Consider the case for which $m$ is even. Suppose we choose $\mu = (m-2)/2$ and set the shifting factor $t = 2^\ell$, with $\ell \geq 1$. Using the $((m-2)/2)$-th order cyclic RM code $\mathcal{C}_{\mathrm{cyc}}((m-2)/2, m)$ (or $\mathcal{C}_{\mathrm{cyc},e}((m-2)/2, m)$), we can construct a rate-$1/2^\ell$ RM convolutional code of memory order $\lceil \frac{k(m/2, m) - 1}{2^\ell} \rceil$ (or $\lceil \frac{k(m/2, m)}{2^\ell} \rceil$) with local minimum distance $2^{(m+2)/2} - 1$ (or $2^{(m+2)/2}$). Since the minimum distance of the dual code of $\mathcal{C}_{\mathrm{cyc}}((m-2)/2, m)$ (or $\mathcal{C}_{\mathrm{cyc},e}((m-2)/2, m)$) is $2^{m/2}$ (or $2^{m/2} - 1$), the MCJ lower bound on the free distance of the rate-$1/2^\ell$ RM convolutional code is
\[
\min\{2^{(m+2)/2} - 1, 2 \times 2^{m/2}\} = 2^{(m+2)/2} - 1,
\]
(or $\min\{2^{(m+2)/2}, 2 \times (2^{m/2} - 1)\} = 2^{(m+2)/2} - 2$).

\subsection*{Example 7}
\begin{example} \label{eg7}
Let $m = 8$. Suppose we choose $\mu = (m-2)/2 = 3$. The 3rd order cyclic RM code $\mathcal{C}_0$ is a $(255, 93)$ code of length $255$ with dimension $k_0 = 93$ and minimum distance $d_0 = 31$. The generator polynomial $\bg_0(X)$ of $\mathcal{C}_0$ has $162$ roots which are elements of $\mathrm{GF}(2^8)$ and can be determined by using the weight condition in (\ref{eqn:rm_code_condition3}). The minimal polynomials of these roots are 
$\mathbf{m}_{1}(X)$, $\mathbf{m}_{3}(X)$, $\mathbf{m}_{5}(X)$, $\mathbf{m}_{7}(X)$, $\mathbf{m}_{9}(X)$, $\mathbf{m}_{11}(X)$, $\mathbf{m}_{13}(X)$, $\mathbf{m}_{15}(X)$, $\mathbf{m}_{17}(X)$, $\mathbf{m}_{19}(X)$, $\mathbf{m}_{21}(X)$, $\mathbf{m}_{23}(X)$, $\mathbf{m}_{25}(X)$, $\mathbf{m}_{27}(X)$, $\mathbf{m}_{29}(X)$, $\mathbf{m}_{37}(X)$, $\mathbf{m}_{39}(X)$, $\mathbf{m}_{43}(X)$, $\mathbf{m}_{45}(X)$, $\mathbf{m}_{51}(X)$, $\mathbf{m}_{53}(X)$, and $\mathbf{m}_{85}(X)$. The dual code of $\mathcal{C}_0$ is the 4th order even-weight $(255, 162)$ RM code $\mathcal{C}_{0,h}$ with minimum distance $16$.

Note that $k_0 - 1 = 92 = 2 \times 2 \times 23$. Using $\mathcal{C}_0$, we can construct rate-$1/2$ and rate-$1/4$ RM convolutional codes with memory orders $81$ and $40$, respectively. Their local minimum distances are $31$ and their free distances are lower bounded by $\min\{31, 2 \times 16\} = 31$.

Next, we use the $(255, 93)$ cyclic RM code $\mathcal{C}_0$ as the mother code to form a code chain $\mathcal{C}_0 \supset \mathcal{C}_1 \supset \mathcal{C}_2$ of length $3$ with shifting factor $t$ set to $4$. We find that $\alpha^{55}$ and $\alpha^{59}$ are not roots in $\bg_0(X)$. The minimal polynomials of $\alpha^{55}$ and $\alpha^{59}$ are, respectively,
\[
\mathbf{m}_{55}(X) = 1 + X^4 + X^5 + X^7 + X^8 ~ \text{and} ~ \mathbf{m}_{59}(X) = 1 + X^2 + X^3 + X^6 + X^8.
\]
Using the minimal polynomials $\mathbf{m}_{55}(X)$ and $\mathbf{m}_{59}(X)$, we form the following generator multipliers for the two descendant codes $\mathcal{C}_1$ and $\mathcal{C}_2$ of $\mathcal{C}_0$:
\begin{align*}
\mathbf{f}_1(X) &= \mathbf{m}_{55}(X), \\
\mathbf{f}_2(X) &= \mathbf{m}_{55}(X) \mathbf{m}_{59}(X) \\
       &= 1 + X^2 + X^3 + X^4 + X^5 + X^7 + X^8 + X^9 + X^{10}  + X^{12} + X^{14} + X^{15} + X^{16}.
\end{align*}
With $\mathbf{f}_0(X) = 1$, we can easily check that none of the 3 ratios $\mathbf{f}_1(X)/\mathbf{f}_0(X)$, $\mathbf{f}_2(X)/\mathbf{f}_0(X)$, and $\mathbf{f}_2(X)/\mathbf{f}_1(X)$ is a polynomial in $X^4$.

The $3 \times 4$ GMD-matrix formed by the 4-fold decompositions of the generator multipliers $\mathbf{f}_0(X)$, $\mathbf{f}_1(X)$, and $\mathbf{f}_2(X)$ is
\[
\bF(X) =
\begin{bmatrix}
1 & 0 & 0 & 0 \\
1 + X + X^2 & X & 0 & X \\
1 + X + X^2 + X^3 + X^4 & X + X^2 & 1 + X^2 + X^3 & 1 + X + X^3
\end{bmatrix}.
\]
The first 3 columns of $\bF(X)$ form a $3 \times 3$ submatrix with nonzero determinant. Hence, $\bF(X)$ is a full rank matrix and $\mathbf{f}_0(X)$, $\mathbf{f}_1(X)$, and $\mathbf{f}_2(X)$ satisfy the conditions given in Theorems \ref{Thm1} and \ref{Thm2}.

Using $\bg_0(X)$, $\mathbf{f}_1(X)$, and $\mathbf{f}_2(X)$, we form the generator polynomials of the 2 descendant codes $\mathcal{C}_1$ and $\mathcal{C}_2$ of $\mathcal{C}_0$ as follows:
\begin{align*}
\bg_1(X) &= \bg_0(X) \mathbf{f}_1(X) = \bg_0(X) \mathbf{m}_{55}(X), \\
\bg_2(X) &= \bg_0(X) \mathbf{f}_2(X) = \bg_0(X) \mathbf{m}_{55}(X) \mathbf{m}_{59}(X).
\end{align*}
The two descendant codes $\mathcal{C}_1$ and $\mathcal{C}_2$ of $\mathcal{C}_0$ generated by $\bg_1(X)$ and $\bg_2(X)$ are $(255, 85)$ and $(255, 77)$ codes with dimensions $k_1 = 85$ and $k_2 = 77$, respectively, both with minimum distances at least $31$. The largest CPS of the roots in both $\bg_1(X)$ and $\bg_2(X)$ is $240$. Hence, the minimum distances of the dual codes $\mathcal{C}_{1,h}$ and $\mathcal{C}_{2,h}$ of $\mathcal{C}_1$ and $\mathcal{C}_2$ are lower bounded by $16$.

Since $\bF(X)$ is a 4-fold full rank matrix, the 3 codes $\mathcal{C}_0, \mathcal{C}_1, \mathcal{C}_2$ form a 4-fold full rank RM code chain $\mathcal{C}_0 \supset \mathcal{C}_1 \supset \mathcal{C}_2$ of length $3$. Using the code chain $\mathcal{C}_0 \supset \mathcal{C}_1 \supset \mathcal{C}_2$ with shifting factor $t = 4$, we can construct a rate-$3/4$ $(4, 3, 129)$ RM convolutional code $\mathcal{C}_{\mathrm{convol}}$ of constraint length $129$ with local minimum distance $31$. The rate-$3/4$ convolutional code $\mathcal{C}_{\mathrm{convol}}$ is the direct-sum of its 3 rate-$1/4$ constituent convolutional codes of memory orders $41$, $43$, and $45$, respectively, with local minimum distances at least $31$ and free distances lower bounded by $31$.

With shifting factor $t = 3$, we can show that the subchain $\mathcal{C}_0 \supset \mathcal{C}_1$ of the code chain $\mathcal{C}_0 \supset \mathcal{C}_1 \supset \mathcal{C}_2$ with generator multipliers $\mathbf{f}_0(X) = 1$ and $\mathbf{f}_1(X) = \mathbf{m}_{55}(X)$ given above is a 3-fold full rank code chain. Using this subchain with $t = 3$, we can construct a rate-$2/3$ $(3, 2, 111)$ RM convolutional code $\mathcal{C}_{\mathrm{convol}}(\bg_0, \bg_1)$ of constraint length $111$ with local minimum distance $31$. The rate-$2/3$ RM convolutional code $\mathcal{C}_{\mathrm{convol}}(\bg_0, \bg_1)$ is the direct-sum of its 2 rate-$1/3$ constituent convolutional codes, a $(3, 1, 54)$ convolutional code and a $(3, 1, 57)$ convolutional code, with local minimum distances $31$ and at least $31$, respectively.

Suppose we want to construct a 5-fold full rank code chain with the 3rd order $(255, 93)$ RM code as the mother code to construct a rate-$4/5$ RM convolutional code. This can be done by adding another descendant code $\mathcal{C}_3$ to the code chain $\mathcal{C}_0 \supset \mathcal{C}_1 \supset \mathcal{C}_2$ constructed above as the end code. To achieve this, we choose $\mathbf{f}_3(X) = \mathbf{f}_2(X)(1 + X)$ as the generator multiplier for $\mathcal{C}_3$. Then, the generator polynomial of $\mathcal{C}_3$ is given by
\[
\bg_3(X) = \mathbf{f}_3(X)\bg_0(X) = \bg_0(X) \mathbf{m}_{55}(X) \mathbf{m}_{59}(X) (1 + X).
\]
The code $\mathcal{C}_3$ generated by $\bg_3(X)$ is a $(255, 76)$ code with minimum distance at least $32$ and its dual code is a $(255, 179)$ code with minimum distance at least $15$. It can be shown that the $4 \times 5$ GMD-matrix $\bF(X)$ formed by the 5-fold decompositions of $\mathbf{f}_0(X)$, $\mathbf{f}_1(X)$, $\mathbf{f}_2(X)$, and $\mathbf{f}_3(X)$ is a full rank matrix. Hence, the code chain $\mathcal{C}_0 \supset \mathcal{C}_1 \supset \mathcal{C}_2 \supset \mathcal{C}_3$ of length $4$ is a 5-fold full rank RM code chain. Using this code chain, we can construct a rate-$4/5$ $(5, 4, 139)$ RM convolutional code $\mathcal{C}_{\mathrm{convol}}$ of constraint length $139$ with local minimum distance $31$. It is the direct-sum of its 4 rate-$1/5$ constituent convolutional codes with memory orders $33$, $34$, $36$, and $36$, respectively, each with local minimum distance at least $31$.
\end{example}

\subsection*{Example 8}
\begin{example} \label{eg8} 
Let $m = 10$ and $\mu = (m-2)/2 = 4$. The 4th order even-weight cyclic RM code $\mathcal{C}_{\mathrm{cyc},e}(4, 10)$ is a $(1023, 385)$ code with minimum distance $64$. The dual code of $\mathcal{C}_{\mathrm{cyc},e}(4, 10)$ is the 5th order cyclic RM code $\mathcal{C}_{\mathrm{cyc}}(5, 10)$ which is a $(1023, 638)$ code with minimum distance $31$. Let $\alpha$ be a primitive element in $\mathrm{GF}(2^{10})$. It follows from the weight condition that for $0 \leq h < 2^{10} - 1$, $\alpha^h$ is a root of the generator polynomial $\bg_0(X)$ of $\mathcal{C}_{\mathrm{cyc},e}(4, 10)$ if and only if $0 \leq w(h) \leq 5$. Using this condition, we can determine all the roots of $\bg_0(X)$. We find that the elements $\alpha^{63}$ and $\alpha^{127}$ in $\mathrm{GF}(2^{10})$ are not roots of $\bg_0(X)$. The minimal polynomials of $\alpha^{63}$ and $\alpha^{127}$ are, respectively,
\begin{align*}
\mathbf{m}_{63}(X) &= 1 + X^2 + X^3 + X^5 + X^7 + X^9 + X^{10}, \\
\mathbf{m}_{127}(X) &= 1 + X + X^2 + X^3 + X^4 + X^5 + X^6 + X^7 + X^{10}.
\end{align*}

Suppose we use the 4th order even-weight cyclic RM code $\mathcal{C}_{\mathrm{cyc},e}(4, 10)$ as the mother code $\mathcal{C}_0$ to form an RM code chain $\mathcal{C}_0 \supset \mathcal{C}_1 \supset \mathcal{C}_2$ with shifting factor $t$ set to $4$. Let $\mathbf{f}_1(X) = \mathbf{m}_{63}(X)$ and $\mathbf{f}_2(X) = \mathbf{m}_{63}(X) \mathbf{m}_{127}(X)$ be the generator multipliers for the two descendant codes $\mathcal{C}_1$ and $\mathcal{C}_2$ of $\mathcal{C}_0$. The generator polynomials for $\mathcal{C}_1$ and $\mathcal{C}_2$ are $\bg_1(X) = \bg_0(X)\mathbf{f}_1(X)$ and $\bg_2(X) = \bg_0(X)\mathbf{f}_2(X)$, respectively. The two descendant codes $\mathcal{C}_1$ and $\mathcal{C}_2$ of $\mathcal{C}_0$ generated by $\bg_1(X)$ and $\bg_2(X)$ are $(1023, 375)$ and $(1023, 365)$ codes, respectively, with minimum distances at least $64$.

We find that the 4-fold GMD-matrix $\bF(X)$ formed by the 4-fold decompositions of the generator multipliers $\mathbf{f}_0(X) = 1$, $\mathbf{f}_1(X)$, and $\mathbf{f}_2(X)$ for the code chain $\mathcal{C}_0 \supset \mathcal{C}_1 \supset \mathcal{C}_2$ is a full rank matrix. Hence, the code chain $\mathcal{C}_0 \supset \mathcal{C}_1 \supset \mathcal{C}_2$ is a 4-fold full rank code chain. Using the code chain with shifting factor $t = 4$, we can construct a rate-$3/4$ $(4, 3, 487)$ RM convolutional code $\mathcal{C}_{\mathrm{convol}}$ with constraint length $487$ and local minimum distance $64$. The code $\mathcal{C}_{\mathrm{convol}}$ is the direct-sum of 3 rate-$1/4$ convolutional codes $\mathcal{C}_{\mathrm{convol}}(\bg_0)$, $\mathcal{C}_{\mathrm{convol}}(\bg_1)$, and $\mathcal{C}_{\mathrm{convol}}(\bg_2)$ which are $(4, 1, 160)$, $(4, 1, 162)$, and $(4, 1, 165)$ convolutional codes with local minimum distances of at least $64$ and free distances lower bounded by $\min\{64, 2 \times 31\} = 62$.

Using the mother RM code $\mathcal{C}_0$ in the code chain $\mathcal{C}_0 \supset \mathcal{C}_1 \supset \mathcal{C}_2$ alone with shifting factors $2, 4, 6, 8, 12, 16$, we can construct rate-$1/2$ $(2, 1, 319)$, rate-$1/4$ $(4, 1, 160)$, rate-$1/6$ $(6, 1, 107)$, rate-$1/8$ $(8, 1, 80)$, rate-$1/12$ $(12, 1, 54)$, and rate-$1/16$ $(16, 1, 40)$ RM convolutional codes with local minimum distances lower bounded by $64$ and free distances lower bounded by $62$. If we set shifting factor to $3$, we can construct a rate-$1/3$ $(3, 1, 213)$ convolutional code with local minimum distance at least $64$.

If we remove the root $1$ from the generator polynomials $\bg_0(X)$, $\bg_1(X)$, and $\bg_2(X)$ of the three codes $\mathcal{C}_0$, $\mathcal{C}_1$, and $\mathcal{C}_2$ in the code chain $\mathcal{C}_0 \supset \mathcal{C}_1 \supset \mathcal{C}_2$, we obtain three codes $\mathcal{C}_0^*, \mathcal{C}_1^*, \mathcal{C}_2^*$ which are $(1023, 386)$, $(1023, 376)$, and $(1023, 366)$ codes. The generator polynomials $\bg_0^*(X), \bg_1^*(X), \bg_2^*(X)$ are $\bg_0(X)/(X+1)$, $\bg_1(X)/(X+1)$, and $\bg_2(X)/(X+1)$, respectively. The code $\mathcal{C}_0^*$ is the 4th order RM code $\mathcal{C}_{\mathrm{cyc}}(4, 10)$ with minimum distance $63$. The three codes $\mathcal{C}_0^*, \mathcal{C}_1^*, \mathcal{C}_2^*$ form an RM code chain $\mathcal{C}_0^* \supset \mathcal{C}_1^* \supset \mathcal{C}_2^*$. The generator multipliers for the three codes $\mathcal{C}_0^*, \mathcal{C}_1^*, \mathcal{C}_2^*$ are the same as for $\mathcal{C}_0, \mathcal{C}_1, \mathcal{C}_2$.

Set shifting factor $t = 5$. We find that the RM code chain $\mathcal{C}_0^* \supset \mathcal{C}_1^* \supset \mathcal{C}_2^*$ is a 5-fold full rank code chain. Using the code chain $\mathcal{C}_0^* \supset \mathcal{C}_1^* \supset \mathcal{C}_2^*$, we can construct a rate-$3/5$ $(5, 3, 390)$ RM convolutional code $\mathcal{C}^*_{\mathrm{convol}}$ with constraint length $390$ and local minimum distance $63$.

If we use the subchain $\mathcal{C}_0^* \supset \mathcal{C}_1^*$ of the RM code chain, we can form a rate-$2/5$ $(5, 2, 258)$ RM convolutional code $\mathcal{C}^*_{\mathrm{convol}}(\bg_0^*, \bg_1^*)$ of constraint length $258$ with local minimum distance $63$. If we use the mother code $\mathcal{C}_0^*$ alone, we can construct a rate-$1/5$ $(5, 1, 128)$ RM convolutional code $\mathcal{C}^*_{\mathrm{convol}}(\bg_0^*)$ of memory order $128$ with local minimum distance $63$.
\end{example}

\subsection*{Example 9}
\begin{example} \label{eg9} 
Choose $m = 16$ and $\mu = (m-2)/2 = 7$. The 7th order cyclic RM code is a $(65535, 26333)$ code $\mathcal{C}_{\mathrm{cyc},e}(7, 16)$ with minimum distance $2^9 - 1 = 511$. The dual code of $\mathcal{C}_{\mathrm{cyc}}(7, 16)$ is the 8th order even-weight RM code $\mathcal{C}_{\mathrm{cyc},e}(8, 16)$ which is a $(65535, 39202)$ code with minimum distance $2^8 = 256$. Note that the minimum distance of $\mathcal{C}_{\mathrm{cyc}}(7, 16)$ is about twice that of $\mathcal{C}_{\mathrm{cyc},e}(8, 16)$. The roots of the generator polynomials of $\mathcal{C}_{\mathrm{cyc}}(7, 16)$ and $\mathcal{C}_{\mathrm{cyc},e}(8, 16)$ are elements in the field $\mathrm{GF}(2^{16})$.

Set $t = 2$ and $t = 4$. Using $\mathcal{C}_{\mathrm{cyc}}(7, 16)$, we can construct rate-$1/2$ $(2, 1, 19601)$ and rate-$1/4$ $(4, 1, 9800)$ RM convolutional codes of memory orders $19601$ and $9800$, respectively, with local minimum distance $511$ and free distance lower bounded by $511$.

If we use $\mathcal{C}_{\mathrm{cyc}}(7, 16)$ (or $\mathcal{C}_{\mathrm{cyc},e}(7, 16)$) as the mother code and the minimal polynomials of elements in $\mathrm{GF}(2^{16})$, we can form full rank code chains of various lengths. Using these cyclic RM code chains, we can construct RM convolutional codes of various rates and constraint lengths with local minimum distance $511$ (or $512$).
\end{example}

As pointed out in the beginning of this section, there are also generalized cyclic RM codes~\cite{kasami1968a,weldon1968,delsarte1970}. Using these generalized cyclic RM codes, we can construct even larger classes of RM convolutional codes. RM codes of moderate lengths can be effectively decoded with ordered statistic decoding and iterative multistage MLD algorithms~\cite{fossorier1995,lin2004}.

\section{Construction of Convolutional Codes Based on Finite-Geometry-Based LDPC Codes}\label{sect7:convol_from_PaG}

In this section, we consider construction of convolutional codes using cyclic finite-geometry (FG) codes~\cite{lin2004,kou2001,rudolph1964,weldon1967,smith1967,lin2022}. In general, the minimum distance of an FG code is smaller than the minimum distance of a BCH code of the same length and rate. However, several classes of FG codes have decoding advantages over the BCH codes. FG codes in these classes are cyclic LDPC codes~\cite{lin2004,kou2001} which have distinctive algebraic and geometric structures, and they can be decoded with soft-decision iterative decoding algorithms based on belief propagation such as the sum-product (SP)~\cite{gallager1962,mackay1999} and min-sum (MS)~\cite{chen2002} algorithms and their variations to achieve very good error performance with practical decoder implementation~\cite{lin2004,lin2022}. They can also be decoded with the simple OSML, bit-flipping, and weighted bit-flipping~\cite{kou2001} algorithms.

There are two special categories of finite geometries, namely Euclidean and projective geometries. Based on the finite geometries in these two categories, cyclic codes can be constructed. In this section, we mainly focus on constructions of cyclic LDPC codes based on two-dimensional FGs~\cite{lin2004,lin2022}. Using these two-dimensional FG-LDPC codes, we construct FG-LDPC convolutional codes. Exclusive coverage of construction of general cyclic FG-codes, LDPC or not, can be found in~\cite{lin2004,lin2022}.

\subsection{Two-Dimensional Euclidean Geometries}

The two-dimensional Euclidean geometry (EG) over the field $\mathrm{GF}(2^s)$ with $s \geq 1$, denoted $\mathrm{EG}(2, 2^s)$, consists of $2^{2s}$ points and $2^s(2^s+1)$ lines. Let $\alpha$ be a primitive element in the extension field $\mathrm{GF}(2^{2s})$ of $\mathrm{GF}(2^s)$. The $2^{2s}$ points of $\mathrm{EG}(2, 2^s)$ are represented by the $2^{2s}$ elements,
\[
\alpha^{-\infty} = 0, \alpha^0 = 1, \alpha, \alpha^2, \ldots, \alpha^{2^{2s}-2},
\]
of $\mathrm{GF}(2^{2s})$, and the $2^s(2^s+1)$ lines in $\mathrm{EG}(2, 2^s)$ are represented by $2^s+1$ additive subgroups of order $2^s$ of $\mathrm{GF}(2^{2s})$ and their associated cosets. Each point in $\mathrm{EG}(2, 2^s)$ is intersected by (or lies on) $2^s+1$ lines, and each line in $\mathrm{EG}(2, 2^s)$ has $2^s-1$ lines parallel to it. The $2^s+1$ lines that intersect at a point in $\mathrm{EG}(2, 2^s)$ are said to form an \emph{intersecting bundle}, and each group of $2^s$ parallel lines are said to form a \emph{parallel bundle}. $\mathrm{EG}(2, 2^s)$ consists of $2^s+1$ parallel bundles of lines. The point represented by the zero element $\alpha^{-\infty} = 0$ of $\mathrm{GF}(2^{2s})$ is called the \emph{origin} of $\mathrm{EG}(2, 2^s)$.

Let $\mathrm{EG}^*(2, 2^s)$ be the subgeometry of $\mathrm{EG}(2, 2^s)$ obtained by removing the origin and the $2^s+1$ lines intersecting at it. Hence, $\mathrm{EG}^*(2, 2^s)$ contains the $2^{2s}-1$ non-origin points $\alpha^0 = 1, \alpha, \ldots, \alpha^{2^{2s}-2}$ and $2^{2s}-1$ lines not passing through the origin. Each line in $\mathrm{EG}^*(2, 2^s)$ consists of $2^s$ points, and each point is intersected by $2^s$ lines not passing through the origin. The $2^{2s}-1$ lines in $\mathrm{EG}^*(2, 2^s)$ can be grouped into $2^s+1$ parallel bundles, each consisting of $2^s-1$ parallel lines.

Let $L = \{\alpha^{j_0}, \alpha^{j_1}, \ldots, \alpha^{j_{2^s-1}}\}$ be a line in $\mathrm{EG}^*(2, 2^s)$ which consists of the $2^s$ points $\alpha^{j_0}, \alpha^{j_1}, \ldots, \alpha^{j_{2^s-1}}$. Form a $(2^{2s}-1)$-tuple over $\mathrm{GF}(2)$, $\mathbf{v} = (v_0, v_1, \ldots, v_{2^{2s}-2})$, in which the $j$-th component $v_j$ is set to $1$ if $\alpha^j$ is a point on $L$, otherwise $v_j$ is set to $0$. This $(2^{2s}-1)$-tuple $\mathbf{v}$ is called the \emph{incidence vector} of the line $L$ whose weight is $2^s$. The incidence vectors of lines in $\mathrm{EG}^*(2, 2^s)$ have cyclic structure~\cite{lin2004,kou2001, lin2022}. The cyclic-shift of the incidence vector of a line one position to the right gives the incidence vector of another line. The $2^{2s}-1$ cyclic-shifts of the incidence vector of any line in $\mathrm{EG}^*(2, 2^s)$ give the incidence vectors of all the $2^{2s}-1$ lines in $\mathrm{EG}^*(2, 2^s)$. From this point of view, we may regard $\mathrm{EG}^*(2, 2^s)$ as a two-dimensional cyclic partial Euclidean geometry over $\mathrm{GF}(2^s)$~\cite{lin2022}, a subgeometry of $\mathrm{EG}(2, 2^s)$.

Next, we form a $(2^{2s}-1) \times (2^{2s}-1)$ circulant $\mathbf{H}$ over $\mathrm{GF}(2)$ with the incidence vectors of the lines in $\mathrm{EG}^*(2, 2^s)$ as rows arranged in cyclic manner. This circulant $\mathbf{H}$ can be formed by taking the incidence vector of a line as the first (top) row and then cyclically shifting it $2^{2s}-2$ times to the right to form the rest of the rows. The top row of $\mathbf{H}$ is called the \emph{generator} of $\mathbf{H}$. The columns of $\mathbf{H}$ are labeled with the points $\alpha^0 = 1, \alpha, \ldots, \alpha^{2^{2s}-2}$. Hence, the rows and columns of $\mathbf{H}$ correspond to the lines and points in $\mathrm{EG}^*(2, 2^s)$, respectively. The circulant $\mathbf{H}$ is called the \emph{incidence matrix} of $\mathrm{EG}^*(2, 2^s)$. Both the column and row weights of $\mathbf{H}$ are $2^s$. The rank of the circulant $\mathbf{H}$ is $3^s-1$~\cite{lin2004,kou2001}.

The incidence matrix $\mathbf{H}$ of $\mathrm{EG}^*(2, 2^s)$ has the following property: any two rows (columns) in $\mathbf{H}$ have at most one $1$-component in common. Such property is called the \emph{row and column (RS) constraint (RC-constraint)} structure~\cite{lin2022, ryan2025}. It follows from the interesting and parallel structure of lines in $\mathrm{EG}^*(2, 2^s)$ that for each column position $\alpha^j$, there are $2^s$ rows which have $1$-components at the position $j$. Hence, $\mathbf{H}$ has self-orthogonal structure~\cite{lin2004}. Since $\mathbf{H}$ satisfies the RC-constraint and is self-orthogonal, the Tanner graph~\cite{lin2004,lin2022, ryan2025} associated with $\mathbf{H}$ has girth at least $6$.

\subsection{Cyclic EG-LDPC Codes and Their Associated Convolutional Codes}

The density of the incidence matrix $\mathbf{H}$ of the cyclic subgeometry $\mathrm{EG}^*(2, 2^s)$ of the 2-dimensional Euclidean geometry $\mathrm{EG}(2, 2^s)$ is $2^s/(2^{2s}-1)$. For $s \geq 3$, $\mathbf{H}$ is a low-density (or sparse) matrix. Using $\mathbf{H}$ as a parity-check matrix, the null space of $\mathbf{H}$ over $\mathrm{GF}(2)$ gives a $(2^{2s}-1, 2^{2s}-3^s)$ cyclic low-density parity-check (LDPC) code $\mathcal{C}$ which is called \hl{an EG-LDPC code}. It is a $(2^s, 2^s)$-regular LDPC code.

Let $c$ be a nonnegative integer less than $2^{2s}$. Then, $c$ can be expressed in the following radix-$2^s$ form:
\begin{equation}
c = c_0 + c_1 2^s,
\end{equation}
where $c_0$ and $c_1$ are two nonnegative integers with $0 \leq c_0, c_1 < 2^s$. We define the real sum $c_0 + c_1$ as the $2^s$-weight of $c$, denoted by $W_{2^s}(c)$, i.e.,
\begin{equation}
W_{2^s}(c) = c_0 + c_1.
\end{equation}
For a nonnegative integer $\ell$ with $0 \leq \ell < s$, let $c^{(\ell)}$ be the remainder resulting from dividing $c 2^\ell$ by $2^{2s}-1$. Express $c^{(l)}$ in radix-$2^s$ form:
\begin{equation}
c^{(\ell)} = c_0^{(\ell)} + c_1^{(\ell)} 2^s,
\end{equation}
where $c_0^{(\ell)}$ and $c_1^{(\ell)}$ are two nonnegative integers with $0 \leq c_0^{(\ell)}, c_1^{(\ell)} < 2^s$. The $2^s$-weight of $c^{(\ell)}$ is
\begin{equation}
W_{2^s}(c^{(\ell)}) = c_0^{(\ell)} + c_1^{(\ell)}.
\end{equation}

The generator polynomial $\bg(X)$ of the 2-dimensional cyclic EG-LDPC code $\mathcal{C}$ constructed based on the cyclic Euclidean geometry $\mathrm{EG}^*(2, 2^s)$ has $\alpha^c$ as a root if and only if the following condition holds~\cite{lin2022,lin2004,kou2001}:
\begin{equation} \label{eqn:EG_root_condition}
0 < \max_{0 \leq \ell < s} W_{2^s}(c^{(\ell)}) \leq 2^s - 1.
\end{equation}
Using the condition above, we determine all the roots of $\bg(X)$ and their minimal polynomials. Then, $\bg(X)$ is the least common multiple of the minimal polynomials of the roots of $\bg(X)$. It can be easily checked that the $2^s$-weights of the $2^s$ consecutive integers $1, 2, \ldots, 2^s$ satisfy the condition given by (\ref{eqn:EG_root_condition}). Hence, $\bg(X)$ has $\alpha, \alpha^2, \ldots, \alpha^{2^s}$ as roots. Then, it follows from the BCH-bound that the minimum distance of the 2-dimensional cyclic EG-LDPC code $\mathcal{C}$ is at least $2^s+1$. In fact, the minimum distance of $\mathcal{C}$ is exactly $2^s+1$, which is equal to the number of lines in $\mathrm{EG}^*(2, 2^s)$ that intersect at a point plus one.

The row space of $\mathbf{H}$ gives a $(2^{2s}-1, 3^s-1)$ cyclic code $\mathcal{C}_h$ which is the dual code of the cyclic 2-dimensional EG-LDPC code $\mathcal{C}$. The generator polynomial of $\mathcal{C}_h$ contains $(X+1)$ as a factor. The minimum distance of $\mathcal{C}_h$ is $2^s$, which is the weight of the incidence vector of a line in $\mathrm{EG}^*(2, 2^s)$. Note that the dimension $3^s-1$ of $\mathcal{C}_h$ is even.

Two-dimensional cyclic EG-LDPC codes can be decoded with soft-decision iterative decoding based on belief propagation with either the sum-product algorithm (SPA)~\cite{gallager1962,mackay1999} or the min-sum algorithm (MSA)~\cite{chen2002}, or their variations to achieve error performance close to Shannon capacity. Two-dimensional cyclic EG-LDPC codes have much larger minimum distances compared with the other types of LDPC codes, and their Tanner graphs do not have harmful trapping sets with sizes smaller than their minimum distances, which is a unique feature of 2-dimensional cyclic EG-LDPC codes~\cite{huang2012,diao2013}. Hence, when they are decoded with the SPA or MSA, they can achieve very low (bit and block) error rates without visible error floors. Furthermore, the Tanner graphs of these codes have high degrees of variable-node (VN) connectivity~\cite{kou2001}. Each VN is connected to a large number of other VNs through paths of length 2. In each decoding iteration, each VN receives a large amount of reliability information from other VNs. This enhances the reliability of each VN in each decoding iteration. As a result, decoding converges quickly.

Due to the self-orthogonal structure of $\mathbf{H}$, the 2-dimensional cyclic EG-LDPC code $\mathcal{C}$ constructed based on $\mathrm{EG}(2, 2^s)$ over $\mathrm{GF}(2^s)$ is one-step majority-logic (OSML) decodable~\cite{lin2004}. For each code bit position $\alpha^j$ with $0 \leq j < 2^{2s}-1$, $2^s$ parity-check sums orthogonal on the bit position $\alpha^j$ can be formed by taking the inner products of a received codeword with $2^s$ rows of $\mathbf{H}$. These parity-check sums are called orthogonal check-sums~\cite{lin2004}. This self-orthogonal structure of $\mathbf{H}$ allows the code $\mathcal{C}$ to correct $2^{s-1}$ or fewer random errors over the BSC with OSML-decoding~\cite{lin2004}, which meets the bound on the number of correctable errors guaranteed by the minimum distance $2^s+1$.

Using cyclic 2-dimensional EG-LDPC codes, we can construct EG-LDPC-convolutional codes of various rates and constraint lengths. Consider the special case in which the $(2^{2s} - 1,\, 2^{2s} - 3^s)$ cyclic EG-LDPC code $\mathcal{C}$ is used to construct a rate-$1/2$ convolutional code. Note that the dimension of $\mathcal{C}$ is $k = 2^{2s} - 3^s$, which is an odd integer. Hence, $k - 1$ is divisible by $2$. Set the shifting factor $t = 2$ and the shifting span $b = (2^{2s} - 3^s - 1)/2$. The $(1, 2)$-PCS-local subcode $\mathcal{C}_{\mathrm{local}}$ of the $(2^{2s} - 1,\, 2^{2s} - 3^s)$ cyclic EG-LDPC code $\mathcal{C}$ is a $(2^{2s} - 1,\, (2^{2s} - 3^s + 1)/2)$ code with minimum distance $2^s + 1$. The convolutional code constructed based on $\mathcal{C}$ is a rate-$1/2$ $(2, 1, (3^s - 1)/2)$ convolutional code $\mathcal{C}_{\mathrm{convol}}(\bg)$ with memory order $(3^s - 1)/2$, which consists of an infinite number of copies of the $(1, 2)$-PCS-local subcode $\mathcal{C}_{\mathrm{local}}$ of $\mathcal{C}$ as local codes. The local minimum distance of $\mathcal{C}_{\mathrm{convol}}(\bg)$ is $2^s + 1$. Since the minimum distance of the dual code $\mathcal{C}_h$ of $\mathcal{C}$ is $2^s$, the free distance $d_{\mathrm{free}}$ of $\mathcal{C}_{\mathrm{convol}}(\bg)$ is lower bounded by
\[
d_{\mathrm{free}} \geq \min\{2^s + 1,\, 2 \times 2^s\} = 2^s + 1.
\]

Using the $(2^{2s} - 1,\, 2^{2s} - 3^s)$ two-dimensional cyclic EG-LDPC code $\mathcal{C}$ as the mother code $\mathcal{C}_0$ and minimal polynomials of selected elements in $\mathrm{GF}(2^{2s})$, we can form full rank code chains of various lengths with various shifting factors. Using these code chains, we can construct EG-LDPC convolutional codes of various rates, constraint lengths, and local minimum distances.

\subsection*{Example 10}
\begin{example} \label{eg10}
Let $s = 4$. The Euclidean geometry $\mathrm{EG}(2, 2^4)$ over $\mathrm{GF}(2^8)$ consists of $272$ lines and $256$ points, each line consisting of $16$ points and each point lying on (or intersected by) $17$ lines. Removing the origin from $\mathrm{EG}(2, 2^4)$ and the $17$ lines passing through it, we obtain the cyclic subgeometry $\mathrm{EG}^*(2, 2^4)$ which consists of $255$ points and $255$ lines.

Using the incidence vectors of the lines in $\mathrm{EG}^*(2, 2^4)$, we form the incidence matrix $\mathbf{H}$ of $\mathrm{EG}^*(2, 2^4)$ which is a $255 \times 255$ circulant with both column and row weights $16$. The null space of $\mathbf{H}$ gives a $(255, 175)$ cyclic 2-dimensional EG-LDPC code $\mathcal{C}_0$ with minimum distance $d_0 = 17$ whose Tanner graph has girth at least $6$ and contains no harmful trapping set of size less than $17$. Each variable node (VN) in its Tanner graph is connected to $240$ other VNs by paths of length $2$.

The generator polynomial $\bg_0(X)$ of $\mathcal{C}_0$ is a polynomial over $\mathrm{GF}(2)$ of degree $80$. Let $\alpha$ be a primitive element in the field $\mathrm{GF}(2^8)$. For $0 \leq c < 255$, $\alpha^c$ is a root of $\bg_0(X)$ if and only if $c$ satisfies the following condition:
\begin{equation*}
0 < \max_{0 \leq \ell < 4} W_{16}(c^{(\ell)}) \leq 15.
\end{equation*}
Using this condition, we find that the roots of $\bg_0(X)$ are the $80$ elements in the following $10$ conjugate sets in $\mathrm{GF}(2^8)$:
\begin{align*}
\Omega_1 &= \{\alpha, \alpha^2, \alpha^4, \alpha^8, \alpha^{16}, \alpha^{32}, \alpha^{64}, \alpha^{128}\}, \\
\Omega_3 &= \{\alpha^3, \alpha^6, \alpha^{12}, \alpha^{24}, \alpha^{48}, \alpha^{96}, \alpha^{192}, \alpha^{129}\}, \\
\Omega_5 &= \{\alpha^5, \alpha^{10}, \alpha^{20}, \alpha^{40}, \alpha^{80}, \alpha^{160}, \alpha^{65}, \alpha^{130}\}, \\
\Omega_7 &= \{\alpha^7, \alpha^{14}, \alpha^{28}, \alpha^{56}, \alpha^{112}, \alpha^{224}, \alpha^{193}, \alpha^{131}\}, \\
\Omega_9 &= \{\alpha^9, \alpha^{18}, \alpha^{36}, \alpha^{72}, \alpha^{144}, \alpha^{33}, \alpha^{66}, \alpha^{132}\}, \\
\Omega_{11} &= \{\alpha^{11}, \alpha^{22}, \alpha^{44}, \alpha^{88}, \alpha^{176}, \alpha^{97}, \alpha^{194}, \alpha^{133}\}, \\
\Omega_{13} &= \{\alpha^{13}, \alpha^{26}, \alpha^{52}, \alpha^{104}, \alpha^{208}, \alpha^{161}, \alpha^{67}, \alpha^{134}\}, \\
\Omega_{15} &= \{\alpha^{15}, \alpha^{30}, \alpha^{60}, \alpha^{120}, \alpha^{240}, \alpha^{225}, \alpha^{195}, \alpha^{135}\}, \\
\Omega_{37} &= \{\alpha^{37}, \alpha^{74}, \alpha^{148}, \alpha^{41}, \alpha^{82}, \alpha^{164}, \alpha^{73}, \alpha^{146}\}, \\
\Omega_{45} &= \{\alpha^{45}, \alpha^{90}, \alpha^{180}, \alpha^{105}, \alpha^{210}, \alpha^{165}, \alpha^{75}, \alpha^{150}\}.
\end{align*}

The element with the largest power in the above conjugate sets is $\alpha^{240}$. The dual code $\mathcal{C}_h$ of $\mathcal{C}_0$ is the row space of $\mathbf{H}$. It is a $(255, 80)$ cyclic code whose generator polynomial has $1, \alpha, \alpha^2, \ldots, \alpha^{14}$ consecutive powers of $\alpha$ as roots. The BCH-bound on the minimum distance $d_h$ of $\mathcal{C}_h$ is $16$. Since the row weight of $\mathbf{H}$ is $16$, the true minimum distance $d_h$ of $\mathcal{C}_h$ is $16$.

Set the shifting factor $t = 2$. Using the $(255, 175)$ cyclic 2-dimensional EG-LDPC code, we can construct a rate-$1/2$ $(2, 1, 40)$ EG-LDPC-convolutional code of memory order $40$ with local minimum distance $17$ and free distance lower bounded by $\min\{17, 2 \times 16\} = 17$. The code is a self-orthogonal code and is capable of correcting up to $8$ random errors over the binary symmetric channel (BSC). The code is much better than the code $(2, 1, 179)$ self-orthogonal code with random error correcting capability $8$ listed in \cite[Table 13.2(a)]{lin2004}.

If we set the shifting factor $t = 4$, we can construct a rate-$1/4$ $(4, 1, 20)$ EG-LDPC-convolutional code of memory order $20$ with local minimum distance $17$ and free distance lower bounded by $17$. If we set $t = 6$, we can construct a rate-$1/6$ $(6, 1, 14)$ EG-LDPC-convolutional code of memory order $14$ with local minimum distance $17$ and free distance lower bounded by $17$. Both rate-$1/4$ and $1/6$ codes are self-orthogonal codes and can correct up to $8$ random errors over the BSC. No self-orthogonal convolutional codes with such correcting capability are found in literature for comparison.

If we set $t = 3$, we can construct a rate-$1/3$ $(3, 1, 27)$ with local minimum distance $17$. With OSML decoding, the code is also capable of correcting up to $8$ random errors over the BSC and is better than the rate-$1/3$ $(3, 1, 35)$ orthogonalizable code listed in~\cite[Table 13.3(b)]{lin2004}.
\end{example}

The developments above show that rate-$1/t$ self-orthogonal convolutional codes can be effectively constructed based on a cyclic two-dimensional EG-LDPC code.

\subsection*{Example 11}
\begin{example} \label{eg11}
In the following, we continue Example \ref{eg10} to construct EG-LDPC convolutional codes with rates $3/4$, $4/5$, $3/5$, $2/5$, and $1/5$ using the $(255, 175)$ EG-LDPC code $\mathcal{C}_0$ as the mother code.

To construct a rate-$3/4$ convolutional code, we need to construct a 4-fold full rank code chain $\mathcal{C}_0 \supset \mathcal{C}_1 \supset \mathcal{C}_2$ of length $r = 3$. Based on the condition on roots of the generator polynomial $\bg_0(X)$ of $\mathcal{C}_0$, we find that $\alpha^{85}$ and $\alpha^{119}$ are not roots of $\bg_0(X)$. The minimal polynomials of $\alpha^{85}$ and $\alpha^{119}$ are $\mathbf{m}_{85}(X) = 1 + X + X^2$ and $\mathbf{m}_{119}(X) = 1 + X^3 + X^4$, respectively. Next, we form the following generator multipliers for the two descendant codes $\mathcal{C}_1$ and $\mathcal{C}_2$ of $\mathcal{C}_0$:
\begin{align*}
\mathbf{f}_1(X) &= \mathbf{m}_{85}(X) = 1 + X + X^2, \\
\mathbf{f}_2(X) &= \mathbf{m}_{85}(X) \mathbf{m}_{119}(X) = 1 + X + X^2 + X^3 + X^6.
\end{align*}
The three generator multipliers $\mathbf{f}_0(X) = 1$, $\mathbf{f}_1(X)$, and $\mathbf{f}_2(X)$ satisfy the ratio condition given by Theorem \ref{Thm1}. The 4-fold GMD matrix formed by the 4-fold decompositions of $\mathbf{f}_0(X) = 1$, $\mathbf{f}_1(X)$, and $\mathbf{f}_2(X)$ is
\[
\bF(X) =
\begin{bmatrix}
1 & 0 & 0 & 0 \\
1 & 1 & 1 & 0 \\
1 & 1 & 1 + X & 1
\end{bmatrix}
\]
which is a full rank matrix.

Using $\bg_0(X)$ and $\mathbf{f}_1(X)$ and $\mathbf{f}_2(X)$, we form the generator polynomials of the two descendant codes $\mathcal{C}_1$ and $\mathcal{C}_2$ of the mother code $\mathcal{C}_0$ as follows: $\bg_1(X) = \mathbf{f}_1(X)\bg_0(X)$ and $\bg_2(X) = \mathbf{f}_2(X)\bg_0(X)$. The two cyclic codes $\mathcal{C}_1$ and $\mathcal{C}_2$ generated by $\bg_1(X)$ and $\bg_2(X)$ are $(255, 173)$ and $(255, 169)$ subcodes of $\mathcal{C}_0$ with minimum distances at least $17$. The largest powers among the roots of $\bg_1(X)$ and $\bg_2(X)$ are still $240$. Hence, the minimum distances of the dual codes of $\mathcal{C}_1$ and $\mathcal{C}_2$ are still lower bounded by $16$.

The three codes $\mathcal{C}_0, \mathcal{C}_1, \mathcal{C}_2$ form a full rank code chain $\mathcal{C}_0 \supset \mathcal{C}_1 \supset \mathcal{C}_2$ of length $3$. Since the 4-fold GMD-matrix for the code chain is a full rank matrix, the code chain $\mathcal{C}_0 \supset \mathcal{C}_1 \supset \mathcal{C}_2$ is a 4-fold full rank code chain. Set the shifting span $b = (169-1)/4 = 42$. The $(3, 4)$-PCS-local subcode of the mother code $\mathcal{C}_0$ formed by the code chain is a $(255, 129)$ code with minimum distance $17$.

Using the code chain $\mathcal{C}_0 \supset \mathcal{C}_1 \supset \mathcal{C}_2$, we can construct a rate-$3/4$ $(4, 3, 63)$ EG-LDPC convolutional code $\mathcal{C}_{\mathrm{convol}}$ of constraint length $63$ with local minimum distance $17$. $\mathcal{C}_{\mathrm{convol}}$ is the direct-sum of its 3 rate-$1/4$ constituent convolutional codes of memory orders $20$, $21$, and $22$, respectively, all with local minimum distances of $17$ and free distances lower bounded by $17$.

To construct a rate-$4/5$ convolutional code, we add a code $\mathcal{C}_3$ to the code chain $\mathcal{C}_0 \supset \mathcal{C}_1 \supset \mathcal{C}_2$ as the end code. The generator multiplier for $\mathcal{C}_3$ is designed as
\[
\mathbf{f}_3(X) = \mathbf{m}_{85}(X) \mathbf{m}_{119}(X) \mathbf{m}_{87}(X) = 1 + X^4 + X^5 + X^7 + X^8 + X^{13} + X^{14}
\]
where $\mathbf{m}_{87}(X) = 1 + X + X^5 + X^7 + X^8$ is the minimal polynomial of the element $\alpha^{87}$ in $\mathrm{GF}(2^8)$ which is not a root of $\bg_0(X)$. The generator polynomial of $\mathcal{C}_3$ is $\bg_3(X) = \mathbf{f}_3(X)\bg_0(X)$ and the code $\mathcal{C}_3$ generated by $\bg_3(X)$ is a $(255, 161)$ code with minimum distance at least $17$. The largest power of $\bg_3(X)$ is still $240$. Hence, the minimum distance of the dual code of $\mathcal{C}_3$ is lower bounded by $16$.

The 4 codes $\mathcal{C}_0, \mathcal{C}_1, \mathcal{C}_2, \mathcal{C}_3$ form a code chain $\mathcal{C}_0 \supset \mathcal{C}_1 \supset \mathcal{C}_2 \supset \mathcal{C}_3$ of length $4$. The 4 generator multipliers $\mathbf{f}_0(X) = 1, \mathbf{f}_1(X), \mathbf{f}_2(X), \mathbf{f}_3(X)$ satisfy the ratio condition given by Theorem \ref{Thm1}. The 5-fold GMD matrix formed by the 5-fold decompositions of $\mathbf{f}_0(X), \mathbf{f}_1(X), \mathbf{f}_2(X), \mathbf{f}_3(X)$ is
\[
\bF(X) =
\begin{bmatrix}
1 & 0 & 0 & 0 & 0 \\
1 & 1 & 1 & 0 & 0 \\
1 & 1 + X & 1 & 1 + X & 0 \\
1 + X & 0 & X & X + X^2 & 1 + X
\end{bmatrix},
\]
which is a full rank matrix. Hence, the 5-fold code chain $\mathcal{C}_0 \supset \mathcal{C}_1 \supset \mathcal{C}_2 \supset \mathcal{C}_3$ is a 5-fold full rank code chain.

Note that the dimension $k_3$ is $161$ and $k_3 - 1 = 160$ which is divisible by $5$. Set shifting span $b = (161-1)/5 = 32$. The $(4, 5)$-PCS-local subcode of $\mathcal{C}_0$ is a $(255, 132)$ code with minimum distance $17$. Using the code chain $\mathcal{C}_0 \supset \mathcal{C}_1 \supset \mathcal{C}_2 \supset \mathcal{C}_3$, we can construct a rate-$4/5$ $(5, 4, 70)$ EG-LDPC convolutional code $\mathcal{C}_{\mathrm{convol}}(\bg_0, \bg_1, \bg_2, \bg_3)$ of constraint length $70$ with local minimum distance $17$. The convolutional code $\mathcal{C}_{\mathrm{convol}}(\bg_0, \bg_1, \bg_2, \bg_3)$ is the direct-sum of its 4 rate-$1/5$ constituent convolutional codes of memory orders $16$, $17$, $18$, and $19$, respectively, all with local minimum distance at least $17$.

Suppose we use the subchain $\mathcal{C}_1 \supset \mathcal{C}_2 \supset \mathcal{C}_3$ of the code chain $\mathcal{C}_0 \supset \mathcal{C}_1 \supset \mathcal{C}_2 \supset \mathcal{C}_3$ for convolutional code construction. The generator polynomial of the mother code $\mathcal{C}_1$ of the subchain is $\bg_1(X) = \bg_0(X)\mathbf{f}_1(X)$. Using the generator multipliers $\mathbf{f}_2(X)/\mathbf{f}_1(X) = \mathbf{m}_{119}(X)$ and $\mathbf{f}_3(X)/\mathbf{f}_1(X) = \mathbf{m}_{119}(X) \mathbf{m}_{87}(X)$ for $\mathcal{C}_2$ and $\mathcal{C}_3$, respectively, we can construct a rate-$3/5$ $(5, 3, 54)$ convolutional code $\mathcal{C}_{\mathrm{convol}}(\bg_1, \bg_2, \bg_3)$ of constraint length $54$ with local minimum distance at least $17$. If we use the subchain $\mathcal{C}_2 \supset \mathcal{C}_3$ of the code chain, the generator polynomial $\bg_2(X) = \bg_0(X) \mathbf{m}_{85}(X) \mathbf{m}_{119}(X)$ of $\mathcal{C}_2$, and the generator multiplier $\mathbf{f}_3(X)/\mathbf{f}_2(X) = \mathbf{m}_{87}(X)$ for $\mathcal{C}_3$, we can construct a rate-$2/5$ $(5, 2, 37)$ convolutional code $\mathcal{C}_{\mathrm{convol}}(\bg_2, \bg_3)$ of constraint length $37$ with local minimum distance at least $17$. Using $\mathcal{C}_3$ alone, we can construct a rate-$1/5$ $(5, 1, 19)$ convolutional code $\mathcal{C}_{\mathrm{convol}}(\bg_3)$ of memory order $19$ with local minimum distance at least $17$.
\end{example}

\subsection*{Example 12}
\begin{example} \label{eg12}
Let $s = 6$. The 2-dimensional Euclidean geometry $\mathrm{EG}(2, 2^6)$ over $\mathrm{GF}(2^6)$ consists of $4160$ lines and $4096$ points, each line consisting of $64$ points and each point lying on $65$ lines. If we remove the origin from $\mathrm{EG}(2, 2^6)$ and the $65$ lines passing through it, we obtain the cyclic subgeometry $\mathrm{EG}^*(2, 2^6)$ which consists of $4095$ points and $4095$ lines.

Using the incidence vectors of the lines in $\mathrm{EG}^*(2, 2^6)$, we form the incidence matrix $\mathbf{H}$ of $\mathrm{EG}^*(2, 2^6)$ which is a $4095 \times 4095$ circulant with both column and row weights $64$. The null space of $\mathbf{H}$ gives a $(4095, 3367)$ cyclic 2-dimensional EG-LDPC code $\mathcal{C}_0$ with minimum distance $65$ whose Tanner graph contains no harmful trapping sets of size less than $65$~\cite{huang2012,diao2013}. Each variable node (VN) in its Tanner graph is connected to $4032$ other VNs by paths of length $2$. This high degree of VN connectivity allows all the VNs to share information in each decoding iteration and enhance their reliabilities using either the SPA or the MSA iterative decoding algorithm. The decoding can converge very quickly.

The generator polynomial $\bg_0(X)$ of $\mathcal{C}_0$ is a polynomial over $\mathrm{GF}(2)$ of degree $728$. Let $\alpha$ be a primitive element in the extension field $\mathrm{GF}(2^{12})$ of $\mathrm{GF}(2^6)$. For $0 \leq c < 4096$, $\alpha^c$ is a root of $\bg_0(X)$ if and only if $c$ satisfies the following condition:
\begin{equation*}
0 < \max_{0 \leq \ell < 6} W_{64}(c^{(\ell)}) \leq 63.
\end{equation*}
The dual code $\mathcal{C}_h$ of $\mathcal{C}_0$ is a $(4095, 728)$ cyclic EG-LDPC code which is the row space of $\mathbf{H}$. The minimum distance of $\mathcal{C}_h$ is $64$.

The bit error rate (BER) and block error rate (BLER) performances of the code over the AWGN channel using BPSK modulation decoded based on $\mathbf{H}$ with $5$, $10$, and $50$ iterations of the MSA scaled by a factor of $0.625$ are shown in Fig. \ref{fig:ch9_3_awgn_bec_a} ~\cite[Figure 12.3]{lin2022}. At the BER $10^{-8}$ with $50$ iterations of the MSA, the code performs $1.82$ dB from the Shannon limit without visible error-floor. Due to the large VN-connectivity of the Tanner graph of the code, the MSA decoding of the code converges quickly as shown in Fig. \ref{fig:ch9_3_awgn_bec_a}. Since the parity-check matrix $\mathbf{H}$ of $\mathcal{C}_0$ has self-orthogonal structure, it can be decoded with OSML-decoding to correct up to $32$ random errors over the BSC and up to $64$ random erasures over the binary erasure channel (BEC)~\cite{lin2022}. The unrecovered erasure error rate (UEBR) and unrecovered erasure block error rate (UEBLR) performances over BEC decoded with the SPA are shown in Fig. \ref{fig:ch9_3_awgn_bec_b}.

\begin{figure}
\centering
\subfigure[]
{
\includegraphics[height=3.0in,width=0.8\textwidth]{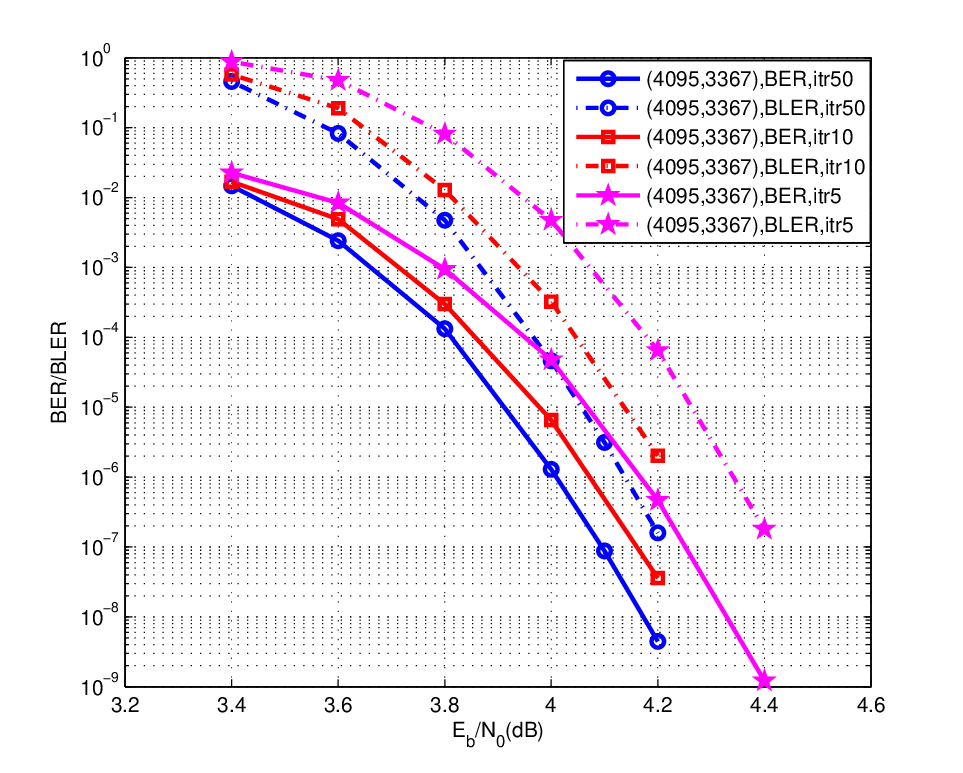}
\label{fig:ch9_3_awgn_bec_a}
}
\subfigure[]
{
\includegraphics[height=3.0in,width=0.8\textwidth]{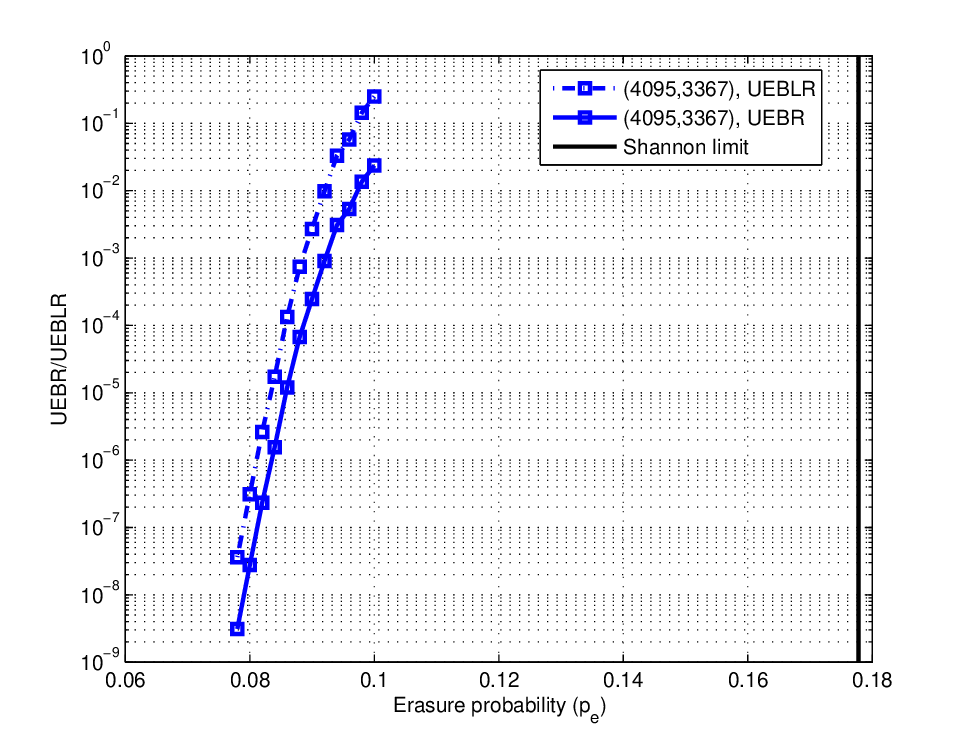}
\label{fig:ch9_3_awgn_bec_b}
}
\caption{The error performances of the $(4095, 3367)$ cyclic 2-dimensional EG-LDPC code given in Example \ref{eg12} over: (a) AWGN channel; and (b) BEC.}
 \label{fig:ch9_3_awgn_bec}
\end{figure} 

Set shifting factor $t = 2$ and shifting span $b = (3367-1)/2 = 1683$. The $(1, 2)$-PCS-local subcode $\mathcal{C}_{\mathrm{local}}$ of $\mathcal{C}_0$ is a $(4095, 1684)$ EG-LDPC code with minimum distance $65$. Using $\mathcal{C}_0$, we can construct a rate-$1/2$ $(2, 1, 364)$ EG-LDPC convolutional code $\mathcal{C}_{\mathrm{convol}}(\bg_0)$ of memory order $364$ with local minimum distance $65$ and free distance lower bounded by $\min\{65, 2 \times 64\} = 65$. It is a self-orthogonal convolutional code which can correct up to $32$ random errors over the BSC with OSML-decoding. It is far better than the last self-orthogonal convolutional code listed in \cite[Table 13.2]{lin2004} which is a $(2, 1, 425)$ convolutional code with error correcting capability only $12$.

Suppose we set the shifting factor $t = 3$ and shifting span $b = (3367-1)/3 = 1122$. Using the $(4095, 3367)$ EG-LDPC code, we can construct a rate-$1/3$ $(3, 1, 243)$ EG-LDPC convolutional code $\mathcal{C}_{\mathrm{convol}}(\bg_0)$ of memory order $243$ with local minimum distance $65$. It is also a self-orthogonal convolutional code.
\end{example}

\subsection*{Example 13} 
\begin{example} \label{eg13}
Continue Example \ref{eg12}. In the following, we form an EG-LDPC code chain $\mathcal{C}_0 \supset \mathcal{C}_1 \supset \mathcal{C}_2 \supset \mathcal{C}_3$ of length $4$ using the $(4095, 3367)$ 2-dimensional EG-LDPC code $\mathcal{C}_0$ constructed in Example \ref{eg12} as the mother code. Using the code chain and the shifting factor $t = 5$, we can construct convolutional codes of rates $4/5$, $3/5$, $2/5$, and $1/5$.

First, we find that the elements $\alpha^{127}$, $\alpha^{255}$, and $\alpha^{511}$ in $\mathrm{GF}(2^{12})$ are not roots in the generator polynomial $\bg_0(X)$ of $\mathcal{C}_0$. Next, we choose the following generator multipliers for the descendant codes $\mathcal{C}_1$, $\mathcal{C}_2$, and $\mathcal{C}_3$ of $\mathcal{C}_0$:
\begin{align*}
\mathbf{f}_1(X) &= \mathbf{m}_{127}(X), \\
\mathbf{f}_2(X) &= \mathbf{m}_{127}(X) \mathbf{m}_{255}(X), \\
\mathbf{f}_3(X) &= \mathbf{m}_{127}(X) \mathbf{m}_{255}(X) \mathbf{m}_{511}(X),
\end{align*}
where $\mathbf{m}_{127}(X)$, $\mathbf{m}_{255}(X)$, and $\mathbf{m}_{511}(X)$ are minimal polynomials of the elements $\alpha^{127}$, $\alpha^{255}$, and $\alpha^{511}$ in $\mathrm{GF}(2^{12})$. The generator multipliers $\mathbf{f}_0(X) = 1$, $\mathbf{f}_1(X)$, $\mathbf{f}_2(X)$, and $\mathbf{f}_3(X)$ for the codes $\mathcal{C}_0$, $\mathcal{C}_1$, $\mathcal{C}_2$, and $\mathcal{C}_3$ satisfy the ratio condition given by Theorem \ref{Thm1}.

With the chosen generator multipliers, the generator polynomials of the descendant codes $\mathcal{C}_1$, $\mathcal{C}_2$, and $\mathcal{C}_3$ of $\mathcal{C}_0$ are $\bg_1(X) = \mathbf{f}_1(X)\bg_0(X)$, $\bg_2(X) = \mathbf{f}_2(X)\bg_0(X)$, and $\bg_3(X) = \mathbf{f}_3(X)\bg_0(X)$, respectively. The 5-fold GMD-matrix formed by the 5-fold decompositions of the multipliers $\mathbf{f}_0(X)$, $\mathbf{f}_1(X)$, $\mathbf{f}_2(X)$, and $\mathbf{f}_3(X)$ is a full rank matrix. Hence, the code chain $\mathcal{C}_0 \supset \mathcal{C}_1 \supset \mathcal{C}_2 \supset \mathcal{C}_3$ is a 5-fold full rank EG-LDPC code chain. The codes $\mathcal{C}_1$, $\mathcal{C}_2$, and $\mathcal{C}_3$ are $(4095, 3355)$, $(4095, 3343)$, and $(4095, 3331)$ codes, respectively, all with minimum distances at least $65$. Set the shifting span $b = (3331-1)/5 = 666$. The $(4, 5)$-PCS-local subcode $\mathcal{C}_{\mathrm{local}}$ of $\mathcal{C}_0$ is a $(4095, 2668)$ code with local minimum distance at least $65$.

Using the code chain $\mathcal{C}_0 \supset \mathcal{C}_1 \supset \mathcal{C}_2 \supset \mathcal{C}_3$ and the 5-fold decompositions of the generator polynomials $\bg_0(X)$, $\bg_1(X)$, $\bg_2(X)$, and $g_3(X)$, we can construct a rate-$4/5$ $(5, 4, 598)$ EG-LDPC convolutional code $\mathcal{C}_{\mathrm{convol}}(\bg_0, \bg_1, \bg_2, \bg_3)$ of constraint length $598$ with local minimum distance at least $65$. It consists of an infinite number of copies of the $(4095, 2668)$ $(4, 5)$-PCS-local subcode $\mathcal{C}_{\mathrm{local}}$ of $\mathcal{C}_0$ as local codes. The rate-$4/5$ convolutional code $\mathcal{C}_{\mathrm{convol}}(\bg_0, \bg_1, \bg_2, \bg_3)$ is the direct-sum of its 4 rate-$1/5$ constituent codes $\mathcal{C}_{\mathrm{convol}}(\bg_0)$, $\mathcal{C}_{\mathrm{convol}}(\bg_1)$, $\mathcal{C}_{\mathrm{convol}}(\bg_2)$, and $\mathcal{C}_{\mathrm{convol}}(\bg_3)$ which are $(5, 1, 146)$, $(5, 1, 148)$, $(5, 1, 151)$, and $(5, 1, 153)$ convolutional codes all with local minimum distances at least $65$.

Using the subchains $\mathcal{C}_1 \supset \mathcal{C}_2 \supset \mathcal{C}_3$, $\mathcal{C}_2 \supset \mathcal{C}_3$, and $\mathcal{C}_3$ of the code chain $\mathcal{C}_0 \supset \mathcal{C}_1 \supset \mathcal{C}_2 \supset \mathcal{C}_3$, we can construct rates $3/5$, $2/5$, and $1/5$ convolutional codes of constraint lengths $452$, $304$, and $153$, respectively, all with local minimum distance at least $65$.
\end{example}

Note that a cyclic EG-LDPC code can be constructed by using the lines of multi-dimensional Euclidean geometries~\cite{lin2004,kou2001}.

\subsection{Cyclic LDPC and Convolutional Codes Constructed Based on Two-dimensional Projective Geometries}

In the following, we present a class of cyclic LDPC codes constructed based on 2-dimensional projective geometries (PGs) over fields of characteristic $2$~\cite{lin2004,kou2001,rudolph1964,weldon1967,smith1967,lin2022}. Using these PG-LDPC codes, convolutional codes can be constructed. PG-LDPC-convolutional codes are self-orthogonal.

Let $s$ be a positive integer, $\mathrm{GF}(2^{3s})$ be the extension field of the field $\mathrm{GF}(2^s)$, and $\alpha$ be a primitive element in $\mathrm{GF}(2^{3s})$. The order of $\alpha$ is $2^{3s} - 1$ which can be factored as the product of $2^s - 1$ and $2^{2s} + 2^s + 1$. Let
\begin{equation}
n = 2^{2s} + 2^s + 1.
\end{equation}
The 2-dimensional projective geometry over the field $\mathrm{GF}(2^s)$, denoted by $\mathrm{PG}(2, 2^s)$, consists of $n$ points and $n$ lines. Each line in $\mathrm{PG}(2, 2^s)$ consists of $2^s + 1$ points, and each point in $\mathrm{PG}(2, 2^s)$ is intersected by (or on) $2^s + 1$ lines. The $n$ points in $\mathrm{PG}(2, 2^s)$ are represented by the $n$ elements $\alpha^0 = 1, \alpha^1, \alpha^2, \ldots, \alpha^{n-1}$ in $\mathrm{GF}(2^{3s})$. The incidence vector of a line $L$ in $\mathrm{PG}(2, 2^s)$ is an $n$-tuple $\mathbf{v} = (v_0, v_1, \ldots, v_{n-1})$ over $\mathrm{GF}(2)$ with coordinates corresponding to the points $\alpha^0 = 1, \alpha^1, \ldots, \alpha^{n-1}$ of $\mathrm{PG}(2, 2^s)$, where $v_j$ is $1$ if $\alpha^j$ is a point on $L$, otherwise $v_j$ is $0$ for $0 \leq j < n$. The incidence vectors of lines in $\mathrm{PG}(2, 2^s)$ have cyclic structure. The $n$ cyclic-shifts of the incidence vector of a line (any line) in $\mathrm{PG}(2, 2^s)$ give the incidence vectors of all the $n$ lines in $\mathrm{PG}(2, 2^s)$. The weight of the incidence vector of a line is $2^s + 1$.

Let $\mathbf{H}$ be the $n \times n$ circulant over $\mathrm{GF}(2)$ formed by the incidence vectors of the $n$ lines in $\mathrm{PG}(2, 2^s)$ with cyclic arrangement. The matrix $\mathbf{H}$ is the incidence matrix of $\mathrm{PG}(2, 2^s)$ with both column and row weights $2^s + 1$ which satisfies the RC-constraint and has orthogonal structure. The rank of $\mathbf{H}$ is $3^s + 1$. The null space of $\mathbf{H}$ over $\mathrm{GF}(2)$ gives an $(n, n - 3^s - 1)$ two-dimensional cyclic PG-LDPC code $\mathcal{C}$ with minimum distance at least $2^s + 2$. The Tanner graph of $\mathcal{C}$ has girth $6$ and contains no harmful trapping sets with size smaller than $2^s + 1$~\cite{huang2012,diao2013}. For large $s$, decoded iteratively with the SPA or MSA, $\mathcal{C}$ can achieve a very low error rate without error-floor and perform close to the Shannon limit~\cite{lin2004,lin2022}. Furthermore, $\mathcal{C}$ is a self-orthogonal code and can be decoded with OSML-decoding based on the incidence matrix $\mathbf{H}$ of $\mathrm{PG}(2, 2^s)$ to correct $2^{s-1}$ or fewer random errors over the BSC.

Let $h$ be a non-negative integer less than $2^{3s} - 1$. For a nonnegative integer $\ell$ with $0 \leq \ell < s$, let $h^{(\ell)}$ be the remainder resulting from dividing $h 2^\ell$ by $2^{3s} - 1$. Express $h^{(\ell)}$ in radix-$2^s$ form,
\begin{equation}
h^{(\ell)} = h_0^{(\ell)} + h_1^{(\ell)} 2^s + h_2^{(\ell)} 2^{2s},
\end{equation}
where $0 \leq h_i^{(\ell)} < 2^s$ and $0 \leq i \leq 2$. The $2^s$-weight of $h^{(\ell)}$ is
\begin{equation}
W_{2^s}(h^{(\ell)}) = h_0^{(\ell)} + h_1^{(\ell)} + h_2^{(\ell)}.
\end{equation}
The generator polynomial $\bg(X)$ of the cyclic 2-dimensional PG-LDPC code $\mathcal{C}$ has $\alpha^h$ as a root if and only if $h$ is divisible by $2^s - 1$ and the following condition holds~\cite{lin2004,kou2001, lin2022}:
\begin{equation} \label{eq:PG_root_condition}
0 < \max_{0 \leq \ell < s} W_{2^s}(h^{(\ell)}) = t(2^s - 1)
\end{equation}
with $0 \leq t \leq 2$. Since $h$ is less than $2^{3s} - 1$ and divisible by $2^s - 1$, $h$ must be a product of $j$ and $2^s - 1$ with $0 \leq j < n$, i.e., $h = j(2^s - 1)$.

Let $\beta = \alpha^{2^s - 1}$. Then, the order of $\beta$ is $n$, i.e., $\beta^n = 1$. The set
\[
\Lambda = \{\beta^0 = 1, \beta, \beta^2, \ldots, \beta^{n-1}\}
\]
of $n$ elements in $\mathrm{GF}(2^{3s})$ forms a cyclic subgroup of the multiplicative group of $\mathrm{GF}(2^{3s})$. It follows from the necessary and sufficient condition on a root of $\bg(X)$ given by (\ref{eq:PG_root_condition}) that a root of $\bg(X)$ must be an element in $\Lambda$. Hence, $\bg(X)$ is a factor of $X^n + 1$. Based on the condition given by (\ref{eq:PG_root_condition}), we find that $\bg(X)$ has the following $2^s + 1$ consecutive powers of $\beta$, $\beta^0 = 1, \beta, \beta^2, \ldots, \beta^{2^s}$ as roots and $\beta^{2^s + 1}$ is not a root. Then, it follows from the BCH-bound that the minimum distance of the $(n, n - 3^s - 1)$ cyclic two-dimensional PG-LDPC code $\mathcal{C}$ is at least $2^s + 2$. The true minimum distance is exactly $2^s + 2$~\cite{lin2004}. Using the condition given by (\ref{eq:PG_root_condition}), we find all the roots of $\bg(X)$. Then, $\bg(X)$ is the LCM of the minimal polynomials of the roots of $\bg(X)$ which has degree $3^s + 1$. The row space of the parity-check matrix $\mathbf{H}$ gives an $(n, 3^s + 1)$ cyclic code which is the dual code $\mathcal{C}_h$ of $\mathcal{C}$. The minimum distance $d_h$ of $\mathcal{C}_h$ is $2^s + 1$, which is the weight of the incidence vector of a line in $\mathrm{PG}(2, 2^s)$, i.e., the number of points on a line in $\mathrm{PG}(2, 2^s)$.

Using a two-dimensional PG-LDPC code $\mathcal{C}$ as the mother code in a code chain, we can construct a PG-LDPC convolutional code. For an even shifting factor $t$, the rate-$1/t$ convolutional code constructed based on the 2-dimensional PG-LDPC code is a $(t, 1, \lceil (3^s + 1)/t \rceil)$ convolutional code of memory order upper bounded by $\lceil (3^s + 1)/t \rceil$ with local minimum distance $2^s + 2$ and free distance lower bounded by $\min\{2^s + 2, 2(2^s + 1)\} = 2^s + 2$.

\subsection*{Example 14}
\begin{example} \label{eg14}
Set $s = 5$ and $n = (2^{15} - 1) / (2^5 - 1) = 1057$. The two-dimensional projective geometry $\mathrm{PG}(2, 2^5)$ over $\mathrm{GF}(2^5)$ consists of $n = 1057$ points and $1057$ lines. Each line in $\mathrm{PG}(2, 2^5)$ consists of $33$ points, and each point is intersected by $33$ lines. The incidence matrix $\mathbf{H}$ of $\mathrm{PG}(2, 2^5)$ is a $1057 \times 1057$ circulant with both column and row weights $33$ which is a low-density matrix and satisfies the RC-constraint. The null space of $\mathbf{H}$ over $\mathrm{GF}(2)$ gives a $(1057, 813)$ cyclic PG-LDPC code $\mathcal{C}$ with minimum distance $34$. The Tanner graph of $\mathcal{C}$ has girth $6$ and contains no harmful trapping set with size smaller than $34$. Each VN in the Tanner graph is connected to other $1056$ VNs in the graph, a very large degree of VN-connectivity.

The bit-error performance of $\mathcal{C}$ decoded iteratively with $5$, $10$, and $50$ iterations of the MSA (with scaling factor $0.625$) over the AWGN channel using BPSK signaling is shown in Fig. \ref{fig:eg_ch9_2} \cite[Figure 12.4]{lin2022}. From the figure, we see that with only $5$ iterations of the MSA, the code achieves a BER $10^{-10}$ without error-floor. Due to the high degree of VN-connectivity, decoding converges very fast. From the figure, we see that the performance gaps between $5$, $10$, and $50$ iterations are very small.

\begin{figure}
\centering
\includegraphics[height=3.0in,width=0.8\textwidth]{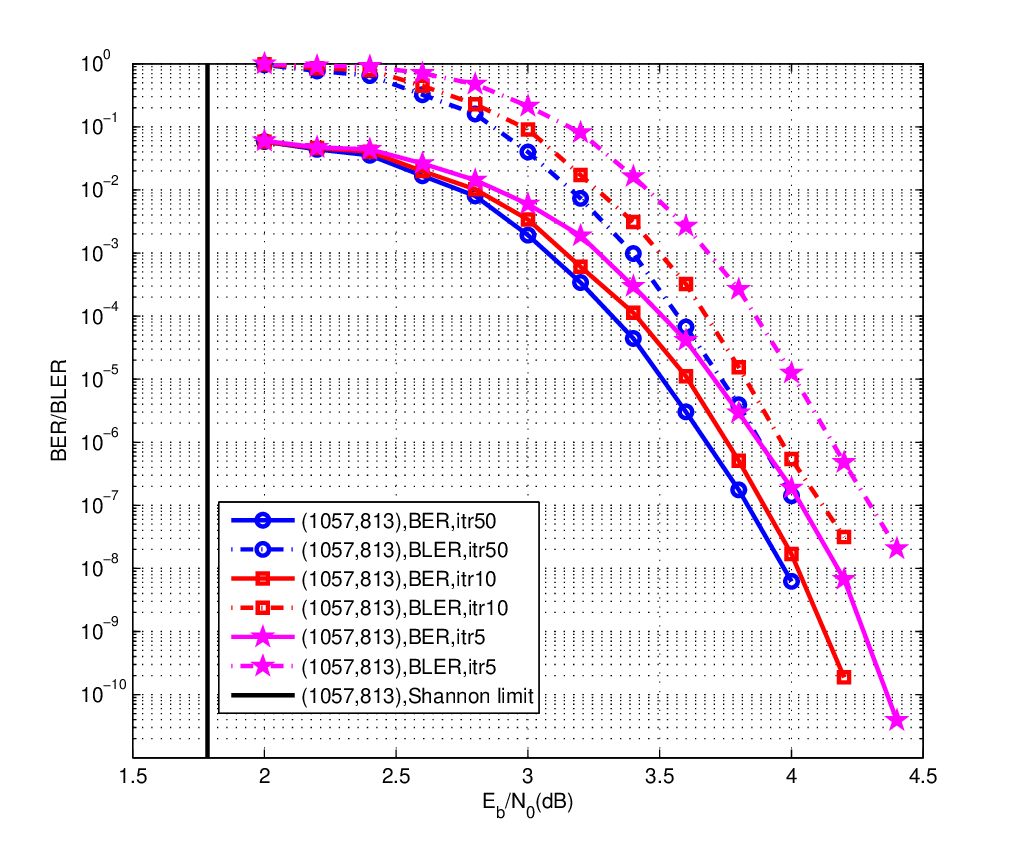}
\caption{The BER and BLER performances of the $(1057, 813)$ cyclic PG-LDPC code given in Example \ref{eg14}.}
\label{fig:eg_ch9_2}
\end{figure}

The dual code of $\mathcal{C}$ is a $(1057, 244)$ code with minimum distance $33$. Note that the dimension $k$ of the $(1057, 813)$ PG-LDPC code is $813$ and $k-1 = 812$ is divisible by $2$ and $4$. Set $t$ to $2$ and $4$. Using $\mathcal{C}$, we can construct convolutional codes with rates $1/2$ and $1/4$, which are $(2, 1, 122)$ and $(4, 1, 61)$ convolutional codes with memory orders $122$ and $61$, respectively. They all have local minimum distances $34$ and free distances lower bounded by $\min\{34, 2 \times 33\} = 34$. Also, they are self-orthogonal. The rate-$1/2$ convolutional code is capable of correcting $16$ or fewer random errors over the BSC with the OSML-decoding. The code is far better than the self-orthogonal code with memory order $425$ listed at the bottom of \cite[Table 13.2]{lin2004}.

Using the $(1057, 813)$ PG-LDPC code $\mathcal{C}$ as the mother code, full-rank code chains of various lengths can be designed. Based on these coding chains, PG-convolutional codes of various rates with local minimum distances at least $34$ can be designed.

Suppose we set $s = 7$. Using the lines of the 2-dimensional projective geometry $\mathrm{PG}(2, 2^7)$, we can construct a $(16513, 14325)$ 2-dimensional PG-LDPC code $\mathcal{C}$ with minimum distance $130$. The dual code of $\mathcal{C}$ is a $(16513, 2188)$ code with minimum distance $129$. Using the code $\mathcal{C}$ as the mother code, we construct code chains of various lengths for constructing PG-LDPC convolutional codes of various rates and constraint lengths with local minimum distance at least $130$. Suppose we set $t = 2$. Using $\mathcal{C}$, we construct a rate $(2, 1, 1094)$ PG-LDPC convolutional code of memory order $1094$ with both local minimum and free distances lower bounded by $130$.
\end{example}

\section{Construction of Convolutional Codes Based on a Class of Doubly Transitive Invariant Cyclic Codes}\label{sect8:convol_from_DTI}

In this section, we present a class of doubly transitive invariant (DTI) cyclic codes. The dual codes of DTI cyclic codes have self-orthogonal structure which allows them to be decoded with OSML decoding and an iterative decoding algorithm based on the belief-propagation principle. Using the dual codes of the DTI cyclic codes, convolutional codes of various rates, constraint lengths and local minimum distances can be constructed.

\subsection{Doubly Transitive Invariant Cyclic Codes}

Let $\mathcal{C}$ be a $(2^m-1, k)$ cyclic code over $\mathrm{GF}(2)$ of length $2^m-1$, $m > 3$, generated by a polynomial $\bg(X)$ over $\mathrm{GF}(2)$ with roots in $\mathrm{GF}(2^m)$ in which $1$ is not a root. Extend each codeword $\mathbf{v} = (v_0, v_1, \ldots, v_{2^m-2})$ in $\mathcal{C}$ by adding an overall parity bit, denoted by $v_\infty = v_0 + v_1 + \cdots + v_{2^m-2}$ to the left of $\mathbf{v}$. Adding the overall parity bit $v_\infty$ to $\mathbf{v}$, we obtain an extended codeword $\mathbf{v}_{\mathrm{ext}} = (v_\infty, v_0, v_1, \ldots, v_{2^m-2})$. The $2^k$ extended codewords of $\mathcal{C}$ form a $(2^m, k)$ linear code $\mathcal{C}_{\mathrm{ext}}$, called an extension of $\mathcal{C}$.

Let $\alpha$ be a primitive element in $\mathrm{GF}(2^m)$. Label the components $v_\infty, v_0, v_1, \ldots, v_{2^m-2}$ of a codeword $\mathbf{v}_{\mathrm{ext}}$ in $\mathcal{C}_{\mathrm{ext}}$ with $\alpha^\infty = 0, \alpha^0, \alpha^1, \ldots, \alpha^{2^m-2}$, called the \hl{location numbers}. Let $Y$ denote the location of a component in $\mathbf{v}_{\mathrm{ext}}$. Consider a permutation that carries the component at the location $Y$ to the location $Z = aY + b$, where $a$ and $b$ are elements in the field $\mathrm{GF}(2^m)$ and $a \neq 0$. This permutation is called an \emph{affine permutation}~\cite{carmichael1956}. The extended cyclic code $\mathcal{C}_{\mathrm{ext}}$ of $\mathcal{C}$ is said to be \emph{invariant under the group of affine permutations} if every affine permutation carries every codeword in $\mathcal{C}_{\mathrm{ext}}$ into another unique codeword in $\mathcal{C}_{\mathrm{ext}}$.

If we set $a = \alpha^\ell$ with $0 \leq \ell < 2^m-1$ and $b = 0$, then the affine permutation $Z = \alpha^\ell Y$ cyclically shifts the $2^m-1$ components $v_0, v_1, \ldots, v_{2^m-2}$ of $\mathbf{v}_{\mathrm{ext}} = (v_\infty, v_0, v_1, \ldots, v_{2^m-2})$ at the locations $\alpha^0, \alpha^1, \ldots, \alpha^{2^m-2}$ $\ell$ places to the right. The component $v_\infty$ of $\mathbf{v}_{\mathrm{ext}}$ remains at the same location $\alpha^\infty$. Since the cyclic code $\mathcal{C}$ is invariant under the affine permutation $Z = \alpha^\ell Y$ and its extension $\mathcal{C}_{\mathrm{ext}}$ is invariant under the affine permutation $Z = \alpha^\ell Y + b$, the cyclic code $\mathcal{C}$ is called a \emph{doubly transitive invariant} (DTI) cyclic code. Several categories of cyclic codes have been proved to be DTI cyclic codes, including primitive BCH codes and cyclic RM codes~\cite{lin1967, kasami1968b}.

\subsection{A Class of DTI Cyclic Codes}

Let $J$ and $L$ be two proper factors of $2^m-1$ such that $J \cdot L = 2^m-1$. Since $2^m-1$ is odd, both $J$ and $L$ are odd. The polynomial $X^{2^m-1} + 1$ over $\mathrm{GF}(2)$ can be factored as a product of two polynomials as follows:
\begin{equation}
X^{2^m-1} + 1 = (1 + X^J)\left(1 + X^J + X^{2J} + \cdots + X^{(L-1)J}\right).
\end{equation}
Let
\begin{equation} \label{eqn:DTI_generator_polynomial}
\bm{\pi}(X) = 1 + X^J + X^{2J} + \cdots + X^{(L-1)J}.
\end{equation}
Since $\alpha^L$ is a root of $X^J + 1$, the polynomial $X^J + 1$ has $\alpha^0 = 1, \alpha^L, \alpha^{2L}, \ldots, \alpha^{(J-1)L}$ as all its roots. Therefore, the polynomial $\bm{\pi}(X)$ has $\alpha^f$ as a root if and only if $f$ is not a multiple of $L$, for $0 < f < 2^m-1$. We can readily see that $\bm{\pi}(X)$ has the $L-1$ consecutive powers $\alpha, \alpha^2, \ldots, \alpha^{L-1}$ of $\alpha$ as roots.

Express the integer $f$ in radix-2 form as follows:
\begin{equation}
f = f_0 + f_1 2 + f_2 2^2 + \cdots + f_{m-1} 2^{m-1},
\end{equation}
where $f_i = 0$ or $1$ for $0 \leq i < m$. Let $f'$ be another nonnegative integer less than $2^m$ whose radix-2 expansion is
\begin{equation}
f' = f'_0 + f'_1 2 + f'_2 2^2 + \cdots + f'_{m-1} 2^{m-1},
\end{equation}
where $f'_i = 0$ or $1$ for $0 \leq i < m$. The integer $f'$ is called a \emph{radix-2 descendant} of $f$ if $f'_i \leq f_i$ for $0 \leq i < m$. Let $\Delta(f)$ denote the set of all nonzero proper radix-2 descendants of $f$.

In~\cite{kasami1968b}, the authors proved that a cyclic code $\mathcal{C}$ of length $2^m-1$ is a DTI cyclic code if and only if for every $\alpha^f$ that is a root of its generator polynomial, $\alpha^{f'}$ is also a root of its generator polynomial for every radix-2 descendant $f' \in \Delta(f)$. Using this necessary and sufficient condition, a DTI cyclic code based on the polynomial $\bm{\pi}(X)$ given by (\ref{eqn:DTI_generator_polynomial}) can be constructed~\cite{lin1980}.

Let $\bh(X)$ be the factor of $\bm{\pi}(X)$ in which if $\alpha^f$ is a root, $\alpha^{f'}$ with $f' \in \Delta(f)$ is also a root. Since $\bm{\pi}(X)$ is a factor of $X^{2^m-1} + 1$, $\bh(X)$ is a factor of $X^{2^m-1} + 1$. Let $k$ be the degree of $\bh(X)$. The code $\mathcal{C}_h$ generated by $\bh(X)$ is a $(2^m-1, 2^m-k-1)$ cyclic code. Since the roots of the generator polynomial $\bh(X)$ of $\mathcal{C}_h$ satisfy the necessary and sufficient condition for the generator polynomial of a DTI cyclic code, $\mathcal{C}_h$ is a DTI cyclic code, and the extended code $\mathcal{C}_{h,\mathrm{ext}}$ of $\mathcal{C}_h$ is invariant under the group of affine permutations.

Since $\bm{\pi}(X)$ is a multiple of $\bh(X)$ and has degree less than $2^m-1$, it is a code polynomial in $\mathcal{C}_h$. If $f < L$, its descendant $f'$ is also less than $L$. Hence, both $\alpha^f$ and $\alpha^{f'}$ are roots of $\bm{\pi}(X)$. Therefore, $\bh(X)$ has $\alpha, \alpha^2, \ldots, \alpha^{L-1}$ as roots. Then, it follows from the BCH-bound that the minimum distance of $\mathcal{C}_h$ is at least $L$. Since the weight of $\bm{\pi}(X)$ is $L$, the minimum distance of $\mathcal{C}_h$ is exactly $L$. Since $L$ is odd and $\mathcal{C}_h$ contains the all-one vector $(1, 1, \ldots, 1)$ as a codeword, $\mathcal{C}_h$ contains half even-weight codewords and half odd-weight codewords. Since $\bm{\pi}(X)$ is used for code construction, $\bm{\pi}(X)$ is referred to as the \emph{code construction polynomial}.

\subsection{Dual Codes of DTI Cyclic Codes}

Let
\begin{equation}
\bz(X) = \frac{X^{2^m-1} + 1}{\bh(X)}
\end{equation}
and
\begin{equation}
\bg_{\mathrm{type-0}}(X) = X^{2^m-k-1} \bz(X^{-1}).
\end{equation}
The polynomial $\bg_{\mathrm{type-0}}(X)$ given above is the reciprocal of $\bz(X)$ which is also a factor of $X^{2^m-1} + 1$. It has degree $2^m-k-1$ and has $(X+1)$ as a factor. The code generated by $\bg_{\mathrm{type-0}}(X)$ is a $(2^m-1, k)$ cyclic code $\mathcal{C}_{\mathrm{type-0}}$ of length $2^m-1$, which is the dual code of the $(2^m-1, 2^m-k-1)$ DTI cyclic code $\mathcal{C}_h$ generated by $\bh(X)$. The generator polynomial $\bg_{\mathrm{type-0}}(X)$ has $J$ consecutive powers of $\alpha$, $\alpha^0 = 1, \alpha, \alpha^2, \ldots, \alpha^{J-1}$, as roots. It follows from the BCH-bound that the minimum distance of $\mathcal{C}_{\mathrm{type-0}}$ is at least $J+1$. All the codewords in $\mathcal{C}_{\mathrm{type-0}}$ have even weights. We call $\mathcal{C}_{\mathrm{type-0}}$ the \emph{type-0 dual of the DTI code} $\mathcal{C}_h$. For simplicity, we call $\mathcal{C}_{\mathrm{type-0}}$ the \emph{type-0 D-DTI code}.

For each code symbol $u_j$, $0 \leq j < 2^m-1$, of a codeword $(u_0, u_1, \ldots, u_{2^m-2})$ in $\mathcal{C}_{\mathrm{type-0}}$, $J$ parity-check sums orthogonal on $u_j$, called \emph{orthogonal check-sums} on $u_j$, can be formed based on the $J$ code polynomials $\bm{\pi}(X), X\bm{\pi}(X), \ldots, X^{J-1}\bm{\pi}(X)$ in $\mathcal{C}_h$ and an affine permutation, say $Z = \alpha Y + \alpha^{2^m-2}$~\cite{lin2004,lin1980}.

To form the orthogonal parity-check sums of the code symbols of a codeword $\mathbf{u} = (u_0, u_1, \ldots, u_{2^m-2})$ in $\mathcal{C}_{\mathrm{type-0}}$ at the locations $\alpha^0, \alpha^1, \ldots, \alpha^{2^m-2}$, we first form $J$ codewords $\mathbf{v}_0, \mathbf{v}_1, \ldots, \mathbf{v}_{J-1}$ in the extension code $\mathcal{C}_{h,\mathrm{ext}}$ of $\mathcal{C}_h$ by adding an overall parity-check symbol to each of the codewords formed by the $J$ code polynomials $\bm{\pi}(X), X\bm{\pi}(X), \ldots, X^{J-1}\bm{\pi}(X)$ in $\mathcal{C}_h$. The overall parity symbol at the location $\alpha^\infty$ of each of the extended codewords $\mathbf{v}_0, \mathbf{v}_1, \ldots, \mathbf{v}_{J-1}$ is $1$, and the weight of each of the extended codewords is $L+1$.

Applying the affine permutation $Z = \alpha Y + \alpha^{2^m-2}$ to each of the extended codewords, we obtain $J$ permuted codewords $\mathbf{w}_0, \mathbf{w}_1, \ldots, \mathbf{w}_{J-1}$ in $\mathcal{C}_{h,\mathrm{ext}}$. The code symbol at the location $\alpha^{2^m-2}$ of each of the permuted codewords is $1$. At a symbol location other than $\alpha^{2^m-2}$, one and only one of the permuted codewords has a $1$-component. Deleting the code symbols of the permuted codewords at the location $\alpha^\infty$, we obtain $J$ codewords $\mathbf{z}_0, \mathbf{z}_1, \ldots, \mathbf{z}_{J-1}$ in the DTI cyclic code $\mathcal{C}_h$.

Using the $J$ codewords $\mathbf{z}_0, \mathbf{z}_1, \ldots, \mathbf{z}_{J-1}$, we form a $J \times (2^m-1)$ matrix $\mathbf{H}_{2^m-2}$ in which the $(2^m-2)$-th column contains $J$ ones and each of the other $2^m-2$ columns contains a single one. Taking the inner products of a codeword $\mathbf{u} = (u_0, u_1, \ldots, u_{2^m-2})$ in $\mathcal{C}_{\mathrm{type-0}}$ with the rows of $\mathbf{H}_{2^m-2}$, we obtain $J$ parity check-sums orthogonal on the code symbol $u_{2^m-2}$ at the location $\alpha^{2^m-2}$. One of the $J$ orthogonal check-sums contains $L$ code symbols and each of the other $J-1$ orthogonal check-sums contains $L+1$ code symbols. The $\mathbf{H}_{2^m-2}$ is called the \emph{orthogonal matrix} on the code symbol at the location $\alpha^{2^m-2}$.

For $1 \leq \ell \leq 2^m-2$, cyclically shifting the columns of $\mathbf{H}_{2^m-2}$ together $\ell$ places to the left gives a $J \times (2^m-1)$ matrix $\mathbf{H}_{2^m-2-\ell}$ orthogonal on the code symbol $u_{2^m-2-\ell}$ of a codeword $\mathbf{u} = (u_0, u_1, \ldots, u_{2^m-2})$ at the location $\alpha^{2^m-2-\ell}$. The $(2^m-2-\ell)$-th column of $\mathbf{H}_{2^m-2-\ell}$ contains $J$ ones and each of the other $2^m-2$ columns contains a single one. The inner products of a codeword $\mathbf{u}$ in $\mathcal{C}_{\mathrm{type-0}}$ with the rows of $\mathbf{H}_{2^m-2-\ell}$ give $J$ check-sums orthogonal on the code symbol $u_{2^m-2-\ell}$ at the location $\alpha^{2^m-2-\ell}$.

Using the matrices $\mathbf{H}_{2^m-2-\ell}$ with $0 \leq \ell \leq 2^m-2$, we can form $J$ orthogonal parity-check-sums on each of the code symbols of a codeword in the type-0 D-DTI code $\mathcal{C}_{\mathrm{type-0}}$. Hence, $\mathcal{C}_{\mathrm{type-0}}$ is an OSML decodable code and is capable of correcting $(J-1)/2$ or fewer random errors over the binary symmetric channel (BSC) with the OSML decoding~\cite{lin2004,lin1980}. The Tanner graph associated with the matrix $\mathbf{H}_{2^m-2-\ell}$ is acyclic and free of cycles.

The overall orthogonal parity-check matrix of $\mathcal{C}_{\mathrm{type-0}}$ is
\begin{equation}
\mathbf{H}_{\mathrm{type-0}} = [\mathbf{H}_{2^m-2} ~ \mathbf{H}_{2^m-3} ~ \cdots  ~ \mathbf{H}_1  ~ \mathbf{H}_0 ]^T,
\end{equation}
which is a $J(2^m-1) \times (2^m-1)$ matrix with column weight $J + 2^m - 2$ and two different row weights $L$ and $L+1$. It has self-orthogonal structure~\cite{lin2004,lin1980}. The density of $\mathbf{H}_{\mathrm{type-0}}$ is $1/J + 1/(2^m-1)$. For large $2^m-1$ and relatively large factor $J$ of $2^m-1$, $\mathbf{H}_{\mathrm{type-0}}$ is a low-density matrix. However, it does not satisfy the RC-constraint, even though it is self-orthogonal. Its associated Tanner graph $\mathcal{G}_{\mathrm{type-0}}$ has cycles of length $4$ but for $0 \leq \ell \leq 2^m-2$, its subgraph associated with the matrix $\mathbf{H}_{2^m-2-\ell}$ orthogonal on code symbol $u_{2^m-2-\ell}$ at the location $2^m-2-\ell$ is free of cycles. Hence, the Tanner graph $\mathcal{G}_{\mathrm{type-0}}$ is cycle-free locally. So, the Tanner graph associated with $\mathbf{H}_{\mathrm{type-0}}$ is composed of $2^m-1$ interconnected subgraphs which are cycle-free.

If the root $\alpha^0 = 1$ is removed from the generator polynomial $\bg_{\mathrm{type-0}}(X)$ of the type-0 D-DTI code $\mathcal{C}_{\mathrm{type-0}}$, we obtain a polynomial
\begin{equation}
\bg_{\mathrm{type-1}}(X) = \frac{\bg_{\mathrm{type-0}}(X)}{X+1}.
\end{equation}
The polynomial $\bg_{\mathrm{type-1}}(X)$ generates a $(2^m-1, k+1)$ cyclic code $\mathcal{C}_{\mathrm{type-1}}$ with minimum distance at least $J$. The code $\mathcal{C}_{\mathrm{type-1}}$ has both even and odd weight codewords and is called the type-1 D-DTI code of $\mathcal{C}_h$. The dual code of $\mathcal{C}_{\mathrm{type-1}}$ is the even-weight subcode of the DTI cyclic code $\mathcal{C}_h$.

The type-1 D-DTI code $\mathcal{C}_{\mathrm{type-1}}$ is also OSML decodable. For each code symbol $u_j$, $0 \leq j < 2^m-1$, of a codeword $\mathbf{u} = (u_0, u_1, \ldots, u_{2^m-2})$ in $\mathcal{C}_{\mathrm{type-1}}$, $J-1$ parity-check sums orthogonal on $u_j$ can be formed based on the code polynomials $\bm{\pi}(X), X\bm{\pi}(X), \ldots, X^{J-1}\bm{\pi}(X)$ in $\mathcal{C}_h$ and the affine permutation $Z = \alpha Y + \alpha^{2^m-2}$.

For $0 \leq \ell \leq 2^m-2$, the matrix $\mathbf{H}^*_{2^m-2-\ell}$ orthogonal on the code symbol $u_{2^m-2-\ell}$ of a codeword $\mathbf{u} = (u_0, u_1, \ldots, u_{2^m-2})$ at the location $\alpha^{2^m-2-\ell}$ is a $(J-1) \times (2^m-1)$ matrix which is obtained by removing the row of weight $L$ from $\mathbf{H}_{2^m-2-l}$. Taking the inner products of a codeword in $\mathcal{C}_{\mathrm{type-1}}$ and $J-1$ rows of $\mathbf{H}^*_{2^m-2-\ell}$, we obtain $J-1$ parity-check sums orthogonal on the code symbol at location $\alpha^{2^m-2-\ell}$. Hence, $\mathcal{C}_{\mathrm{type-1}}$ is capable of correcting $(J-1)/2$ or fewer random errors over the binary symmetric channel.

The overall orthogonal parity-check matrix of $\mathcal{C}_{\mathrm{type-1}}$ is
\begin{equation}
\mathbf{H}_{\mathrm{type-1}} = [ \mathbf{H}^*_{2^m-2} ~ \mathbf{H}^*_{2^m-3} ~ \cdots ~ \mathbf{H}^*_1 ~ \mathbf{H}^*_0 ]^T,
\end{equation}
which is a $(J-1)(2^m-1) \times (2^m-1)$ matrix with column and row weights $J + 2^m - 4$ and $L+1$, respectively. It has a self-orthogonal structure. The Tanner graph $\mathcal{G}_{\mathrm{type-1}}$ associated with $\mathbf{H}_{\mathrm{type-1}}$ is cycle-free locally. 

A special case for the type-1 D-DTI cyclic code $\mathcal{C}_{\mathrm{type-1}}$ is $m = 2s$ with $s \geq 2$. In this case, $2^{2s}-1$ can be factored as the product of $2^s+1$ and $2^s-1$. Set $J = 2^s+1$. Then, the $\mathcal{C}_{\mathrm{type-1}}$ is a 2-dimensional EG-LDPC code whose Tanner graph $\mathcal{G}_{\mathrm{type-1}}$ is free of cycles of length $4$ but has cycles of length $6$. The Tanner graph $\mathcal{G}_{\mathrm{type-1}}$ is composed of $2^{2s}-1$ subgraphs, each is cycle-free.

\subsection{Iterative Decoding of D-DTI Cyclic Codes}

Besides OSML decoding, both type-0 and type-1 D-DTI codes of the DTI cyclic code $\mathcal{C}_h$ can be decoded with an iterative decoding algorithm based on their self-orthogonal and local acyclic structures in the Tanner graphs associated with their overall orthogonal parity-check matrices~\cite{juang2009}. Hence, we may regard these codes as LDPC codes with local cycle-free Tanner graphs.

In each decoding iteration, the reliability of the received code symbol at the location $\alpha^{2^m-2-\ell}$, $0 \leq \ell \leq 2^m-2$, is updated based on its reliability and the reliabilities of the other $2^m-2$ received code symbols computed in the preceding iteration using the parity-check sums orthogonal on it. In each decoding iteration, the local decoders share symbols’ reliability information. Once a decoding iteration is completed, the next decoding iteration is conducted. The decoding process continues until the preset number of decoding iterations is reached. The decoding can be first carried out with OSML decoding. If decoding is successful, stop the decoding. If OSML decoding fails, iterative decoding phase is initiated.

\subsection{Convolutional Codes Constructed Based on D-DTI Cyclic Codes}

Using code chains with D-DTI codes (type-0 or type-1) as the mother codes, we can construct a class of D-DTI-convolutional codes of various rates, constraint lengths and local minimum distances. Each local code of a D-DTI-convolutional code is OSML and iterative decodable.

For even $\ell$, using a D-DTI code, we can construct a rate-$1/\ell$ DTI-LDPC-convolutional code with local minimum distance $J$ (or $J+1$) and free distance lower bounded by $\min\{J, 2(L+1)\}$ (or $\min\{J+1, 2(L+1)\}$). The convolutional code is a self-orthogonal convolutional code.

\subsection*{Example 15}
\begin{example} \label{eg15}
Let $m = 12$. The integer $2^{12} - 1 = 4095$ can be factored as the product of $45$ and $91$. Set $J = 91$ and $L = 45$. The polynomial $X^{4095} + 1$ can be factored into the following product of two polynomials:
\[
X^{4095} + 1 = (1 + X^{91})\bm{\pi}(X)
\]
where
\[
\bm{\pi}(X) = 1 + X^{91} + X^{2 \times 91} + \cdots + X^{44 \times 91}.
\]
Let $\alpha$ be a primitive polynomial in the Galois field $\mathrm{GF}(2^{12})$. Then, the type-1 D-DTI code $\mathcal{C}_{\mathrm{type-1}}$ constructed based on $\bm{\pi}(X)$ is a $(4095, 1649)$ code with minimum distance $91$. The generator polynomial $\bg_{\mathrm{type-1}}(X)$ of $\mathcal{C}_{\mathrm{type-1}}$ has $90$ consecutive powers of $\alpha$, $\alpha, \alpha^2, \ldots, \alpha^{90}$ as roots. The dual code $\mathcal{C}_h$ of $\mathcal{C}_{\mathrm{type-1}}$ is a $(4095, 2446)$ code with minimum distance $46$.

Using the $(4095, 1649)$ type-1 D-DTI code $\mathcal{C}_{\mathrm{type-1}}$, we can construct a rate-$1/2$ $(2, 1, 1223)$ D-DTI-convolutional code of memory order $1223$ with local minimum distance $91$ and free distance lower bounded by $\min\{91, 2 \times 46\} = 91$. The convolutional code is self-orthogonal and can correct $45$ random errors over the BCS with OSML decoding. If we set the shifting factor $t = 4$, we can construct a rate-$1/4$ $(4, 1, 612)$ D-DTI-convolutional code of memory order $612$ with local minimum distance $91$ and free distance lower bounded by $91$.

We can use the $(4095, 1649)$ code $\mathcal{C}_{\mathrm{type-1}}$ as the mother code $\mathcal{C}_0$ to form code chains of various lengths. Suppose we want to construct a 4-fold full rank code chain $\mathcal{C}_0 \supset \mathcal{C}_1 \supset \mathcal{C}_2$ of length $r = 3$. To construct the two descendant codes $\mathcal{C}_1$ and $\mathcal{C}_2$ of $\mathcal{C}_0$, we choose $\mathbf{f}_1(X) =\mathbf{m}_{91}(X)$ and $\mathbf{f}_2(X) = \mathbf{m}_{91}(X) \mathbf{m}_{273}(X)$ as the generator multipliers for the two descendant codes $\mathcal{C}_1$ and $\mathcal{C}_2$ where $\mathbf{m}_{91}(X)$ and $\mathbf{m}_{273}(X)$ are the minimal polynomials of the element $\alpha^{91}$ and $\alpha^{273}$ in $\mathrm{GF}(2^{12})$ which is not a root of the generator polynomial $\bg_0(X)$ of $\mathcal{C}_0$. The degrees of $\mathbf{m}_{91}(X)$ and $\mathbf{m}_{273}(X)$ are $10$ and $4$, respectively. 

With $\mathbf{f}_{1}(X)$ and $\mathbf{f}_{2}(X)$, we form the generator polynomials $\bg_{1}(X) = \bg_{0}(X) \mathbf{f}_{1}(X)$ and $\bg_{2}(X) = \bg_{0}(X) \mathbf{f}_{2}(X)$ of $\mathcal{C}_{1}$ and $\mathcal{C}_{2}$, respectively. The codes $\mathcal{C}_{1}$ and $\mathcal{C}_{2}$ generated by $\bg_{1}(X)$ and $\bg_{2}(X)$ are $(4095, 1637)$ and $(4095, 1633)$ codes, respectively, both with minimum distances lower bounded by $93$. The code chain $\mathcal{C}_0 \supset \mathcal{C}_1 \supset \mathcal{C}_2$ is a $4$-fold full-rank code chain. Using the code chain and $4$-fold decompositions of $\bg_{0}(X)$, $\bg_{1}(X)$, and $\bg_{2}(X)$, we can construct a rate-$3/4$ $(4, 3, 1843)$ D-DTI convolutional code $\mathcal{C}_{\text{convol}}(\bg_{0}, \bg_{1},\bg_{2})$ of constraint length $1843$ with local minimum distance $91$. It is the direct sum of its three rate-$1/4$ convolutional codes $\mathcal{C}_{\text{convol}}(\bg_{0})$, $ \mathcal{C}_{\text{convol}}(\bg_{1})$, $\mathcal{C}_{\text{convol}}(\bg_{2})$ which are $(4, 1, 612)$, $(4, 1, 615)$, and $(4, 1, 616)$ convolutional codes, all with local minimum and free distances lower bounded by $91$.

The $(4095, 1649)$ code $\mathcal{C}_{\text{type-1}}$ and its $(4095, 1648)$ even-weight subcode $\mathcal{C}_{\text{type-0}}$ form a $3$-fold full-rank code chain $\mathcal{C}_{\text{type-1}} \supset C_{\text{type-0}} $ of length $2$. Using this code chain, we can construct a rate-$2/3$ $(3, 2, 1632)$ D-DTI convolutional code of constraint length $1632$ with local minimum distance $91$.
\end{example}

\subsection*{Example 16}
\begin{example} \label{eg16}
Continue Example \ref{eg15}. Suppose we set $J = 45$. We can construct a $(4095, 2074)$ type-1 D-DTI code $\mathcal{C}_{\mathrm{type-1}}$ with a minimum distance $45$. The generator polynomial $\bg_{\mathrm{type-1}}(X)$ of $\mathcal{C}_{\mathrm{type-1}}$ has $\alpha, \alpha^2, \ldots, \alpha^{44}$ of $\mathrm{GF}(2^{12})$ and their conjugates as roots. The elements $1$ and $\alpha^{45}$ are not roots of $\bg(X)$. The minimal polynomial $\mathbf{m}_{45}(X)$ of $\alpha^{45}$ is a polynomial of degree $12$. The code $\mathcal{C}_{\mathrm{type-1}}$ is self-orthogonal and OSML and iterative decodable.

Suppose we set the shifting factor $t = 4$ and want to construct a 4-fold full rank code chain $\mathcal{C}_0 \supset \mathcal{C}_1 \supset \mathcal{C}_2$ of length $3$ with $\mathcal{C}_{\mathrm{type-1}}$ as the mother code $\mathcal{C}_0$ for constructing a rate-$3/4$ D-DTI-convolutional code. To construct such a code chain, we choose $\mathbf{f}_1(X) = X + 1$ and $\mathbf{f}_2(X) = \mathbf{m}_{45}(X)(X + 1)$ as the generator multipliers for the two descendant codes $\mathcal{C}_1$ and $\mathcal{C}_2$ of the mother code $\mathcal{C}_0 = \mathcal{C}_{\mathrm{type-1}}$ of the code chain. The two descendant codes $\mathcal{C}_1$ and $\mathcal{C}_2$ of $\mathcal{C}_0$ generated by $\bg_1(X) = \mathbf{f}_1(X)\bg_0(X)$ and $\bg_2(X) = \mathbf{f}_2(X)\bg_0(X)$ with $\bg_0(X) = \bg_{\mathrm{type-1}}(X)$ are $(4095, 2073)$ and $(4095, 2061)$ with minimum distances $46$ and $48$, respectively. The descendant code $\mathcal{C}_1$ of $\mathcal{C}_0$ is a type-0 D-DTI cyclic code.

Using the 4-fold full rank code chain $\mathcal{C}_0 \supset \mathcal{C}_1 \supset \mathcal{C}_2$, we can construct a rate-$3/4$ $(4, 3, 1520)$ D-DTI-convolutional code $\mathcal{C}_{\mathrm{convol}}(\bg_0, \bg_1, \bg_2)$ with constraint length $1520$ and local minimum distance $45$. It is the direct-sum of its $3$ rate-$1/4$ convolutional codes, $\mathcal{C}_{\mathrm{convol}}(\bg_0)$, $\mathcal{C}_{\mathrm{convol}}(\bg_1)$, and $\mathcal{C}_{\mathrm{convol}}(\bg_2)$, which are $(4, 1, 506)$, $(4, 1, 506)$, and $(4, 1, 508)$ convolutional codes, all with local minimum and free distances lower bounded by $45$.

Using the subchain $\mathcal{C}_0 \supset \mathcal{C}_1$ of the code chain $\mathcal{C}_0 \supset \mathcal{C}_1 \supset \mathcal{C}_2$, we can construct a rate-$2/3$ $(3, 2, 1348)$ D-DTI-convolutional code of constraint length $1348$ and local minimum distance $45$. If we use the subchain $\mathcal{C}_1 \supset \mathcal{C}_2$ of the code chain $\mathcal{C}_0 \supset \mathcal{C}_1 \supset \mathcal{C}_2$, we can construct a rate-$2/3$ $(3, 2, 1352)$ D-DTI-convolutional code of local minimum distance $46$ and constraint length $1352$.

The integer $2^{12} - 1$ can be factored as the product $65 \times 63$. Setting $J = 65$ and $L = 63$, we can construct a $(4095, 3367)$ type-1 D-DTI cyclic code $\mathcal{C}_{\mathrm{type-1}}$ with minimum distance $65$, which is the $(4095, 3367)$ EG-LDPC code given in Example \ref{eg12}.
\end{example}

\section{Conclusion and Remarks}\label{sect9:conclusion}

This paper presents an algebraic method to construct convolutional codes with guaranteed local minimum distances without limit based on cyclic codes of odd lengths. The constructions are simple but effective, and no computer search is needed. For any two positive integers $r$ and $t$ with $1 \leq r < t$, it is shown that a rate-$r/t$ convolutional code $\mathcal{C}_{\text{convol}}$ can be constructed by using a chain of $r$ cyclic codes $\mathcal{C}_0, \mathcal{C}_1, \ldots, \mathcal{C}_{r-1}$ of the same length $n$, which satisfy the inclusion condition, $\mathcal{C}_0 \supset \mathcal{C}_1 \supset \cdots \supset \mathcal{C}_{r-1}$. Such a convolutional code $\mathcal{C}_{\text{convol}}$ is composed of a semi-infinite chain of identical local codes with distinctive algebraic and geometric structures. Each local code $\mathcal{C}_{\text{local}}$ of the convolutional code $\mathcal{C}_{\text{convol}}$ is formed from the $r$ cyclic codes in the code chain and is a specially localized subcode of the mother code $\mathcal{C}_0$ in the code chain. The minimum distance $d_{\text{local}}$ of each local code of $\mathcal{C}_{\text{convol}}$ is lower bounded by the minimum distance $d_0$ of the mother code $\mathcal{C}_0$ in the code chain. The rate-$r/t$ convolutional code $\mathcal{C}_{\text{convol}}$ constructed based on the code chain $\mathcal{C}_0 \supset \mathcal{C}_1 \supset \cdots \supset \mathcal{C}_{r-1}$ is the direct-sum of $r$ rate-$1/t$ convolutional codes, each constructed based on a code in the code chain $\mathcal{C}_0 \supset \mathcal{C}_1 \supset \cdots \supset \mathcal{C}_{r-1}$. For $r > 1$, a rate-$r/t$ convolutional code $\mathcal{C}_{\text{convol}}$ has multi-layer structure. For the case with $t$ as an even integer, the free distance $d_{\text{free}}$ of the rate-$1/t$ convolutional code constructed based on the $i$-th descendant code $\mathcal{C}_i$, $0 \leq i < r$, in the code-chain $\mathcal{C}_0 \supset \mathcal{C}_1 \supset \cdots \supset \mathcal{C}_{r-1}$ is lower bound by $\min\{d_i, 2d_{i,h}\}$ where $d_i$ and $d_{h,i}$ are the minimum distances of $\mathcal{C}_i$ and its dual code $\mathcal{C}_{i,h}$, respectively.

In the paper, convolutional codes with guaranteed local minimum distances have been constructed based on 4 major classes of cyclic codes: primitive BCH codes, cyclic RM codes, two-dimensional Euclidean and projective LDPC codes, and doubly transitive invariant codes. All these convolutional codes have either distinct algebraic structures or geometric structures. A convolutional code constructed based on a geometric LDPC code or a doubly transitive invariant LDPC is a self-orthogonal convolutional code, and its local codes can be decoded iteratively. For even shifting factor $t$, the rate-$1/t$ self-orthogonal convolutional code constructed based on a geometric LDPC code is much better than a known self-orthogonal convolutional code listed in~\cite[Table 13.2]{lin2004}.

Besides the construction of convolutional codes based on the primitive BCH codes, cyclic RM codes, two-dimensional Euclidean and projective LDPC codes, and doubly transitive invariant codes, convolutional codes with guaranteed local minimum distances can also be constructed based on other types of cyclic codes, such as nonprimitive BCH codes~\cite{lin2004}, generalized RM codes~\cite{kasami1968a,weldon1968,delsarte1970}, $m$-dimensional Euclidean geometry cyclic codes $m > 2$~\cite{lin2022,kou2001,lin2004}, polynomial codes~\cite{kasami1968c}, and quadratic residue codes~\cite{gleason1970,blake1975}. Nonbinary convolutional codes with guaranteed local minimum distances can be constructed based on nonbinary cyclic codes, especially Reed-Solomon codes, in a similar manner.

While decoding algorithms for the convolutional codes constructed are not considered here, it is anticipated that hybrid algorithms using convolutional and block decoding algorithms will be effective for decoding convolutional codes with large local minimum distance codes. For a convolutional code constructed based on a code chain with a cyclic LDPC mother code, it can be decoded with a sliding-window soft-decision iterative decoding scheme.

The paper basically presents a wider bridge to connect convolutional codes to cyclic block codes with distinctive algebraic and geometric structures.


\end{document}